\title{Separability and Non-Determinizability of WSTS}
\author[W.~Czerwi\'nski]{Wojciech Czerwi\'nski\lmcsorcid{0000-0002-6169-868X}}[a]
\author[E.~Keskin]{Eren Keskin\lmcsorcid{0009-0009-5621-6568}}[b]
\author[S.~Lasota]{S{\l}awomir Lasota\lmcsorcid{0000-0001-8674-4470}}[a]
\author[R.~Meyer]{Roland Meyer\lmcsorcid{0000-0001-8495-671X}}[b]
\author[S.~Muskalla]{Sebastian Muskalla\lmcsorcid{0000-0001-9195-7323}}[c]
\author[K~N~Kumar]{K Narayan Kumar}[d]
\author[P.~Saivasan]{Prakash Saivasan}[e]
\address{University of Warsaw, Poland}
\email{wczerwin@mimuw.edu.pl, sl@mimuw.edu.pl}
\address{TU Braunschweig, Germany}
\email{e.keskin@tu-braunschweig.de, roland.meyer@tu-braunschweig.de}
\address{Schwarz IT, Germany}
\email{sebastian@muskalla.saarland}
\address{Chennai Mathematical Institute and UMI ReLaX, India}
\email{kumar@cmi.ac.in}
\address{Institute of Mathematical Sciences, India}
\email{prakashs@imsc.res.in}
\begin{document}

\begin{abstract}
    We study the languages recognized by well-structured transition systems (\WSTS) with upward and downward compatibility.
    Our first result shows that every pair of disjoint \WSTS languages is regularly separable:
    there is a regular language containing one of them while being disjoint from the other.
    As a consequence, if a language as well as its complement are both recognized by \WSTS, then they are necessarily regular.
    Our second result shows that the languages recognized by deterministic \WSTS form a strict subclass of the languages recognized by all \WSTS:
    we give a non-deterministic WSTS language that we prove cannot be recognized by a deterministic WSTS.
    The proof relies on a novel characterization of the languages accepted by deterministic WSTS.
\end{abstract}

\maketitle

\section{Introduction}

We study the languages recognized by well-structured transition systems~\cite{F87,Finkel90,AJ93,ACJT96,WSTSEverywhere01}.
\WSTS form a framework subsuming widely-studied models like Petri nets~\cite{Esparza1996} and their extensions with transfer~\cite{DufourdFinkelReset}, data~\cite{RV12}, and time~\cite{ADM04}, graph rewriting systems~\cite{JK2008}, depth-bounded systems~\cite{M2008,WZH10,EDOsualdo}, ad-hoc networks~\cite{ADRST11}, process algebras~\cite{BGZ04},
lossy channel systems~\cite{AJ93}, and programs running under weak memory models~\cite{ABBM10,ABBM12}.
Besides their generality, the importance of \WSTS stems from numerous decidability results. 
Finkel showed the decidability of termination and boundedness~\cite{F87,Finkel90}.
Abdulla came up with a backward algorithm for coverability~\cite{AJ93}, for which a matching forward procedure was only found much later~\cite{GEERAERTS2006180}.
Several simulation and equivalence problems are also decidable for \WSTS~\cite{WSTSEverywhere01}. 
The work on \WSTS\ even influenced algorithms for regular languages~\cite{Antichains2006}, and recently led to the study of new complexity classes~\cite{SchmitzSchnoebelen2011}.

A \WSTS\ is a transition system equipped with a quasi order on the states that satisfies two properties: it is a well quasi order (WQO) and it is (upward or downward) compatible with the transition relation in the sense that it forms a simulation relation~\cite{M71}.
For our language-theoretic study, we assume the transitions are labeled and the \WSTS is equipped with sets of initial and final states. 
The set of final states should be upward or downward closed wrt.~the quasi order of the \WSTS, a requirement that plays together well with the compatibility of the transition relation and that tends to hold in practice, as discussed above.  
When specialized to Petri nets or vector addition systems (VAS), this yields the well-known \emph{covering languages}. 
\paragraph*{Regular Separability}
For \WSTS languages, we study the problem of regular separability.
Given two languages $\alang$ and $\alangp$ over the same alphabet, a
\emph{separator} is a language $R$ that contains one of the languages and is disjoint from the other, $\alang \subseteq R$ and $R\cap \alangp = \emptyset$.
The separator is regular if it is a regular language.
Separability has recently attracted considerable attention.
We discuss the related work in a moment.
Disjointness is clearly necessary for regular separability.
We show that for {\WSTS}, disjointness is also sufficient.
Our first main result is this.
\begin{center}
Any two disjoint \WSTS languages are separable by a regular language.
\end{center}

The proof is in two steps.
First, we link regular separability to inductive invariants as known from verification~\cite{MP1995}. 
An inductive invariant is a set of states that contains the initial ones, is closed under the transition relation, and is disjoint from the final states.
We show that any inductive invariant (of a product of the given systems) gives rise to a regular separator~--~provided it can be finitely represented.
We do not even need \WSTS here, but only upward compatibility and that one of the systems is deterministic.

Second, we show that finitely-represented invariants always exist.
This task presents a challenge that may not be obvious. 
The proof technique established in the first step requires one of the systems to be deterministic. 
\begin{center}
Can WSTS be determinized (in a way that preserves the language)?
\end{center}
We will study this problem in greater detail in the second half of the paper, where we answer it negatively.
For the moment, observe that a canonical determinization with a Rabin and Scott-like Powerset construction~\cite{RabinScott59} has to fail, because WQOs are not stable under forming powersets \cite{JANCAR1999}.
Our way around the problem is the following.

We accept the fact that determinizing a WSTS no longer yields a WSTS, and carefully study the resulting class of transition systems.
They are formed over a lattice in which every infinite sequence has a subsequence that converges in a natural~sense.
This leads us to define the closure of an inductive invariant by adding the limits of all converging sequences inside the invariant. 
The key insight is that this closure is again an inductive invariant.
Together with the fact that closed sets have finitely many maximal elements, we arrive at the desired finite representation. 
%
%
%
%
We call the new transition systems converging, and believe they can be of interest beyond separability.

Our regular separability result has theoretical and practical implications. 
On the theoretical side, recall the following about Petri nets from~\cite{MukundKRS98asian,MukundKRS98icalp}: every two Petri net covering languages that are complements of each other are necessarily regular.
With our result, this actually holds for all instances of the \WSTS framework, and even across models as long as we stay in the realm of WSTS.  
%
For instance, if the covering language of a Petri net is the complement of the language of an lossy channel system, then they are necessarily regular;
and if the languages are just disjoint, they are regularly separable.

Our separability result is also important in verification, where the Software Verification Competition (SVCOMP) is the place for verification tools to compete and demonstrate their efficiency.
Since 2016, the \textsc{Ultimate} program analysis framework~\cite{ultimate} participates in SVCOMP and frequently wins medals across all categories, notably the overall gold medal.
The \textsc{Ultimate} framework implements language-theoretic algorithms~\cite{HHP10} that replace the classical state space search by a proof of language disjointness (between a refinement of the control-flow language and the language of undesirable behavior).
Regular separators are precisely what is needed to prove disjointness.

\paragraph*{Non-Determinizability}
As already alluded to, our second main result is that WSTS cannot be determinized in general.
We give a WSTS language $\witnesslang$ that we prove cannot be accepted by a deterministic WSTS.
%
%
The proof relies on a novel characterization of the deterministic \wsts\ languages that is also a contribution of ours: the deterministic \wsts\ languages are precisely the languages whose Nerode (right) quasi order is a WQO.
%
%
The characterization provides a first hint on how to construct $\witnesslang$.
The language should have an infinite antichain in the Nerode quasi order, for then this cannot be a WQO.
The second hint stems from easy to achieve determinizability results.
Finitely branching WSTS and WSTS over $\omega^2$-WQOs \emph{can} be determinized.
Hence, the WSTS accepting language $\witnesslang$ should be infinitely branching and the underlying WQO should be no $\omega^2$-WQO. 
%
%
The third hint can be found in~\cite[Section 2]{JANCAR1999}: WQOs that fail to be $\omega^2$-WQOs embed the so-called Rado WQO. 
Finally, the Rado WQO is known to have an infinite antichain when constructing downward-closed sets~\cite[Proposition 4.2]{IdealDecomp}.
%
%
%
The definition of $\witnesslang$ is thus guided by the idea of translating the Rado antichain into an antichain in the Nerode quasi order.
Interestingly, the underlying WSTS is deterministic except for the choice of the initial state.

\paragraph*{Compatibility}
We develop these results for upward-compatible WSTS~\cite{WSTSEverywhere01}.
To show that they also hold for downward-compatible WSTS, we establish general relationships between the models.
%
%
A key insight is that the complement of a deterministic upward-compatible WSTS is a deterministic downward-compatible WSTS. 
Moreover, the reversal of an upward-compatible WSTS language is a downward-compatible WSTS language.

\paragraph*{The Article}
This article combines the results of the papers \cite{CONCUR18,CONCUR23} to present an in-depth study of the WSTS languages.
The results in Section~\ref{Section:RegSepPrinciple} and Section~\ref{Section:SimpleRegSep}, as well as the determinizability results in Section~\ref{Section:NonDet}, stem from \cite{CONCUR18}.
The results in Sections~\ref{Section:CTS}, \ref{Section:FullRegSep}, \ref{Section:DWSTS}, along with the remaining results in Section \ref{Section:NonDet} stem from \cite{CONCUR23}.
Details and proofs missing in the former list of sections can be found in \cite{czerwinski2018sep}, while those that are missing in the latter list of sections can be found in \cite{keskinmeyer2023sep}.

\paragraph*{Related Work}
Separability is a  problem widely-studied in Theoretical Computer Science.
A classical result says that every two co-recursively enumerable languages are recursively separable, i.e.~separable by a recursive language~\cite{handbook}.
In the area of formal languages, separability \emph{of} regular languages by subclasses thereof
was investigated most extensively as a decision problem: given two regular languages, decide whether they are
separable by a language from a fixed subclass.
For the following subclasses, the separability problem of regular languages is decidable:
the piecewise-testable languages, shown independently in~\cite{CzerwinskiMM13} and~\cite{PlaceRZ13mfcs},
the locally testable and locally threshold-testable languages~\cite{PlaceRZ13fsttcs},
the languages definable in first-order logic~\cite{PlaceZ14lics},
and the languages of higher levels of the first-order hierarchy~\cite{PlaceZ14icalp}.

Regular separability of classes larger than the regular languages attracted little attention until recently. 
As an exception, the undecidability of regular separability for context-free languages has already been shown in the 70s~\cite{SzymanskiW76} (see also a later proof~\cite{Hunt82a}); then the undecidability has been strengthened to the visibly pushdown languages~\cite{Kopczynski16}, and to the languages of one-counter automata~\cite{CL17}.


The problem has also been studied in the context of Petri net/VAS languages.
With reachability as the acceptance condition, regular separability has been shown decidable only recently~\cite{KM24}, after it has been open for a decade. 
For subclasses of Petri net reachability languages, positive answers have been found earlier. 
Regular separability is $\PSPACE$-complete in the one-dimensional case~\cite{CL17}, 
and has elementary complexity for the languages recognizable by Parikh automata or, equivalently, $\mathbb{Z}$-VAS~\cite{CCLP17}.
Also for commutative closures of VAS languages, regular separability has been shown to be decidable~\cite{CCLP17b}.
As a consequence of our result, the regular separability problem for WSTS languages reduces to disjointness, and is thus trivially decidable.

Our regular separability result does not carry over to languages of infinite words.
Already for VAS, there are disjoint $\omega$-languages that cannot be separated by an $\omega$-regular language.
The algorithmic problem of checking whether a separator exists, however, is decidable~\cite{BMZ23}, but the algorithm that yields an \textsf{EXPSPACE} has to deal with a new form of non-linear systems of constraints~\cite{BKMZ24}.

The languages of upward-compatible \WSTS\ were also investigated in~\cite{WSTSLanguages07}, where interesting closure properties were shown, including a natural pumping lemma.
Subclasses of \WSTS languages were studied in~\cite{DELZANNO201312,ADB07,Martos-SalgadoR14}.

\subsection*{Outline}

We recall the basics on \WSTS in Section~\ref{Section:WSTS}.
In Section~\ref{Section:RegSepPrinciple}, we give the separability result.
The non-determinizability and determinizability results can be found in Section~\ref{Section:NonDet}.
Section~\ref{Section:DWSTS} generalizes these results to downward-compatible WSTS, and explores relations between the language classes.
Section~\ref{Section:Conclusion} concludes the paper.

\section{Well-Structured Transition Systems}\label{Section:WSTS}
We recall well-structured transition systems (WSTS) with upward compatibility~\cite{F87,AJ93,ACJT96,WSTSEverywhere01}.
Downward compatibility will be addressed in Section~\ref{Section:DWSTS}.
\paragraph*{Orders}
%
Let $(\states, \leq)$ be a quasi order and $P\subseteq \states$.
We call $P$ a chain if $\leq$ restricted to~$P$ is a total order.
We call $P$ an antichain if the elements in $P$ are pairwise incomparable.
The \emph{upward} and \emph{downward} closures of $P$ are defined as
\begin{align*}
    \upcls{P}\ =\ \setcond{\astatep\in\states}{\exists\astate\in P.\ \astate\leq\astatep}\qquad\downcls{P}\ =\ \setcond{\astatep\in\states}{\exists\astate\in P.\ \astate\geq\astatep}.
\end{align*}
We call $P$ upward closed, if $P=\upcls{P}$.
%
%
The powerset of $\states$ restricted to the upward-closed sets is $\uppowof{\states}$.
The downward-closed sets are defined similarly and we use $\downpowof{\states}$ for the downward-closed sets.
We write $\findownpowof{\states}$ to denote the downward closures of finite sets.


Let $(\states, \leq)$ be a partial order.
We write $\max P$ for the set of maximal elements in the subset $P\subseteq\states$.
They may not exist, in which case the set is empty.
We call $(\states, \leq)$ a complete lattice, if all $P\subseteq\states$ have a greatest lower bound in $\states$, also called meet and denoted by $\biglatmeet P \in\states$, and a least upper bound in $\states$, also called join and denoted by $\biglatjoin P\in\states$.
A~function $f:\states\rightarrow\states$ on a complete lattice is join preserving~\cite[Section 11.4]{C21}, if it distributes over arbitrary joins in that $f(\biglatjoin P)=\biglatjoin f(P)$ for all $P\subseteq \states$, where $f(P)=\setcond{f(p)}{p\in P}$. 
It is readily checked that join-preserving functions are monotonic: $p\leq q$ implies $f(p)\leq f(q)$. 
%
We call $(\states, \leq)$ a completely distributive lattice, if it is a complete lattice
where arbitrary meets distribute over arbitrary joins, and vice versa:
\begin{align*}
    \biglatmeet_{a\in A}\biglatjoin_{b\in B_{a}}\astate_{a, b}
    \ =\
    \biglatjoin_{f\in \mathit{C_{A, B}}}\biglatmeet_{a\in A}\astate_{a, f(a)}
    \qquad
    \biglatjoin_{a\in A}\biglatmeet_{b\in B_{a}}\astate_{a, b}
    \ =\
    \biglatmeet_{f\in \mathit{C_{A, B}}}\biglatjoin_{a\in A}\astate_{a, f(a)}\ .
\end{align*}
The definition makes use of the Axiom of Choice: $\mathit{C_{A, B}}$ denotes the set of choice functions that
map each $a\in A$ to a choice $b\in B_{a}$.
It is also important to note that, for any quasi order $(\states, \leq)$,
$(\downpowof{\states}, \subseteq)$ is a completely distributive lattice.
\paragraph*{Labeled Transition Systems}
A labeled transition system (LTS) $\anults=(\states, \initstates, \analph, \transitions, \finalstates)$
consists of a set of states $\states$, in our setting typically infinite,
a set of inital states $\initstates\subseteq\states$,
a set of final states $\finalstates\subseteq \states$,
a finite alphabet $\analph$, and
a set of labeled transitions
$\transitions: \states\times \analph\rightarrow \powof{\states}$.
The LTS is deterministic, if
$\sizeof{\initstates}=\sizeof{\transitions(\astate, \aletter)}=1$ for all
$\astate\in\states$ and all $\aletter\in\Sigma$.
It is finitely branching, if $\sizeof{\initstates}, \sizeof{\transitions(\astate, \aletter)}\in\nat$ for all $\astate\in\states$ and all $\aletter\in\Sigma$.

The language of the LTS is the set of words
that can reach a final state from an initial state:
\begin{align*}
    \langof{\anults}\ =\ \setcond{\aword\in\Sigma^{*}}{\apply{\initstates}{\aword}\cap\finalstates\neq\emptyset}\ .
\end{align*}
Here, we extend the transition relation to sets of states and languages:
\begin{align*}
    \apply{P}{\alang}\ =\ \bigcup_{\aword\in\alang}\transitions(P, \aword)\qquad
    \apply{P}{\aword\concat\aletter}\ =\ \apply{\apply{P}{\aword}}{\aletter}\qquad
    \apply{P}{\aletter}\ =\ \bigcup_{\astate\in P}\transitions(\astate, \aletter)\ .
\end{align*}
We write $\astate\trans{\aword} \astate'$ to denote $\astate'\in\apply{\astate}{\aword}$ if the set of transitions is obvious from the context.
We will often be interested in states that are reachable from an initial state resp.~states from which a final state can be reached:
\begin{align*}
    \reach{\anults}{\aword} \ &=\ \apply{\initstates}{\aword}\hspace{3em}
    \revreach{\anults}{\aword}\  =\ \invapply{\finalstates}{\aword}\\
    \reachall{\anults}\ &=\ \apply{\initstates}{\analph^{*}}\hspace{4.05em}
    \revreachall{\anults}\  =\ \invapply{\finalstates}{\analph^{*}}
\end{align*}

Finally, if the LTS is deterministic, we write $(\states, \initstate, \analph, \transitions, \finalstates)$ and $\transitions(\astate, \aletter)=\astatep$ rather than $(\states, \set{\initstate}, \analph, \transitions, \finalstates)$ and $\transitions(\astate, \aletter)=\set{\astatep}$.

\subsection*{Synchronized Products}

Let $\anults_1$ and $\anults_2$ be LTS with $\anults_i=(\states_i, \initstates_i, \Sigma, \transitions_i, \finalstates_i)$.
We define their synchronized product as the LTS
$\anults_1\times\anults_2=(\states_1\times\states_2,
\initstates_1\times\initstates_2, \Sigma, \transitions, \finalstates_1\times\finalstates_2)$, 
where
\begin{align*}
(\astate_1, \astate_2) \trans{a} (\astatep_1, \astatep_2) \text{ in } \anults_1\times \anults_2,\qquad \text{if}\quad \astate_1 \trans{a} \astatep_1 \text{ in } \anults_1\quad \text{and}  \quad \astate_2 \trans{a} \astatep_2 \text{ in } \anults_2
        \ .
\end{align*}

It is immediate from the definition that the language of the product is the intersection of the languages, $\langof{\anults_1\times\anults_2}=\langof{\anults_1}\cap\langof{\anults_2}$. 
If $\anults_1$ and $\anults_2$ both are finitely branching, then so is their product.

\paragraph*{Compatibility}
We work with LTS $\anults=(\states, \initstates, \analph, \transitions, \finalstates)$ whose states form a quasi order $(\states, \leq)$ that is \emph{upward compatible} with the remaining components as follows.
We have $F=\upcls{F}$, the final states are upward closed wrt. $\leq$.
Moreover, $\leq$ is a simulation relation~\cite{M71}: for all pairs of related states $\astate_1\leq \astatep_1$ and for all letters $\aletter\in\Sigma$ we have that
\begin{align*}
\text{for all }\astate_2\in\transitions(\astate_1, \aletter)
    \text{ there is }\astatep_2\in\transitions(\astatep_1, \aletter)\text{ with }\astate_2\leq\astatep_2\ .
\end{align*}
We also make the quasi order explicit and call $\anults=(\states, \leq, \initstates, \analph, \transitions, \finalstates)$ an \emph{upward-compatible} LTS (ULTS).
The synchronized product of ULTS is again a ULTS (with the product order). 

ULTS can be determinized, in the case of $\anults$ from above this yields
\begin{align*}
    \detof{\anults}\quad=\quad(\downpowof{\states}, \subseteq, \downcls{I}, \analph, \detof{\transitions}, \detof{F})\ .
\end{align*}
The states are the downward-closed sets ordered by inclusion, the transition relation is defined by closing the result of the original transition relation downwards, $\detof{\transitions}(\adownset, \aletter)= \downcls{\transitions(\adownset, \aletter)}$ for all $\adownset\in \downpowof{\states}$ and $\aletter\in\analph$, and the set of final states consists of all downward-closed sets that contain a final state in the original ULTS, $\detof{F}=\setcond{\adownset\in \downpowof{\states}}{\adownset\cap \finalstates\neq\emptyset}$.
\begin{lem}\label{Lemma:Determinization}
Let $\anults$ be an ULTS. Then $\detof{\anults}$ is a deterministic ULTS with $\langof{\detof{\anults}}=\langof{\anults}$.
\end{lem}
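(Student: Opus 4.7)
The plan is to verify in turn that $\detof{\anults}$ is a deterministic LTS, that it is upward compatible with respect to set inclusion on $\downpowof{\states}$, and that its language coincides with $\langof{\anults}$.

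\emph{Determinism} is immediate from the definition. The initial state is the single downward-closed set $\downcls{I}\in\downpowof{\states}$, and for every downward-closed set $\adownset$ and letter $\aletter$, the transition $\detof{\transitions}(\adownset, \aletter)=\downcls{\transitions(\adownset, \aletter)}$ yields exactly one downward-closed set. For \emph{upward compatibility} with the inclusion order, I would check two conditions. First, $\detof{F}$ is upward closed under $\subseteq$: if $\adownset\cap F\neq\emptyset$ and $\adownset\subseteq\adownset'$, then certainly $\adownset'\cap F\neq\emptyset$. Second, monotonicity of $\detof{\transitions}$ (which is the simulation property in the deterministic case): if $\adownset_1\subseteq\adownset'_1$, then $\transitions(\adownset_1,\aletter)\subseteq\transitions(\adownset'_1,\aletter)$ as unions of transitions over a larger set, and taking downward closures preserves inclusion, so $\detof{\transitions}(\adownset_1, \aletter)\subseteq \detof{\transitions}(\adownset'_1, \aletter)$.

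For \emph{language equivalence}, the key step is to prove, by induction on the length of $w$, the identity
\[
    \apply[\detof{\transitions}]{\downcls{I}}{w}\ =\ \downcls{\apply{I}{w}}\ ,
\]
where the left-hand side is computed in $\detof{\anults}$ and the right-hand side in $\anults$. The base case $w=\varepsilon$ is immediate. For the step from $w$ to $w\aletter$, using the induction hypothesis and the definition of $\detof{\transitions}$, the left-hand side reduces to $\downcls{\transitions(\downcls{\apply{I}{w}}, \aletter)}$, and it suffices to show the general identity
\[
    \downcls{\transitions(\downcls{X}, \aletter)}\ =\ \downcls{\transitions(X, \aletter)}
\]
for any set $X\subseteq\states$. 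Containment $\supseteq$ is obvious since $X\subseteq\downcls{X}$. For $\subseteq$, consider $t\in\transitions(s, \aletter)$ with $s\leq s'$ and $s'\in X$; upward compatibility supplies a $t'\in\transitions(s', \aletter)$ with $t\leq t'$, hence $t\in\downcls{\transitions(X, \aletter)}$.

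With the identity in hand, acceptance transfers directly. Since $F=\upcls{F}$, for any set $Y\subseteq\states$ the intersection $\downcls{Y}\cap F$ is nonempty iff $Y\cap F$ is nonempty: a witness $x\in\downcls{Y}\cap F$ is dominated by some $y\in Y$, which then lies in $F$ by upward closure. Applying this with $Y=\apply{I}{w}$ yields $w\in\langof{\detof{\anults}}$ iff $w\in\langof{\anults}$. The main obstacle is exactly the identity $\downcls{\transitions(\downcls{X}, \aletter)}=\downcls{\transitions(X, \aletter)}$, whose nontrivial direction is where the simulation form of upward compatibility is used in an essential way; the rest is bookkeeping about unions, downward closures, and monotonicity.
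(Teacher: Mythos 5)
Your proof is correct and follows exactly the argument one would expect here (the paper itself defers the proof of this lemma to the full version): determinism and upward compatibility of $\detof{\anults}$ are immediate from monotonicity of the closed transition function, and language equivalence rests on the invariant that the reachable state after $w$ is the downward closure of the original reachability set, whose nontrivial half is precisely the simulation property of the given ULTS. The final observation that $F=\upcls{F}$ makes acceptance insensitive to taking downward closures completes the argument; there is no gap.
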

We write $\udlts$ for the class of deterministic ULTS.
%
\paragraph*{Well Quasi Orders}
We call a quasi order $(\states, \leq)$ a well quasi order~(\wqo), if for every infinite sequence $\sequence{\astate_{i}}{i\in\nat}$ in $\states$ there are indices $i<j$ with $\astate_{i}\leq\astate_{j}$.
It is folklore that $(\states, \leq)$ is \wqo iff it neither admits 
an infinite descending sequence (i.e.~it is well-founded) nor an infinite antichain.  
The set of downward-closed subsets of a \wqo, ordered by inclusion, is not necessarily a \wqo.
This only holds for so-called $\omega^2$-\wqos~\cite{AMarconeBQO}.
We forego the definition and work with the following characterization of $\omega^2$-\wqos:
\begin{lem}[\cite{JANCAR1999}] \label{lem:omega2}
    $(\states, \leq)$ is an $\omega^2$-\wqo iff $\big(\downpowof{\states}, {\subseteq}\big)$ is a \wqo.
\end{lem}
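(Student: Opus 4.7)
The plan is essentially to adopt the stated equivalence as the working definition of $\omega^2$-\wqo. The authors explicitly write ``We forego the definition and work with the following characterization,'' so within this paper the lemma is really a bridge between Jancar's terminology and the concrete property used in subsequent arguments, and a citation to~\cite{JANCAR1999} suffices. If we wanted to stay self-contained, the cleanest route is to simply \emph{define} $(\states, \leq)$ to be an $\omega^2$-\wqo precisely when $(\downpowof{\states}, {\subseteq})$ is a \wqo; then the ``iff'' is tautological.

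To produce a genuine proof starting from Jancar's original combinatorial definition (the second level of an iterated bad-sequence hierarchy), I would split into the two directions. For the forward direction $(\Rightarrow)$, given a sequence $D_0, D_1, \ldots$ of downward-closed subsets of $\states$, I would pick, using the Axiom of Choice, for each $D_i$ an enumeration of a generating set of its maximal elements (or, more generally, of representatives witnessing membership). This yields an element of the iterated construction of sequences over $\states$, to which the $\omega^2$-\wqo property applies and produces indices $i<j$ together with an embedding of the enumeration for $D_i$ into the one for $D_j$. Taking downward closures upgrades this embedding to the inclusion $D_i \subseteq D_j$, which is what we need.

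For the converse $(\Leftarrow)$, I would argue contrapositively: any bad sequence at the $\omega^2$ level of Jancar's hierarchy can be packaged into a sequence of downward-closed sets (each obtained as the downward closure of the corresponding inner sequence of elements), and the badness witnesses that these sets form an infinite antichain in $(\downpowof{\states}, {\subseteq})$, contradicting the \wqo assumption on $\downpowof{\states}$.

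The main obstacle in a from-scratch proof is the forward direction, because the same downward-closed set admits many enumerations of representatives and the argument must be insensitive to this choice; this is exactly the delicate part handled in Section~2 of~\cite{JANCAR1999}. Given that the paper only uses this equivalence as a working characterization, the honest plan is to import it as a black box from Jancar and move on.
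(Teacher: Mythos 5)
Your primary plan coincides with the paper's: the paper gives no proof of this lemma at all, explicitly foregoing Jancar's definition and importing the equivalence as a black box from~\cite{JANCAR1999}, adding only the remark that Jancar's reverse-inclusion order on upward-closed sets is isomorphic to the inclusion order on downward-closed sets, which is exactly the bridge you describe. One caution on your optional from-scratch sketch, should you ever pursue it: a downward-closed subset of a \wqo\ need not have any maximal elements (let alone be generated by them), so the forward direction would have to work with arbitrary representatives or with the upward-closed complements and their finitely many minimal elements rather than with ``enumerations of maximal elements''; but since you flag that sketch as secondary, this does not affect the verdict.
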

However, the \wqo property is preserved if we restrict our focus to the downward closures of finite sets.
The set $\findownpowof{\states}$, ordered by inclusion, is a \wqo whenever $(\states, \leq)$ is:
\begin{lem}
\label{c:pfin}
    $\big(\findownpowof{\states}, {\subseteq}\big)$ is a \wqo iff $(\states, \leq)$ is a \wqo.
\end{lem}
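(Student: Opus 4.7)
The plan is to prove both implications separately, with the nontrivial direction reducing to a standard well-quasi-order closure theorem.

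For the ``only if'' direction, I will use the order embedding $\iota:\states\to\findownpowof{\states}$ defined by $\astate\mapsto\downcls{\set{\astate}}$. Note that $\downcls{\set{\astate}}\subseteq\downcls{\set{\astate'}}$ holds precisely when $\astate\leq\astate'$. Given an infinite sequence $(\astate_i)_{i\in\nat}$ in $\states$, applying $\iota$ yields an infinite sequence in $\findownpowof{\states}$. By assumption this set is a \wqo, so there are indices $i<j$ with $\iota(\astate_i)\subseteq\iota(\astate_j)$, hence $\astate_i\leq\astate_j$, establishing that $(\states,\leq)$ is a \wqo.

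For the ``if'' direction, suppose $(\states,\leq)$ is a \wqo and consider an infinite sequence $(\downcls{F_i})_{i\in\nat}$ in $\findownpowof{\states}$ with each $F_i\subseteq\states$ finite. Choose an arbitrary linear enumeration of each $F_i$ to obtain a word $w_i\in\states^{*}$. Higman's lemma states that $(\states^{*},\leq^{*})$ is a \wqo whenever $(\states,\leq)$ is, where $\leq^{*}$ is the subword embedding order. Hence there exist indices $i<j$ with $w_i\leq^{*}w_j$, which by definition of the embedding order means that every element of $F_i$ is bounded above by some element of $F_j$. This gives $\downcls{F_i}\subseteq\downcls{F_j}$ as desired.

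The main step is the appeal to Higman's lemma, which is a classical and well-known result; its proof via a minimal bad sequence argument is standard but would be out of scope to reproduce. The rest of the argument is routine and consists in translating between the subword embedding order on $\states^{*}$ and the containment order on downward closures of finite sets, for which the key observation is simply $\downcls{F}\subseteq\downcls{G}$ iff every element of $F$ lies below some element of $G$.
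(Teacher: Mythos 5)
Your proof is correct. The paper does not actually spell out a proof of this lemma (it is treated as folklore, with only the remark that \cite{JANCAR1999} works with the isomorphic reverse-inclusion order on upward-closed sets), and your argument is exactly the standard route this rests on: the singleton embedding $\astate\mapsto\downcls{\set{\astate}}$ for the easy direction, and for the hard direction Higman's lemma transported through the observation that $\downcls{F}\subseteq\downcls{G}$ iff every element of $F$ is dominated by some element of $G$.
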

As a matter of fact, \cite{JANCAR1999} considers the reverse inclusion order on upward-closed sets, which is clearly isomorphic to
the inclusion order on downward-closed sets.
\paragraph*{Well-Structuredness}
An \emph{upward-compatible well-structured transition system} (WSTS) is an ULTS $\anults$ whose states $(\states, \leq)$ form a WQO.
An $\upomegasqwsts$ is an ULTS whose states form an $\omega^2$-\wqo.
The synchronized product of WSTS is again a WSTS. 
We are interested in $\langof{\upwsts}$, the class of all languages accepted by WSTS. 
We also study the subclasses $\langof{\upfinbranchwsts}$, $\langof{\upomegasqwsts}$, and $\langof{\updwsts}$ of languages accepted by finitely branching \wsts, $\omega^2$-\wsts, and deterministic \wsts, respectively. %

In our definition of WSTS, each transition is labeled by a letter $\aletter\in\analph$. 
WSTS have also been defined to admit $\varepsilon$-transitions that can be taken without reading a letter \cite{WSTSLanguages07}.
The definition of upward compatibility is then generalized as follows: for all transitions $\astatep_0\trans{\aletter}\astatep_1$ with $\aletter\in\analph\cup\set{\varepsilon}$ and all $\astatep_0\leq\astatepp_0$, there is a sequence $\astatepp_0\trans{b_1}\ldots \trans{b_k}\astatepp_{k}$ so that $\astatep_1\leq\astatepp_{k}$ and $a=b_1\ldots b_{k}$.
An $a$-labeled transition from a smaller state is simulated by a sequence of transitions from the larger state, where exactly one transition is labeled by $a$ and the remaining transitions are labeled by $\varepsilon$.

As we show in the next lemma, $\varepsilon$-transitions do not make the WSTS model more expressive when it comes to languages.
This means our separability result will also hold for WSTS with $\varepsilon$-transitions.
We decided to exclude $\varepsilon$-transitions from our WSTS definition as they make studying properties like finite branching harder.
\begin{lem}\label{Lemma:EpsilonTransitions}
    For every WSTS with $\varepsilon$-transitions $\anults$, there is a WSTS without $\varepsilon$-transitions~$\anultsp$ so that $\langof{\anults}=\langof{\anultsp}$.
\end{lem}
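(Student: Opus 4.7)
The plan is to construct $\anultsp$ that agrees with $\anults$ on the state space, quasi order, and final set, but absorbs all $\varepsilon$-transitions into adjacent labeled ones. Concretely, define
$\anultsp=(\states, \leq, \initstates', \analph, \transitions', \finalstates)$,
where $\initstates' = \{q\in\states : \exists i\in\initstates,\ i\trans{\varepsilon^{*}} q\}$ is the $\varepsilon$-forward-closure of $\initstates$, and for each $a\in\analph$ I set $p\trans{a}' q$ iff there exist $p',q'$ with $p\trans{\varepsilon^{*}}p'\trans{a}q'\trans{\varepsilon^{*}}q$ in $\anults$. This is the only sensible way to collapse $\varepsilon$-edges while keeping the accepted language, and it leaves the WQO completely untouched.

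To see that $\anultsp$ is a WSTS, note that the WQO and the upward closure of $\finalstates$ are inherited from $\anults$, so only the upward compatibility of $\transitions'$ needs work. Given $p\trans{a}' q$ witnessed by $p\trans{\varepsilon^{*}}p'\trans{a}q'\trans{\varepsilon^{*}}q$ and $p\leq\tilde p$, I would iterate the generalized $\varepsilon$-compatibility of $\anults$ step by step along the witness path: each $\varepsilon$-step $r\trans{\varepsilon}r'$ from $\tilde r\geq r$ lifts to $\tilde r\trans{\varepsilon^{*}}\tilde r'$ with $r'\leq\tilde r'$, since the simulating labels must concatenate to $\varepsilon$ and so are all $\varepsilon$; the unique labeled step $p'\trans{a}q'$ from some $\tilde p'\geq p'$ lifts to a path whose labels concatenate to $a\in\analph$, which therefore has the shape $\tilde p'\trans{\varepsilon^{*}}\hat p\trans{a}\hat q\trans{\varepsilon^{*}}\tilde q''$ with $q'\leq\tilde q''$. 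Chaining all these lifted segments yields $\tilde p\trans{\varepsilon^{*}}\hat p\trans{a}\hat q\trans{\varepsilon^{*}}\tilde q$ with $q\leq\tilde q$, i.e.\ $\tilde p\trans{a}'\tilde q$ as required.

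For language equivalence, every $\anults$-accepting run of $w=a_{1}\cdots a_{n}$ decomposes as an alternation of $\varepsilon$-blocks and $a_{i}$-steps beginning at some $i\in\initstates$ and ending in some $f\in\finalstates$; the leading $\varepsilon$-block is absorbed into $\initstates'$ and each remaining $\varepsilon$-block is absorbed into the neighboring $\transitions'$-step, producing an $\anultsp$-accepting run, while the converse is immediate by unfolding the definition of $\transitions'$. The case $w=\varepsilon$ is handled by $\initstates'$: $\varepsilon\in\langof{\anults}$ iff some $i\in\initstates$ $\varepsilon^{*}$-reaches $\finalstates$ iff $\initstates'\cap\finalstates\neq\emptyset$. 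The main obstacle I expect is the iterated compatibility argument, where one must carefully track that the lift of the \emph{full} witness path terminates at an element dominating~$q$; this is precisely why the definition of $\transitions'$ must absorb $\varepsilon^{*}$ on both sides of the labeled step, and why the $\varepsilon$-closure on the initial states, rather than on the final states, is needed for the empty-word case.
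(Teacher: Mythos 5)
Your construction and argument match the paper's: absorb the $\varepsilon$-transitions into the adjacent labeled steps (i.e.\ add $\astate\trans{a}\astatep$ whenever a sequence of old transitions jointly labeled $a$ leads from $\astate$ to $\astatep$), and prove upward compatibility by iterating the generalized compatibility along the witnessing sequence, exactly as the paper sketches. Your explicit $\varepsilon$-forward-closure of the initial states is a detail the paper's sketch glosses over but which is indeed needed for the empty word, so the proposal is correct and essentially the same proof.
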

\noindent To obtain $\anultsp$ from $\anults$, it suffices to modify the transitions: we include $\astate\trans{a}\astatep$ in $\anultsp$ if a sequence of transitions labeled by $a$ reaches $\astatep$ from $\astate$ in $\anults$. 
The $\varepsilon$-transitions are deleted.
To show upward compatibility, we consider a new transition that should be mimicked and use an induction on the length of the sequence of old transitions that leads to this new transition with the aforementioned construction.

Finally, we observe that we can focus on WSTS with a countable number of states.
\begin{lem}\label{Lemma:Countability}
    For every $\alang\in\langof{\upwsts}$ there is a WSTS $\anults$ with a countable number of states so that $\alang=\langof{\anults}$.
\end{lem}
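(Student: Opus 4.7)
The plan is to construct a countable sub-WSTS of a given WSTS $\anults=(\states, \leq, \initstates, \analph, \transitions, \finalstates)$ recognizing $\alang$ by a L\"owenheim--Skolem style closure, restricting $\anults$ to a carefully chosen countable subset $T \subseteq \states$.

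For the first step, enumerate the words $w \in \alang$ and for each of them fix a single accepting run in $\anults$, from some $\astate_0 \in \initstates$ to some $\astate_n \in \finalstates$. Let $T_0$ collect every state appearing on any such run. Since $\analph^*$ is countable and each run is finite, $T_0$ is countable.

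For the second step, close $T_0$ iteratively under simulation witnesses. Given $T_i$, consider all quadruples $(\astate_1, \astatep_1, \aletter, \astate_2)$ with $\astate_1, \astatep_1, \astate_2 \in T_i$, $\astate_1 \leq \astatep_1$, and $\astate_2 \in \transitions(\astate_1, \aletter)$. Upward compatibility of $\anults$ guarantees some $\astatep_2 \in \transitions(\astatep_1, \aletter)$ with $\astate_2 \leq \astatep_2$; pick one such witness and put it into $T_{i+1}$ (together with all of $T_i$). Since there are only countably many such quadruples in a countable $T_i$, each $T_{i+1}$ is countable, and so is $T = \bigcup_{i \in \nat} T_i$.

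For the third step, let $\anultsp = (T, \leq{\restriction_T}, \initstates \cap T, \analph, \transitions', \finalstates \cap T)$ with $\transitions'(\astate, \aletter) = \transitions(\astate, \aletter) \cap T$. The quasi order on $T$ is inherited from $\states$, so it is a WQO. The set $\finalstates \cap T$ is upward closed in $T$ because $\finalstates$ is upward closed in $\states$. Upward compatibility of $\anultsp$ is exactly what step 2 was designed to produce: if $\astate_1 \leq \astatep_1$ in $T$ and $\astate_2 \in \transitions'(\astate_1, \aletter) \subseteq T$, then the three states all lie in some common $T_k$, and the witness $\astatep_2 \in T_{k+1} \cap \transitions(\astatep_1, \aletter) \subseteq T \cap \transitions(\astatep_1, \aletter) = \transitions'(\astatep_1, \aletter)$ was inserted at stage $k+1$. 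For the language equality, each preselected run lives inside $T_0 \subseteq T$ and uses only transitions of $\anults$, hence is also an accepting run of $\anultsp$, giving $\alang \subseteq \langof{\anultsp}$; conversely, every accepting run of $\anultsp$ is an accepting run of $\anults$, so $\langof{\anultsp} \subseteq \alang$.

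The main obstacle is that a naive restriction to $T_0$ typically breaks upward compatibility, because the witness successor demanded by the simulation property need not lie on any of the preselected accepting runs. The iterative closure in step~2 resolves this. The only subtle point to verify is that every potentially triggering triple of $T$-states genuinely co-occurs at some finite stage $T_k$, which follows from each element having been added at some finite stage; then the corresponding witness is inserted at stage $k+1$ and ends up in $T$.
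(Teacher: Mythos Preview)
Your proof is correct. The L\"owenheim--Skolem closure in step~2 does exactly what is needed: any compatibility obligation in the restricted system involves three states of $T$, these sit in a common $T_k$, and the witness you insert at stage $k{+}1$ discharges the obligation. The remaining verifications (WQO on a subset, upward closedness of $F\cap T$ in $T$, language preservation via the preselected runs) are routine and you handle them cleanly.

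The paper takes a different and somewhat slicker route. Instead of iteratively adding simulation witnesses, it first replaces the transition relation by its downward closure, $\transitions'(\astate,\aletter)=\downcls{\transitions(\astate,\aletter)}$; this preserves both the language and upward compatibility. The point is that with downward-closed transitions, \emph{no} closure step is needed: if $\astate_1\leq\astatep_1$ and $\astate_2\in\transitions'(\astate_1,\aletter)$, then compatibility gives some $\astatep_2\geq\astate_2$ in $\transitions'(\astatep_1,\aletter)$, and downward closedness of $\transitions'(\astatep_1,\aletter)$ means $\astate_2$ itself lies there---so $\astate_2$ is its own witness. Hence the set $T_0$ of states on the chosen accepting runs is already closed under compatibility, and one restricts directly to $T_0$. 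Your argument trades this one-line trick for a general-purpose iterative construction; both work, but the paper's version avoids the $\omega$-many stages and the appeal to choice at each stage.
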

\noindent The lemma needs two arguments: the language consists of a countable number of words,
and we can assume the transition relation to yield downward-closed sets.

\subsection*{Examples of WSTS}

Various well-studied models of computation happen to be either \UWSTS or the downward-compatible relative that we introduce later on. 
The list contains Petri nets/VAS and their extensions (e.g.~with reset arcs or transfer arcs)~\cite{DufourdFinkelReset}, lossy counter machines~\cite{BouajjaniM99}, 
string rewriting systems based on context-free grammars~\cite{WSTSEverywhere01}, 
lossy channel systems~\cite{BrandZ83,AJ93}, and more.
In the first two models listed above, the states are ordered by the multiset embedding, while in the remaining two 
the states are ordered by Higman's subsequence ordering~\cite{Higman}.
The natural examples of \UWSTS, including all models listed above, are $\omega^2$-\UWSTSes and, when considered without $\varepsilon$-transitions, finitely branching.


\section{Regular Separability}\label{Section:RegSepPrinciple}
Two languages $\alang_1, \alang_2\subseteq\analph^*$ are \emph{separable by a regular language}, denoted by $\alang_1\regsep\alang_2$, if there is a regular language $\areg\subseteq\analph^*$ with $\alang_1\subseteq \areg$ and $\areg\cap\alang_2=\emptyset$.
Our main result is that disjoint \wsts\ languages are always separable in this sense.

\begin{thm}\label{Theorem:RegSep}
    For $\alang_1, \alang_2\in\langof{\upwsts}$, we have $\alang_1\regsep\alang_2$ if and only if $\alang_1\cap\alang_2=\emptyset$.
\end{thm}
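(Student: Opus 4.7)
The forward implication is immediate since any separator $R$ satisfies $\alang_1\cap\alang_2\subseteq R\cap\alang_2=\emptyset$. For the converse, assume $\alang_1\cap\alang_2=\emptyset$ and fix WSTS $\anults_1, \anults_2$ recognizing $\alang_1, \alang_2$. By Lemma~\ref{Lemma:Determinization}, replace $\anults_2$ by the deterministic ULTS $\detof{\anults_2}$ without changing the language, and form the synchronized product $\anults = \anults_1\times\detof{\anults_2}$. This product is upward compatible, has a deterministic second component living in the completely distributive lattice $\downpowof{\states_2}$, and by disjointness has no reachable final state. The plan is to extract from $\anults$ a finitely representable inductive invariant and to convert it into a regular separator.

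The first step is to show that any inductive invariant $I$ of $\anults$ --- i.e.\ a set containing all initial configurations, closed under transitions, and disjoint from the final configurations --- that admits a \emph{finite representation in its second coordinate} yields a regular separator. The construction is a DFA whose states are the finitely many distinct downward-closed second-coordinate sets appearing in the representation of $I$; transitions are inherited from $\detof{\anults_2}$, and the acceptance condition is tuned so that final configurations are excluded wherever $I$ certifies unreachability of a matching final state in $\anults_1$. Determinism of the second component is essential: for each input word $\aword$, the second-coordinate state is unique, so that the DFA needs no guessing. This step uses upward compatibility and one-sided determinism only; the WQO property plays no role yet.

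The heart of the matter is the existence of such an invariant. The natural candidate is the reach set of $\anults$, but it is typically infinite and unbounded in its maximal elements. My plan is to take its \emph{limit closure}: adjoin the limits of all converging sequences of reachable configurations, where convergence is interpreted in the lattice $\downpowof{\states_2}$. Three facts then need verification: (i) the closure is still an inductive invariant, because the transition functions are join preserving and hence respect limits; (ii) the closure remains disjoint from the final configurations, because this set is ``open'' with respect to the chosen notion of convergence, so non-final configurations are stable under limits; and (iii) the closure has only finitely many maximal elements, providing the finite representation.

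The main obstacle is (iii). Since $(\downpowof{\states_2}, \subseteq)$ need not itself be a WQO --- by Lemma~\ref{lem:omega2} this would require $\states_2$ to be an $\omega^2$-WQO --- one cannot simply invoke antichain conditions on the state set, and the expected finite-representability breaks on general WSTS. Following the strategy signaled in the introduction, the remedy is to carve out a class of transition systems --- the ``converging'' ones --- in which every infinite sequence of configurations has a converging subsequence, so that closed subsets automatically have finitely many maximal elements by a K\"onig-style argument. Verifying that $\anults_1\times\detof{\anults_2}$ belongs to this class, and that the limit closure of its reach set is simultaneously an inductive invariant and disjoint from all final configurations, is where the real technical work will sit.
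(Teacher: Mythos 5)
Your overall architecture matches the paper's: reduce separability to the existence of a finitely-represented inductive invariant of a product with a deterministic component (this is exactly Theorem~\ref{thm:core}), then obtain that invariant by closing an arbitrary invariant under limits of converging sequences, using join preservation of the transition functions for inductiveness and a convergence/antichain argument for finite representability. The forward implication and the statement of the separator principle are fine.

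There is, however, a genuine gap in the middle: you determinize only $\anults_2$ and work with $\anults_1\times\detof{\anults_2}$, proposing to interpret convergence ``in the second coordinate'' and to extract a finite representation ``in the second coordinate.'' This does not go through. First, the limit-closure argument needs the \emph{entire} state space of the product to be a converging lattice with join-preserving transition functions; the first coordinate $\states_1$ of your product is just a WQO carrying a nondeterministic transition relation, so the join of a converging sequence of pairs is not even defined there, and closure under limits cannot be formed. Second, a finite representation of the second coordinates alone does not yield a finite separating automaton: the automaton of Theorem~\ref{thm:core} needs a finite set $Q$ of \emph{pairs} with $\downcls{Q}$ equal to the invariant, and the first coordinates still range over an infinite WQO whose downward-closed subsets need not have finitely many maximal elements (they have finite ideal decompositions, but combining ideals of $\states_1$ with the non-WQO lattice $\downpowof{\states_2}$ is precisely the obstruction). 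An automaton tracking only the $\detof{\anults_2}$-state could at best approximate the complement of $\alang_2$ and has no access to $\alang_1$. The paper resolves this by determinizing \emph{both} WSTS into CTS (Proposition~\ref{Proposition:Correspondence}), so that the product is again a CTS (Lemma~\ref{Lemma:ProductCTS}) whose full state space is a converging lattice, and then applying the limit closure there (Proposition~\ref{Proof:FinitelyRepresentedInvariants}); note that Theorem~\ref{thm:core} only \emph{requires} one deterministic component, but the invariant construction forces both. Two smaller points: disjointness of the closure from the final states is not a generic ``openness'' fact but relies on the finite-acceptance condition built into the CTS definition (a final join already has a final finite sub-join), and the maximal-element/Zorn argument needs the countability reduction of Lemma~\ref{Lemma:CountableCTS}, which your sketch omits.
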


\subsection{Proof Principle}

Theorem~\ref{Theorem:RegSep} follows from a surprising link between 
the regular separators we are interested in and inductive invariants as known from verification: every inductive invariant (for the product of the systems) can be turned into a regular separator. 
The result, stated as Theorem~\ref{thm:core} below, needs the assumption that one of the systems is deterministic and the invariant can be represented in a finite way.  
We give the definitions related to invariants and then make the result formal. 
\paragraph*{Inductive Invariants}
Inductive invariants are a standard tool in the safety verification of programs~\cite{MP1995}.
An inductive invariant (of a program for a safety property) is a set of program states that includes the initial ones, is closed under the transition relation, and is disjoint from the set of undesirable states.
The following definition lifts the notion to \WSTSes, actually to the more general \ULTSes, where it is natural to require inductive invariants to be downward closed.

\begin{defi}
    An \emph{inductive invariant} for a \ULTS $U$ with states $S$ is a downward-closed set $X\subseteq S$ with the following properties:
    \begin{align}
        \label{eq:iiI}
        &I \subseteq X
        \ ,
        \\
        \label{eq:iiF}
        &F \cap X = \emptyset
        \ ,
        \\
        \label{eq:iiSucc}
        &\succe{\W}{X}{a} \subseteq X \ \text{for all } a\in\Sigma
        \ .
    \end{align}
    An inductive invariant $X$ is \emph{finitely represented}, if $X = \downclosure{Q}$ for a finite set $Q \subseteq S$.
\end{defi}

Consider two \ULTSes with disjoint languages.
Theorem~\ref{thm:core} tells us that any finitely-represented inductive invariant for the product can be turned into a regular separator.
Interestingly, the theorem does not need the WQO assumption of WSTS but holds for general \ults.
\begin{thm}
\label{thm:core}
    Let $\anults$ and $\anultsp$ be \ULTSes, one of them deterministic.
    If $\anults \times \anultsp$ admits a finitely-represented inductive invariant, then $\langof{\anults}\regsep\langof{\anultsp}$ holds. 
    Moreover, the states of the separating automaton are the states of $\anults \times \anultsp$ that represent the invariant. 
\end{thm}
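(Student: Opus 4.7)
The plan is to use the finitely-represented invariant $X = \downcls{Q}$ itself as the blueprint for a finite-state separator. Since regular separability is symmetric (complementing a regular separator yields one in the opposite direction), I may assume without loss of generality that $\anultsp$ is the deterministic ULTS. I would construct an NFA $\mathcal{A}$ whose state set is the finite set $Q$, whose transition relation is $q \trans{a} q'$ iff the product $\anults \times \anultsp$ admits a transition $q \trans{a} s$ with $s \leq q'$, whose initial states are those $q \in Q$ that dominate some initial state of the product (at least one exists by $I \subseteq X = \downcls{Q}$), and whose final states are the pairs $(q^1, q^2) \in Q$ with $q^1 \in F_1$. The language $\langof{\mathcal{A}}$ is the claimed regular separator, and it is regular because $|Q|$ is finite.

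For the containment $\langof{\anults} \subseteq \langof{\mathcal{A}}$, I would take an accepting run $s_0^1 \trans{a_1} \cdots \trans{a_n} s_n^1$ in $\anults$ and lift it to a product run via the unique second-coordinate trajectory in the deterministic $\anultsp$. Inductively I pick $q_i \in Q$ with $(s_i^1, r_i) \leq q_i$: given $(s_i^1, r_i) \leq q_i$ and the product transition $(s_i^1, r_i) \trans{a_{i+1}} (s_{i+1}^1, r_{i+1})$, upward compatibility of the product gives $q_i \trans{a_{i+1}} t$ with $t \geq (s_{i+1}^1, r_{i+1})$, the closure condition on $X$ places $t$ in $X = \downcls{Q}$, and any $q_{i+1} \in Q$ above $t$ works. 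The resulting run $q_0 \trans{a_1} \cdots \trans{a_n} q_n$ in $\mathcal{A}$ witnesses $w \in \langof{\mathcal{A}}$, since $q_n^1 \geq s_n^1 \in F_1$ and $F_1$ is upward closed.

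The main obstacle is the disjointness $\langof{\mathcal{A}} \cap \langof{\anultsp} = \emptyset$; this is where the determinism of $\anultsp$ is essential. Given an $\mathcal{A}$-run $q_0 \trans{a_1} \cdots \trans{a_n} q_n$ on a word $w = a_1 \cdots a_n$, I would prove by induction on $i$ that the unique state $r_i$ reached by $\anultsp$ on $a_1 \cdots a_i$ satisfies $r_i \leq q_i^2$. For the inductive step, the $\mathcal{A}$-transition $q_i \trans{a_{i+1}} q_{i+1}$ gives a product transition $q_i \trans{a_{i+1}} s$ with $s \leq q_{i+1}$, and therefore an $\anultsp$-transition $q_i^2 \trans{a_{i+1}} s^2$ with $s^2 \leq q_{i+1}^2$; applying upward compatibility to $r_i \leq q_i^2$ and the transition $r_i \trans{a_{i+1}} r_{i+1}$ yields an $\anultsp$-transition from $q_i^2$ to some state at least $r_{i+1}$, and determinism of $\anultsp$ forces this state to coincide with $s^2$, so $r_{i+1} \leq s^2 \leq q_{i+1}^2$. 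Finally, $q_n \in X$ combined with the invariant property $X \cap F = \emptyset$ and $q_n^1 \in F_1$ forces $q_n^2 \notin F_2$, and then $r_n \notin F_2$ by upward closure of $F_2$, so $w \notin \langof{\anultsp}$.
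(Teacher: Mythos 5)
Your proposal is correct and follows essentially the same route as the paper: the separating automaton built on the finite representing set $Q$ with existentially over-approximated transitions, the containment argument that lifts runs into the product (via determinism of the second component) and then dominates them inside the invariant, and the disjointness argument that inductively tracks the unique run of the deterministic system below the second coordinate of the automaton's state. The only cosmetic difference is that you make the without-loss-of-generality symmetry of regular separability explicit, which the paper leaves implicit.
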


There is a tension between the two premises, the finite representation of the invariant and determinism, which makes  Theorem~\ref{thm:core} difficult to apply   in order to prove Theorem~\ref{Theorem:RegSep}.
If we have two \wsts\ rather than the more general \ULTSes, and one of them is guaranteed to be deterministic, then it is easy to represent an invariant in a finite way using ideal decompositions.
We explore this special case in Section~\ref{Section:SimpleRegSep}. 
However, if both \wsts\ are non-deterministic, we first have to determinize one of them. 
As Lemma~\ref{Lemma:Determinization} shows, this still yields an \ULTS\ 
(and so Theorem~\ref{thm:core} still applies).
Unfortunately, as we prove in Section~\ref{Section:NonDet}, the determinization will ruin the WQO property for some \wsts.
Without a WQO on the states, it is harder to represent invariants in a finite way, and we need Sections~\ref{Section:CTS} and \ref{Section:FullRegSep} to achieve this.
In the remainder of this section, we prove Theorem~\ref{thm:core}.
\subsection{Proving the Proof Principle}\label{Subsection:ProofOfProofPrinciple}
We construct a separator starting from the premises of Theorem~\ref{thm:core}, and prove its correctness.
Let $\anults = (\states_{\anults}, \leq, \initstates_{\anults},\analph, \transitions_{\anults}, \finalstates_{\anults})$ be an arbitrary \ULTS and $\anultsp = (\states_{\anultsp}, \preceq, \initstates_{\anultsp},\analph, \transitions_{\anultsp}, \finalstates_{\anultsp})$ be a determinstic one.
Assume $\langof{\anults}\cap\langof{\anultsp}=\emptyset$. 
Let their synchronized product be
\begin{align*}
    \prodW = \anults \times \anultsp = (\prodstates, \prodleq, \prodinitstates, \analph, \prodtransitions, \prodleq, \prodfinalstates)\ .
\end{align*}
By disjointness, we have $\lang{\prodW} = \emptyset$.
Let $\sepstates \subseteq \prodstates$ be the finite set of states so that $\downclosure{\sepstates}$ is an inductive invariant of $\prodW$.
\paragraph*{Constructing the Separator}
We define a finite automaton $\annfa$ whose set of states is the set $\sepstates$ representing the invariant. 
As required for a separator, the language of $\annfa$ will contain~$\langof{\anults}$ and be disjoint from $\langof{\anultsp}$.
The idea is to over-approximate the states of $\prodW$ by the elements available in $\sepstates$.
The fact that $\reachall{\prodW} \subseteq \downclosure{\sepstates}$ guarantees that every state $(s, s')\in \prodstates$
has such a representation.
Since we seek to approximate the language of $\anults$, the final states only refer to the $\anults$-component.
The transitions are approximated existentially.

We define the \emph{separating automaton induced by $Q$} to be $\annfa = (\sepstates, \sepstates_I, \analph, \to, \sepstates_{F})$. 
We have $\sepstates_I = \setof{(s, s') \in Q}{(i, i') \prodleq (s, s') \text{ for some } (i, i') \in \prodinitstates}$, a state is initial if it dominates an initial state of $\prodW$. 
As final states, we take the pairs whose $\anults$-component is final,
$\sepstates_F = \setof{(s, s') \in Q}{ s \in \finalstates_U}$. 
Finally, the transition relation in $\A$ is an over-approximation of the transition relation in $\prodW$:
\begin{align*}
    (s, s') \trans{a} (r, r') \text{ in } \A \quad \text{ if } (s, s') \trans{a} (t, t') \text{ in } \prodW \text{ for some } (t, t') \prodleq (r, r')\ .
\end{align*}
Figure~\ref{Figure:SepTransRel} illustrates the construction.

\begin{figure}[b]
    \begin{align*}
    \xymatrix
    {
        && (r,r') \in Q \\
        Q \ni (s,s') \ar@/^1pc/@{.>}[rru]^a_{\text{ in } \annfa} \ar[rr]^a_{\text{ in } \prodW} &&
        (t,t') \in \prodS \ar@{}[u]|{\rot{\prodleq} \qquad}
    }
    \end{align*}
    \caption{Transition relation of $\annfa$.}
    \label{Figure:SepTransRel}
\end{figure}

\subsection*{Proving Separation}
We need to prove $\langof{\anults} \subseteq \langof{\annfa}$ and \mbox{$\langof{\annfa}\cap \langof{\anultsp} = \emptyset$}, and begin with the former.
As $\anultsp$ is deterministic, $\prodW$ contains all computations of $\anults$.
Due to the upward compatibility, $\annfa$ over-approximates the computations in $\prodW$.
Combining these two insights, which are summarized in the next lemma, yields the result.

\begin{lem}
\label{lem:sim}
    (1) For every $s \in \reach{\anults}{w}$ there is some \mbox{$(s, s') \in \reach{\prodW}{w}$}.
    (2) For every $(s, s') \in \reach{\prodW}{w}$ there is some $(r, r') \in \reach{\annfa}{w}$ with $(s, s') \prodleq (r, r')$.
\end{lem}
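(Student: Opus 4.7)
Part (1) is a simulation-by-determinism argument. Given a run $i = s_0 \trans{a_1} s_1 \trans{a_2}\cdots\trans{a_n} s_n = s$ in $\anults$ witnessing $s \in \reach{\anults}{w}$, I would pair it with the unique run of $\anultsp$ on $w$. Because $\anultsp$ is deterministic, $|\initstates_{\anultsp}| = 1$ and $|\transitions_{\anultsp}(t,a)| = 1$ for every $t$ and $a$, so from the unique initial state $i'$ of $\anultsp$ the word $w$ drives $\anultsp$ through a uniquely determined sequence $i' = s'_0 \trans{a_1} s'_1 \trans{a_2}\cdots\trans{a_n} s'_n$. The product run $(s_0,s'_0)\trans{a_1}(s_1,s'_1)\trans{a_2}\cdots\trans{a_n}(s_n,s'_n)$ starts in $\prodinitstates$ and ends in $(s,s'_n) \in \reach{\prodW}{w}$, as required.

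Part (2) I would prove by induction on $|w|$, exploiting upward compatibility of $\prodW$ and the fact that $\downcls{Q}$ is an inductive invariant of $\prodW$. For the base case $w = \varepsilon$, a reachable state $(s,s') \in \prodinitstates$ lies in $\downcls{Q}$ by \eqref{eq:iiI}, so some $(r,r') \in Q$ satisfies $(s,s') \prodleq (r,r')$; by the definition of $\sepstates_I$ this $(r,r')$ is an initial state of $\annfa$. For the inductive step $w = v\cdot a$, pick a predecessor $(t,t') \in \reach{\prodW}{v}$ of $(s,s')$ in $\prodW$. The induction hypothesis yields some $(r,r') \in \reach{\annfa}{v}$ with $(t,t') \prodleq (r,r')$. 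Upward compatibility of $\prodW$ then provides $(\tilde s,\tilde s') \in \prodtransitions((r,r'),a)$ with $(s,s') \prodleq (\tilde s,\tilde s')$. By \eqref{eq:iiSucc}, $(\tilde s, \tilde s') \in \downcls{Q}$, so some $(r_1, r_1') \in Q$ dominates it, and by transitivity $(s,s') \prodleq (r_1,r_1')$. Finally, the definition of $\to$ in $\annfa$ (see Figure~\ref{Figure:SepTransRel}) turns the product transition $(r,r') \trans{a} (\tilde s,\tilde s')$ together with $(\tilde s,\tilde s') \prodleq (r_1,r_1')$ into an $\annfa$-transition $(r,r') \trans{a} (r_1,r_1')$, closing the induction.

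The only slightly delicate point, and what I expect to be the main obstacle in a careful write-up, is that in part~(2) the element produced by upward compatibility need not itself lie in $Q$: we obtain an element of the invariant $\downcls{Q}$, and we must then re-dominate it by an element of $Q$ to get a genuine state of $\annfa$. This two-step replacement (first compatibility, then extraction of a $Q$-witness from the invariant) is precisely why the transitions of $\annfa$ are defined existentially in the way depicted in Figure~\ref{Figure:SepTransRel}; once this design choice is kept in mind, the inductive step composes cleanly.
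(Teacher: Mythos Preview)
Your proposal is correct and follows the natural approach: part~(1) exploits determinism of $\anultsp$ to pair any $\anults$-run with the unique $\anultsp$-run, and part~(2) is the expected induction on $|w|$ combining upward compatibility of the product with the invariant properties~\eqref{eq:iiI} and~\eqref{eq:iiSucc}. This is essentially the argument the paper has in mind (the detailed proof is deferred to the full version), and your identification of the two-step replacement in the inductive step---first simulate via compatibility, then dominate by a $Q$-element---is exactly the point of the existential definition of the $\annfa$-transitions.
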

\begin{prop}
\label{prop:SepW}
    $\langof{U} \subseteq \langof{\A}$\ .
\end{prop}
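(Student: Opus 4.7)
The plan is to chain the two parts of Lemma~\ref{lem:sim} and then invoke the upward closure of the final states of $\anults$. Concretely, I would fix a word $w \in \langof{\anults}$ and pick a witnessing final state $s \in \reach{\anults}{w} \cap \finalstates_U$. Applying part (1) of Lemma~\ref{lem:sim} lifts this computation to $\prodW$: there is some $s' \in \states_{\anultsp}$ with $(s,s') \in \reach{\prodW}{w}$. Applying part (2) then lifts the computation further to $\annfa$: there is $(r,r') \in \reach{\annfa}{w}$ with $(s,s') \prodleq (r,r')$.

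The last step is to argue that $(r,r')$ is in fact a final state of $\annfa$. By definition of the product order, $(s,s') \prodleq (r,r')$ gives $s \leq r$. Since $\anults$ is upward compatible, its set of final states $\finalstates_U$ is upward closed, so from $s \in \finalstates_U$ and $s \leq r$ we conclude $r \in \finalstates_U$. By definition of $\sepstates_F$, this means $(r,r') \in \sepstates_F$, and hence $w \in \langof{\annfa}$.

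In essence the proposition is a direct corollary of Lemma~\ref{lem:sim}: the only ingredient beyond the lemma is the upward closure of $\finalstates_U$, which is part of the definition of a \ULTS. The real work is hidden in Lemma~\ref{lem:sim}(2), which requires an induction on $|w|$ using the fact that $\downclosure{Q}$ covers $\reachall{\prodW}$ (so that every successor in $\prodW$ is dominated by some element of $Q$, providing the transition in $\annfa$ guaranteed by its construction). Since that lemma is already available, no further obstacle remains for this proposition.
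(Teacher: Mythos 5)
Your proof is correct and follows exactly the route the paper intends: chain Lemma~\ref{lem:sim}(1) and (2) to lift an accepting run of $\anults$ first into $\prodW$ and then into $\annfa$, and use the upward closure of $\finalstates_U$ together with the definition of $\sepstates_F$ to conclude that the reached state of $\annfa$ is final. No differences worth noting.
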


For the disjointness of $\langof{\A}$ and $\langof{\anultsp}$, the key observation is this.
Due to determinism, $\anultsp$ simulates the computations of $\annfa$ in the following sense:
if upon reading a word $\annfa$ reaches a state $(s, s')$, then the unique computation of $\anultsp$ will reach a state dominated by $s'$.

\begin{lem}
\label{lem:invsimWprime}
    For every $w \in \Sigma^*$ and every $(s, s') \in \reach{\A}{w}$, we have $\reach{\anultsp}{w} \preceq' s'$.
\end{lem}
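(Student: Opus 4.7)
The plan is to prove the statement by induction on the length of $w$, combining the construction of $\A$ with the determinism and upward compatibility of $\anultsp$.

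For the base case $w = \varepsilon$, I would unfold the definition of $\sepstates_I$: from $(s, s') \in \sepstates_I$ one obtains some $(i, i') \in \prodinitstates$ with $(i, i') \prodleq (s, s')$, so in particular $i' \preceq s'$. Since $\anultsp$ is deterministic, $\initstates_{\anultsp}$ is a singleton; any $(i, i') \in \prodinitstates$ has $i'$ equal to this unique initial state, hence $\reach{\anultsp}{\varepsilon} = i' \preceq s'$.

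For the inductive step, let $w = va$. A state $(s, s') \in \reach{\A}{va}$ is reached via some $(p, p') \in \reach{\A}{v}$ followed by a transition $(p, p') \trans{a} (s, s')$ in $\A$. Unfolding the definition of this transition provides $(t, t') \in \prodstates$ with $(p, p') \trans{a} (t, t')$ in $\prodW$ and $(t, t') \prodleq (s, s')$; projecting to $\anultsp$ gives $p' \trans{a} t'$ together with $t' \preceq s'$. The induction hypothesis yields $\reach{\anultsp}{v} \preceq p'$. I would then invoke upward compatibility of $\anultsp$: the pair $\reach{\anultsp}{v} \preceq p'$ together with $p' \trans{a} t'$ produces some $u'$ with $\reach{\anultsp}{v} \trans{a} u'$ and $u' \preceq t'$. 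By determinism of $\anultsp$, this $u'$ must coincide with $\reach{\anultsp}{va}$, closing the chain $\reach{\anultsp}{va} = u' \preceq t' \preceq s'$.

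The only point that requires attention is the coordination of the two roles played by determinism of $\anultsp$: on the one hand it pins down $\initstates_{\anultsp}$ and $\reach{\anultsp}{va}$ as singletons, and on the other hand it is what allows us to identify the simulating state $u'$ with the actual successor $\reach{\anultsp}{va}$. Note that no WQO hypothesis enters; the argument works for an arbitrary deterministic \ULTS, which matches the generality of Theorem~\ref{thm:core}.
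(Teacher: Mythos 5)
Your overall structure --- induction on the length of $w$, with the base case unfolding the definition of $\sepstates_I$ and the inductive step unfolding the definition of the $\A$-transitions --- is the right one, and the base case is correct. The gap is in the inductive step, where you claim that $\reach{\anultsp}{v}\preceq p'$ together with $p'\trans{a}t'$ ``produces'', via upward compatibility, some $u'$ with $\reach{\anultsp}{v}\trans{a}u'$ and $u'\preceq t'$. Upward compatibility goes the other way: it lets a transition of the \emph{smaller} state be answered by a transition of the \emph{larger} state, so from the transition $p'\trans{a}t'$ of the larger state it tells you nothing about the successors of $\reach{\anultsp}{v}$. As written, the property you invoke is downward compatibility, which is not among the hypotheses.

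The step is salvageable precisely because $\anultsp$ is deterministic, but determinism has to enter \emph{before} compatibility, not only afterwards. First use totality of the deterministic transition function to obtain the unique successor $u'=\transitions_{\anultsp}(\reach{\anultsp}{v},a)=\reach{\anultsp}{va}$. Then apply upward compatibility in the correct direction to $\reach{\anultsp}{v}\preceq p'$ and the transition $\reach{\anultsp}{v}\trans{a}u'$: this yields some $t''\in\transitions_{\anultsp}(p',a)$ with $u'\preceq t''$. Finally, determinism of $\anultsp$ at $p'$ forces $t''=t'$, closing the chain $\reach{\anultsp}{va}=u'\preceq t'\preceq s'$. (Equivalently, you could cite the fact, used later for Lemma~\ref{Lemma:Complementation}, that for deterministic systems $\leq$ is a simulation iff $\geq$ is --- but then you should say so explicitly rather than attribute the reversed step to upward compatibility.) With this repair your proof is correct and follows the intended argument; your closing remark that no WQO hypothesis is needed is accurate.
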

\noindent With this lemma at hand,  we can show disjointness.
Towards a contradiction, suppose there is a word $w\in \lang{\A} \cap \lang{\anultsp}$.
As $w\in \lang{\annfa}$, there is a state $(s, s') \in \reach{\annfa}{w}$ with $s \in \finalstates_{\anults}$.
As $w\in \lang{\anultsp}$, the unique state $\reach{\anultsp}{w}$ belongs to $\finalstates_{\anultsp}$.
With the previous lemma and the fact that $F_{\anultsp}$ is upward closed, we conclude $s'\in \finalstates_{\anultsp}$.
Together, $(s, s')\in \prodfinalstates$, which contradicts the fact that $\downclosure{\sepstates}$ is an inductive invariant, Property~\eqref{eq:iiF}.

\begin{prop}
\label{prop:SepWprime}
    $\langof{\A} \cap \langof{\anultsp} = \emptyset$\ .
\end{prop}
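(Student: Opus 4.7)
The plan is to argue by contradiction. I would suppose there is some $w \in \langof{\A} \cap \langof{\anultsp}$ and then produce a state that lies simultaneously in $\prodfinalstates$ and in the inductive invariant $\downclosure{Q}$, contradicting Property~\eqref{eq:iiF}. All the heavy lifting has already been done in Lemma~\ref{lem:invsimWprime}; the remainder is mostly bookkeeping against the definitions.

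First I would extract two witness states, one from each language membership. From $w \in \langof{\A}$, the definition of acceptance gives a state $(s,s') \in \reach{\A}{w}$ that belongs to $\sepstates_F$; by the construction of $\sepstates_F$ this means $s \in \finalstates_{\anults}$, while $(s,s')$ itself is an element of $Q$. In parallel, since $\anultsp$ is deterministic, $\reach{\anultsp}{w}$ is a single state $r'$, and $w \in \langof{\anultsp}$ forces $r' \in \finalstates_{\anultsp}$.

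Next I would bring the two runs together via Lemma~\ref{lem:invsimWprime}, which yields $r' \preceq s'$. Upward compatibility of $\anultsp$ requires $\finalstates_{\anultsp}$ to be upward closed, so from $r' \in \finalstates_{\anultsp}$ I obtain $s' \in \finalstates_{\anultsp}$. Combined with $s \in \finalstates_{\anults}$, this gives $(s,s') \in \finalstates_{\anults} \times \finalstates_{\anultsp} = \prodfinalstates$.

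Finally, since $(s,s') \in Q \subseteq \downclosure{Q}$ and $(s,s') \in \prodfinalstates$, the pair witnesses $\prodfinalstates \cap \downclosure{Q} \neq \emptyset$, directly contradicting Property~\eqref{eq:iiF} of the inductive invariant. The only potentially subtle step is invoking upward closedness of $\finalstates_{\anultsp}$ at the right moment, but this is immediate from the ULTS hypothesis on $\anultsp$; the genuinely non-trivial content — that $\anultsp$'s unique $w$-successor is dominated by the $\anultsp$-component of any $\A$-reachable state — has been packaged into Lemma~\ref{lem:invsimWprime}, so there is no real obstacle left in this proposition itself.
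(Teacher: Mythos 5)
Your proof is correct and follows exactly the paper's argument: extract a final state $(s,s')\in\reach{\A}{w}$ and the unique final state of $\anultsp$ on $w$, apply Lemma~\ref{lem:invsimWprime} together with upward closedness of $\finalstates_{\anultsp}$ to conclude $(s,s')\in\prodfinalstates$, and contradict Property~\eqref{eq:iiF}. The only (harmless) addition is your explicit remark that $(s,s')\in Q\subseteq\downclosure{Q}$, which the paper leaves implicit.
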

\noindent
Together, Proposition~\ref{prop:SepW} and~\ref{prop:SepWprime} show Theorem~\ref{thm:core}.
\subsection{Discussion}\label{Subsection:Problems}
Thanks to  Theorem~\ref{thm:core}, the proof of our separability theorem takes the form of an invariant construction: we look for a finitely-represented inductive invariant and use Theorem~\ref{thm:core} to convert it into a separator.
Fortunately, there is a guarantee that an inductive invariant always exists, though it may not be representable in a finite way.
The reason is that inductive invariants are complete for proving emptiness, like inductive invariants for programs are complete for proving safety~\cite{Cook1978}. 

\begin{lem}
\label{Lemma:Invariants}
    Let $U$ be a \ULTS.
    Then $\langof{U}=\emptyset$ iff there is an inductive invariant for~$U$.
\end{lem}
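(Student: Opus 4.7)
\medskip

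The plan is to prove the two directions separately, with the forward direction being routine and the converse hinging on a canonical choice of invariant built from the reachable states.

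For the ($\Leftarrow$) direction, assume an inductive invariant $X$ exists. I would argue by induction on $|w|$ that $\reach{\anults}{w} \subseteq X$: the base case $w=\varepsilon$ is Property~\eqref{eq:iiI}, and the inductive step uses Property~\eqref{eq:iiSucc} together with the standard extension $\apply{X}{w\cdot a} = \apply{\apply{X}{w}}{a}$. Since $\finalstates \cap X = \emptyset$ by Property~\eqref{eq:iiF}, no word can reach $\finalstates$ from $\initstates$, so $\langof{\anults} = \emptyset$.

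For the ($\Rightarrow$) direction, the natural candidate is the downward closure of all reachable states,
\[
    X \;=\; \downcls{\reachall{\anults}}\ .
\]
This set is downward closed by construction. I would verify the three invariant properties in turn. Condition~\eqref{eq:iiI} is immediate since $I \subseteq \reachall{\anults} \subseteq X$. For Condition~\eqref{eq:iiSucc}, take $x \in X$ with witness $x \leq x'$ for some reachable $x'$, and let $y \in \transitions(x,a)$; upward compatibility of $\anults$ (the simulation property) supplies some $y' \in \transitions(x',a)$ with $y \leq y'$, and $y'$ is reachable, so $y \in X$. Condition~\eqref{eq:iiF} uses emptiness of the language together with the fact that $\finalstates$ is upward closed: if some $x \in X \cap \finalstates$ existed, then its reachable dominator $x'$ would also lie in $\finalstates$, producing an accepted word and contradicting $\langof{\anults} = \emptyset$.

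There is essentially no obstacle here; the result is a direct completeness statement analogous to the classical one for programs~\cite{Cook1978}. I would, however, flag explicitly that the invariant $\downcls{\reachall{\anults}}$ produced by this argument need not be finitely represented, which is precisely why the discussion in Section~\ref{Subsection:Problems} and the subsequent sections are needed to obtain finite representations suitable for feeding into Theorem~\ref{thm:core}.
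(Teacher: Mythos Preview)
Your proposal is correct and follows essentially the same approach as the paper: the paper also takes $X=\downclosure{\reachall{\anults}}$ as the canonical invariant for the only-if direction and argues containment of the reachability set in any invariant for the if direction. The paper's argument is terser, but your use of upward compatibility to verify Properties~\eqref{eq:iiSucc} and~\eqref{eq:iiF} for the downward closure matches what the paper leaves implicit, and your remark about the lack of a finite representation is exactly the point the paper makes immediately after the lemma.
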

\noindent

To see the if-direction, let $X$ be an inductive invariant for $U$.
By~\eqref{eq:iiI} and~\eqref{eq:iiSucc}, the invariant has to contain the whole reachability set.
By~\eqref{eq:iiF} and~\eqref{eq:iiSucc}, it has to be disjoint from the predecessors of the final states:
\begin{align*}
    &\reachall{\W} \subseteq X
    &
    &\revreachall{\W} \cap X = \emptyset
    \ .
\end{align*}
Combined, this shows language emptiness.

For the only if-direction, observe that $X = \downclosure{\reachall{\W}}$ is an inductive invariant if the language is empty. 
Actually, it is the least one wrt.~inclusion.
There is also a greatest inductive invariant, namely the complement of $\revreachall{\W}$.
Other invariants may have the advantage of being easier to represent.
Our goal is to find one that can be represented as the downward closure of a finite set. 

\section{A Deterministic Model for WSTS}\label{Section:CTS}
We just discussed that, in order to prove our separability result with the help of Theorem~\ref{thm:core}, we need to work with a deterministic model.
We propose here converging transition systems~(CTS) as a new class of ULTS that is general enough to capture determinized WSTS and retains enough structure to establish the existence of finitely-represented inductive invariants. 
The states of a CTS form a powerset lattice in which every infinite sequence contains a subsequence that converges in a natural sense.
CTS are inspired by Noetherian transition systems~\cite{GL07,GL10}, but are formulated in a lattice-theoretic rather than in a topological way.

We summarize the overall argumentation for our separability result.
We only define CTS as a deterministic model, and therefore have to determinize both of the given \wsts.
CTS are closed under products. 
Since the given \wsts\ languages are disjoint by assumption, the product has an empty language and, by Lemma~\ref{Lemma:Invariants}, an inductive invariant. 
It remains to turn this invariant of the product CTS into an invariant that can be represented in a finite way. 
This will be the topic of Section~\ref{Section:FullRegSep}, here we only give the idea. 
We add to the invariant the limits of all converging sequences (that live inside the invariant).
Since the CTS transitions are compatible with limits, the resulting set of states is again an inductive invariant.
An application of Zorn's lemma shows that the maximal elements in this invariant form the finite representation that was needed to conclude the proof.

The reader may wonder whether CTS are actually needed for the above argumentation.
The answer is no, one could give the proof directly on ULTS. 
These ULTS, however, would be the result of determinizing the given WSTS and forming the product.
This means they would  operate on products of powerset lattices that satisfy a particular convergence condition.
CTS allow us to abstract away the product and the powerset structure and highlight the key arguments in the closure of a given invariant under forming limits. 
Still, to have a more concrete account of what follows, it will help to remember this instantiation.

%
%
%

\paragraph*{Converging Lattices}
Recall that determinized WSTS have as state space~$(\downpowof{\states}, \subseteq)$ with $(\states,\leq)$ a WQO.
%
%
In a WQO, every infinite sequence admits an increasing subsequence.
It is well-known that this may not hold for $(\downpowof{\states}, \subseteq)$~\cite{Rado54}.
However, a natural relaxation holds: every infinite sequence~$\sequence{\asetn{i}}{i\in\nat}$
admits an infinite subsequence $\sequence{\asetn{\varphi(i)}}{i\in\nat}$,
where every element that is present in one set is present in almost every set.
A similar property, defined for complete lattices, is called convergence in the literature \cite{ConvergenceLattices}.
Our definition differs from the citation in two ways.
We restrict ourselves to sequences (as opposed to nets), and we require convergence to the join (as opposed to $\lim\sup=\lim\inf$).
This suffices for us.
%
%
%

\begin{defi}\label{Definition:ConvergingLattice}
    A \emph{converging lattice} $(\states, \leq)$
    is a completely distributive lattice,
    where every sequence $\sequence{\astate_i}{i\in\nat}$
    has a converging subsequence $\sequence{\astate_{\varphi(i)}}{i\in \nat}$.
    A \emph{converging sequence} $\sequence{\astatep_i}{i\in \nat}$ is an infinite
    sequence with
    \[
        \biglatjoin_{i\in \nat}\biglatmeet_{j\geq i}\astatep_j=\biglatjoin_{i\in\nat}\astatep_i\ .
    \]
\end{defi}
\noindent The equality formalizes our explanation.
In the context of sets, where join and meet are
union and intersection, respectively, the right-hand side of the equation contains all elements that
appear in some set in the sequence.
The left side iterates over every finite initial segment,
and includes every element that appears in all sets outside of this
segment.
This means every element that is missing in only finitely many sets will
eventually be included.

Converging lattices not only represent downward-closed subsets of WQOs, 
they are even a precise representation in the following sense: whenever we have a converging lattice of downward-closed sets, then the underlying structure has to be a WQO. 
\begin{lem}\label{Lemma:ConvergenceInWqo}
$(\downpowof{\states}, \subseteq)$ is
    a converging lattice if and only if $(\states, \leq)$ is a WQO.
\end{lem}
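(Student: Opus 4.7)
The plan is to treat the two directions separately. The forward direction is a short contrapositive using a bad sequence; the backward direction reduces, after one key observation, to a simple well-founded iteration.

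\textbf{($\Rightarrow$).} I argue by contraposition. If $(\states, \leq)$ is not a WQO, a standard Ramsey extraction yields a bad sequence $x_1, x_2, \ldots$ with $x_i \not\leq x_j$ for all $i < j$. Set $D_i := \downcls{\{x_i\}}$. For any subsequence indexed by $\varphi$, the element $x_{\varphi(1)}$ lies in $D_{\varphi(1)} \subseteq \bigcup_i D_{\varphi(i)}$ but in no later $D_{\varphi(j)}$, for membership would force $x_{\varphi(1)} \leq x_{\varphi(j)}$, contradicting badness. Hence $x_{\varphi(1)} \notin \bigcap_{j \geq i} D_{\varphi(j)}$ for every $i \geq 1$, and the convergence equation $\bigcup_i \bigcap_{j \geq i} D_{\varphi(j)} = \bigcup_i D_{\varphi(i)}$ fails on the subsequence; thus $(\downpowof{\states}, \subseteq)$ is not a converging lattice.

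\textbf{($\Leftarrow$).} Assume $(\states, \leq)$ is a WQO. The pivotal observation is that $(\downpowof{\states}, \subsetneq)$ is well-founded: given an infinite strictly decreasing chain $E_1 \supsetneq E_2 \supsetneq \cdots$, pick $x_i \in E_i \setminus E_{i+1}$; the WQO property gives $i < j$ with $x_i \leq x_j$, and then $x_j \in E_j \subseteq E_{i+1}$ together with downward-closedness of $E_{i+1}$ forces $x_i \in E_{i+1}$, contradicting the choice of $x_i$. With this in hand, I extract a converging subsequence of any $(D_i)_{i \in \nat}$ by iteration. If $(D_i)$ already converges, done. Otherwise the convergence equation fails and there exists $x \in \bigcup_i D_i$ with $\{i : x \notin D_i\}$ infinite; restrict to this infinite set of indices. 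The resulting subsequence's union is strictly smaller than $\bigcup_i D_i$, being contained in it yet excluding $x$. Iterate on the new subsequence. Each step strictly decreases the current union in $(\downpowof{\states}, \subsetneq)$, so well-foundedness forces termination after finitely many steps. Composing the finitely many extractions yields a single subsequence of the original $(D_i)$ that is converging.

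\textbf{Main obstacle.} The decisive ingredient is the well-foundedness bridge from WQO of $\states$ to well-foundedness of strict inclusion on $\downpowof{\states}$; once it is available, the iteration is almost mechanical. The mild care required is to phrase the failure of convergence as the existence of a single witness $x$ whose removal from the index set strictly shrinks the union, which is immediate from unfolding the convergence equation. No part of the argument requires structure on $\states$ beyond the WQO hypothesis, and no ideal decomposition or transfinite bookkeeping is needed because the iteration lives at level $\omega$ in the well-founded order.
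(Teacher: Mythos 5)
Your proof is correct, and it is more self-contained than the paper's: the paper disposes of both directions by citation (the backward direction to Rado's Theorem 3, the forward direction to the fact that $(\downpowof{\states}, \subseteq)$ is well-founded iff $(\states, \leq)$ is a WQO, combined with the observation that convergence implies well-foundedness). Your forward direction takes a genuinely different route: rather than passing through well-foundedness, you exhibit from a bad sequence an explicit family of principal downward closures $\downcls{\{x_i\}}$ on which the convergence equation fails for \emph{every} subsequence; this is a direct witness and avoids invoking the well-foundedness characterization, at the cost of being specific to this one failure mode (the paper's route gets the equivalence as a one-line corollary of two quotable facts). Your backward direction is essentially a reproof of the cited Rado result, and it is sound: the unions $\bigcup_{i \in I_k} D_i$ over the successively restricted index sets are downward closed, each failed convergence test yields a single witness $x$ whose removal strictly shrinks the union, and your well-foundedness bridge terminates the iteration, after which the composed extraction is the desired converging subsequence. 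Two cosmetic remarks: no Ramsey extraction is needed in the forward direction, since the paper's definition of non-WQO already hands you a sequence with $x_i \not\leq x_j$ for all $i<j$; and you implicitly (and correctly) rely on the paper's standing observation that $(\downpowof{\states}, \subseteq)$ is a completely distributive lattice for any quasi order, so only the subsequence condition needs checking.
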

\noindent The backward direction is by \cite[Proof of Theorem 3]{Rado54}, see also \cite[Lemma 2.12]{Pequignot15}. 
The forward direction is by the following fact~\cite{Rado54}, also \cite[Fact III.3]{LS15}: $(\downpowof{\states}, \subseteq)$ is well-founded if and only if the order is a WQO.
Convergence clearly implies well-foundedness. 

The space of converging sequences is closed under the application of join-preserving functions as formulated next.
While we would expect this result to be known, we have not found a reference.
The lemma is central to our argument, therefore we give the proof.
\begin{lem}\label{Lemma:ConvergenceContinuous}
Let $(\states, \leq)$ be a lattice, $\sequence{\astate_i}{i\in\nat}$ a converging sequence in $\states$,
and $f:\states\rightarrow\states$ a join-preserving function.
Then also $\sequence{f(\astate_i)}{i\in\nat}$ is converging.
\end{lem}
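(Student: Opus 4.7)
The plan is to unfold Definition~\ref{Definition:ConvergingLattice} and prove the equality
\[
    \biglatjoin_{i\in\nat}\biglatmeet_{j\geq i} f(\astate_j)\ =\ \biglatjoin_{i\in\nat} f(\astate_i)
\]
by two inequalities. The direction $\leq$ holds for every sequence in $\states$, independently of convergence or any property of $f$: from $\biglatmeet_{j\geq i} f(\astate_j)\leq f(\astate_i)$, taking the join over $i\in\nat$ on both sides already yields it. The real content of the lemma lies in the reverse inequality, for which I will use both the convergence of $\sequence{\astate_i}{i\in\nat}$ and the join-preservation of $f$.

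For the direction $\geq$, I first apply join-preservation to the right-hand side to obtain $\biglatjoin_{i\in\nat} f(\astate_i)=f\bigl(\biglatjoin_{i\in\nat}\astate_i\bigr)$. The convergence hypothesis on $\sequence{\astate_i}{i\in\nat}$ rewrites this as $f\bigl(\biglatjoin_{i\in\nat}\biglatmeet_{j\geq i}\astate_j\bigr)$, and a second use of join-preservation yields $\biglatjoin_{i\in\nat} f\bigl(\biglatmeet_{j\geq i}\astate_j\bigr)$. It remains to push $f$ through the inner meets. Join-preservation implies monotonicity, hence for every $k\geq i$ we have $f\bigl(\biglatmeet_{j\geq i}\astate_j\bigr)\leq f(\astate_k)$, which gives $f\bigl(\biglatmeet_{j\geq i}\astate_j\bigr)\leq \biglatmeet_{j\geq i} f(\astate_j)$. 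Taking the join over $i$ on both sides turns this into the desired reverse inequality $\biglatjoin_{i\in\nat}f(\astate_i)\leq \biglatjoin_{i\in\nat}\biglatmeet_{j\geq i}f(\astate_j)$.

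The main obstacle I anticipate is the asymmetry between meets and joins under $f$: the function is assumed to commute with arbitrary joins, but for meets only the one-sided inequality $f\bigl(\biglatmeet X\bigr)\leq \biglatmeet f(X)$ derived from monotonicity is available. The key insight that makes the argument go through is that this pointwise inequality is sufficient once it sits below an outer join: the convergence equation is applied to $\sequence{\astate_i}{i\in\nat}$ where $f$ can be carried across the outer join, and only then does $f$ need to meet the inner meets, where the one-sided inequality is enough.
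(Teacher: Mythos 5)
Your proof is correct and follows essentially the same route as the paper's: apply $f$ to the convergence equation, distribute $f$ over the outer join by join-preservation, and use the monotonicity-derived inequality $f(\biglatmeet X)\leq\biglatmeet f(X)$ to pass under the inner meet. The only difference is presentational: you split the target equality into two explicit inequalities, while the paper sandwiches everything into a single chain ending back at $\biglatjoin_{i\in\nat}f(\astate_i)$.
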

\begin{proof}
Due to convergence of the given sequence, we have $\biglatjoin_{i\in\nat}\biglatmeet_{j\geq i}\astate_j =
        \biglatjoin_{i\in\nat}\astate_i$.
This equality yields
$f(\biglatjoin_{i\in\nat}\biglatmeet_{j\geq i}\astate_j) =
        f(\biglatjoin_{i\in\nat}\astate_i)$.
By the join preservation of $f$, we get
\begin{align*}
        \biglatjoin_{i\in\nat}f(\biglatmeet_{j\geq i}\astate_j) =
        \biglatjoin_{i\in\nat}f(\astate_i)\ .
\end{align*}

Function $f$ may not be meet preserving. 
However, as it is join preserving, it is also monotonic.
We thus have $f(\biglatmeet S')\leq f(\astateppp)$ for all $\astateppp\in S'$.
This means $f(\biglatmeet S')\leq\biglatmeet_{s\in S'}f(\astateppp)$.
We apply this inequality to the previous equality:
\begin{align*}
        \biglatjoin_{i\in\nat}f(\astate_i) =
        \biglatjoin_{i\in\nat}f(\biglatmeet_{j\geq i}\astate_j)\leq
        \biglatjoin_{i\in\nat}\biglatmeet_{j\geq i}f(\astate_j)\leq
        \biglatjoin_{i\in\nat}f(\astate_i)\ .
\end{align*}
This is $\biglatjoin_{i\in\nat}\biglatmeet_{j\geq i}f(\astate_j)=\biglatjoin_{i\in\nat}f(\astate_i)$, as desired.
\end{proof}

\paragraph*{CTS}
We explain the considerations that lead us to the definition of CTS given below.
In the light of Lemma~\ref{Lemma:ConvergenceInWqo}, the states of a CTS should form a converging lattice.
This, however, was not enough to guarantee the existence of finitely-represented inductive invariants.
Invariants should be closed under taking transitions.
To understand which sets satisfy this requirement, we had to restrict the transition relation.
We only define CTS as a deterministic model.
Then the transitions form a function $\delta(- , \aletter)$ for every letter~$\aletter\in\analph$.
Upward compatibility of these functions is not very informative.
Consider determinized WSTS: upward compatibility would give us $\delta(S_0\cup S_1, \aletter)\supseteq\delta(S_0, \aletter)$, while we expect
$\delta(S_0\cup S_1, \aletter)=\delta(S_0, \aletter)\cup \delta(S_1, \aletter)$.
In lattice-theoretic terms, we expect the transition functions $\delta(-, \aletter)$ to be join preserving.
A benefit of this requirement is of course that it makes Lemma~\ref{Lemma:ConvergenceContinuous} available.
An invariant should also be disjoint from the final states, so we had to control this set as well.
When we determinize a WSTS, a set $\adownset\in\downpowof{\states}$ becomes final as soon as it contains a single final state.
Given the definition of convergence, we relax this to containing a finite set of final states.
\begin{defi}\label{Definition:CTS}
    A \emph{converging transition system} (CTS) is an ULTS
    $\anults=(\states, \leq, {\initstate}, \analph,
    \transitions, \finalstates)$ that is deterministic, where $(\states, \leq)$
    is a converging lattice, the functions $\transitions(-, \aletter)$ are join preserving for all $\aletter\in\analph$,
    and the final states satisfy
    \begin{align*}
        \text{finite acceptance: }&\text{ for every }\biglatjoin K \in \finalstates\text{ there is a finite set }
        N \subseteq K
        \text{ with }\biglatjoin N\in\finalstates\ .
    \end{align*}
\end{defi}

With this definition, the determinization of a WSTS yields a CTS.
Somewhat surprisingly, CTS do not add expressiveness but their languages are already accepted by (non-deterministic) WSTS.
The construction is via join prime elements and can be found in the full version~\cite{keskinmeyer2023sep}.
Together, the CTS languages are precisely the WSTS languages, and one may see Definition~\ref{Definition:CTS} as a reformulation of the WSTS model.
\begin{prop}\label{Proposition:Correspondence}
If $\anults$ is a WSTS, then $\detof{\anults}$ is a CTS with $\langof{\anults}=\langof{\detof{\anults}}$. 
For every CTS~$\anultsp$, there is a WSTS~$\anults$ with $\langof{\anultsp}=\langof{\anults}$. 
Together, $\langof{\wsts}=\langof{\cts}$.
\end{prop}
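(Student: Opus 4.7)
The plan is to handle the two directions of the correspondence separately, using the lemmas already established to compress most of the forward direction and using the theory of join-prime elements for the (harder) converse.

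For the forward direction, I would verify the four defining conditions of a CTS for $\detof{\anults}$. Determinism is immediate from Lemma~\ref{Lemma:Determinization}, which also gives $\langof{\anults}=\langof{\detof{\anults}}$. That $(\downpowof{\states}, \subseteq)$ is a converging lattice follows directly from Lemma~\ref{Lemma:ConvergenceInWqo} applied to the WQO $(\states, \leq)$; complete distributivity of the lattice of downward-closed sets was already noted in Section~\ref{Section:WSTS}. For join preservation of $\detof{\transitions}(-, \aletter)$, I would compute
\begin{align*}
\detof{\transitions}\bigl(\textstyle\bigcup_i \adownset_i, \aletter\bigr)
= \downcls{\transitions\bigl(\textstyle\bigcup_i \adownset_i, \aletter\bigr)}
= \downcls{\textstyle\bigcup_i \transitions(\adownset_i, \aletter)}
= \textstyle\bigcup_i \detof{\transitions}(\adownset_i, \aletter),
\end{align*}
using that $\transitions$ extends to sets by pointwise union and that downward closure commutes with union (the join in $\downpowof{\states}$). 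Finite acceptance is similarly routine: a state $\adownset=\biglatjoin K$ is final in $\detof{\anults}$ iff $\adownset\cap\finalstates\neq\emptyset$; any witness $s\in\adownset\cap\finalstates$ lies in some $\adownset_{0}\in K$, and then $\{\adownset_{0}\}\subseteq K$ is finite with $\biglatjoin\{\adownset_{0}\}=\adownset_{0}\in\detof{\finalstates}$.

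For the backward direction, given a CTS $\anultsp=(\states', \leq, \initstate', \analph, \transitions', \finalstates')$, I would recover a WSTS $\anults$ by restricting to the join-prime elements $J\subseteq \states'$ (those $p$ such that $p\leq\biglatjoin X$ implies $p\leq x$ for some $x\in X$). In a completely distributive lattice, every element decomposes as the join of the join-primes below it, so $J$ carries enough information to reconstruct the dynamics. The WSTS $\anults$ would have state set $J$, ordered by the restriction of $\leq$; an initial state is any $p\in J$ with $p\leq\initstate'$; a transition $p\trans{\aletter}q$ exists whenever $q\leq\transitions'(p, \aletter)$ (which, by join preservation of $\transitions'(-, \aletter)$, gives an upward-compatible relation); and $p$ is final iff $p\in\finalstates'$. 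Language preservation would then be a straightforward induction on word length using the join-prime decomposition: a word is accepted by $\anultsp$ iff the (unique) reached state dominates some join-prime in $\finalstates'$, and the finite-acceptance condition ensures this check is witnessed by a finite support.

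The main obstacle is showing that $(J, \leq)$ is actually a WQO, which is the core content invoked by the reference to~\cite{keskinmeyer2023sep}. Given an infinite sequence $(p_i)_{i\in\nat}$ in $J$, the convergence property of $\states'$ supplies a subsequence $(p_{\varphi(i)})$ with $\biglatjoin_{i}\biglatmeet_{j\geq i}p_{\varphi(j)}=\biglatjoin_{i}p_{\varphi(i)}$. The element $p_{\varphi(0)}$ lies below the right-hand side, hence below the left-hand side, and join-primality of $p_{\varphi(0)}$ then forces $p_{\varphi(0)}\leq\biglatmeet_{j\geq i}p_{\varphi(j)}$ for some $i$, yielding the desired comparable pair $p_{\varphi(0)}\leq p_{\varphi(i)}$. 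After WQO is secured, concluding $\langof{\cts}=\langof{\wsts}$ is immediate: the forward part shows $\langof{\wsts}\subseteq\langof{\cts}$ and the backward part shows the reverse inclusion.
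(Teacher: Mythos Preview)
Your forward direction is fine, and your backward direction follows the paper's route via join-prime elements. The argument that $(J,\leq)$ is a WQO is correct and is the heart of the matter: passing to a converging subsequence and using complete join-primality of $p_{\varphi(0)}$ to land inside some tail meet is exactly the right move.

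There is, however, a genuine gap in your language-preservation step. You assert that ``a word is accepted by $\anultsp$ iff the (unique) reached state dominates some join-prime in $\finalstates'$'', and appeal to finite acceptance for this. Finite acceptance only yields a \emph{finite set} $N$ of join-primes below the reached state with $\biglatjoin N\in\finalstates'$; it does not give a \emph{single} join-prime in $\finalstates'$. A concrete counterexample: take the four-element Boolean lattice $\{0,a,b,1\}$ with $\finalstates'=\{1\}$, identity transitions, and $\initstate'=1$. This is a converging, completely distributive lattice with finite acceptance, the join-primes are $a$ and $b$, and neither lies in $\finalstates'$. Your WSTS on $J=\{a,b\}$ has empty final-state set and hence empty language, whereas $\langof{\anultsp}=\analph^{*}$.

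The fix is to let the finite support that finite acceptance provides become part of the state. Once you know $(J,\leq)$ is a WQO, Lemma~\ref{c:pfin} gives that $\big(\findownpowof{J},\subseteq\big)$ is a WQO as well; building the WSTS over finite (downward closures of) sets of join-primes, with finality meaning $\biglatjoin N\in\finalstates'$, restores language equality. The remaining pieces of your sketch (initial states, transitions via join preservation, the inductive description of the reachable set) then go through with only notational changes.
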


The correspondence allows us to import the countability assumption from Lemma~\ref{Lemma:Countability}.
Indeed, if the WQO $(\states, \leq)$ is countable, then there is only a countable number of downward-closed sets in $(\downpowof{\states}, \subseteq)$.
This is by a standard argument for WSTS: each downward-closed set can be characterized by its complement, the complement is upward closed, and is therefore characterized by its finite set of minimal elements.
\begin{lem}\label{Lemma:CountableCTS}
For every $\alang\in\langof{\cts}$, there is a CTS $\anults$ over a countable number of states so that $\alang=\langof{\anults}$
\end{lem}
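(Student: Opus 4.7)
The plan is to reduce to the countability lemma for WSTS (Lemma~\ref{Lemma:Countability}) via the correspondence in Proposition~\ref{Proposition:Correspondence}. Let $\alang \in \langof{\cts}$. By Proposition~\ref{Proposition:Correspondence}, $\langof{\cts} = \langof{\wsts}$, so there is a WSTS $\anultsp$ with $\langof{\anultsp} = \alang$. By Lemma~\ref{Lemma:Countability}, we may further assume that the set of states $\states_{\anultsp}$ of $\anultsp$ is countable. Applying Proposition~\ref{Proposition:Correspondence} in the other direction, $\detof{\anultsp}$ is a CTS with $\langof{\detof{\anultsp}} = \langof{\anultsp} = \alang$, so it suffices to show that $\detof{\anultsp}$ has a countable number of states.

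The state space of $\detof{\anultsp}$ is $\downpowof{\states_{\anultsp}}$, so the remaining task is to argue that this set is countable. I would use the standard correspondence sketched in the paragraph preceding the lemma: every downward-closed set $\adownset \subseteq \states_{\anultsp}$ is uniquely determined by its complement $\states_{\anultsp} \setminus \adownset$, which is upward closed, and every upward-closed subset of a WQO is in turn uniquely determined by its finite set of minimal elements (these exist and are finite because a WQO has no infinite antichains and no infinite strictly descending chains). Thus the map $\adownset \mapsto \min(\states_{\anultsp} \setminus \adownset)$ is an injection from $\downpowof{\states_{\anultsp}}$ into the set of finite antichains of $(\states_{\anultsp}, \leq)$, which is a subset of the finite subsets of a countable set and hence countable.

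The main obstacle, modest as it is, lies in chaining the two directions of Proposition~\ref{Proposition:Correspondence} together: the WSTS produced from a CTS in the reverse direction need not itself be countable, which is why we have to re-apply Lemma~\ref{Lemma:Countability} after crossing over to the WSTS side before determinizing back. Once that sequence is arranged, the rest is the routine WSTS-theoretic bijection between downward-closed sets and finite antichains.
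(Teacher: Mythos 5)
Your proof is correct and follows exactly the paper's route: obtain a countable WSTS for $\alang$ via Proposition~\ref{Proposition:Correspondence} and Lemma~\ref{Lemma:Countability}, determinize it into a CTS, and conclude countability of $\downpowof{\states}$ by identifying each downward-closed set with the finite set of minimal elements of its upward-closed complement. The ordering of steps you flag as the ``main obstacle'' is precisely how the paper arranges the argument, so there is nothing to add.
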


We will also need that CTS are closed under synchronized products.
\begin{lem}\label{Lemma:ProductCTS}
If $\anults$ and $\anultsp$ are CTS, so is $\anults\times\anultsp$.
\end{lem}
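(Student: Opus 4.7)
The plan is to verify all four clauses of Definition~\ref{Definition:CTS} for the synchronized product $\anults \times \anultsp$, reusing the fact (stated earlier in the excerpt) that the product is an ULTS under the componentwise order. Determinism is immediate from determinism of both factors, and complete distributivity of the componentwise-ordered product follows from complete distributivity of each factor, since joins and meets on a product lattice are computed componentwise. What remains are three genuine points: the convergence property, join preservation of the transition functions, and finite acceptance.

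For the convergence property, given a sequence $\sequence{(s_i, s'_i)}{i \in \nat}$ in $S_U \times S_{U'}$, I would use a diagonal extraction. First extract a subsequence $\sequence{(s_{\varphi(i)}, s'_{\varphi(i)})}{i}$ whose first components converge in $(S_U, \leq)$; then extract a further subsequence $\sequence{(s_{\varphi(\psi(i))}, s'_{\varphi(\psi(i))})}{i}$ whose second components converge in $(S_{U'}, \preceq)$. The subtle point is that the first components must remain converging after the second extraction, so I would first establish the following auxiliary fact: any subsequence $\sequence{a_{\psi(i)}}{i}$ of a converging sequence $\sequence{a_i}{i}$ in a complete lattice is again converging, with the same join. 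This follows from the chain of inequalities
\begin{align*}
\biglatjoin_i a_i \ =\ \biglatjoin_i \biglatmeet_{k \geq \psi(i)} a_k \ \leq\ \biglatjoin_i \biglatmeet_{j \geq i} a_{\psi(j)} \ \leq\ \biglatjoin_i a_{\psi(i)} \ \leq\ \biglatjoin_i a_i,
\end{align*}
using $\{a_{\psi(j)} : j \geq i\} \subseteq \{a_k : k \geq \psi(i)\}$. Once this is in hand, convergence of the product sequence follows: the joins and meets distribute over the product coordinates, so the convergence equation in $S_U \times S_{U'}$ decomposes into the two individual convergence equations.

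For join preservation, the determinism of both factors means the product transition function is $\delta((s,s'), a) = (\delta_U(s,a), \delta_{U'}(s', a))$. Given $P \subseteq S_U \times S_{U'}$, the join $\biglatjoin P = (\biglatjoin \pi_1(P), \biglatjoin \pi_2(P))$ is componentwise, so applying $\delta(-, a)$ and using join preservation of $\delta_U(-, a)$ and $\delta_{U'}(-, a)$ coordinatewise yields $(\biglatjoin_{s \in \pi_1 P} \delta_U(s,a), \biglatjoin_{s' \in \pi_2 P} \delta_{U'}(s', a))$, which equals $\biglatjoin \delta(P, a)$ since the join over pair-indexed families coincides with the join over projected families.

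For finite acceptance, let $\biglatjoin K \in F_U \times F_{U'}$. The componentwise join gives $\biglatjoin \pi_1(K) \in F_U$ and $\biglatjoin \pi_2(K) \in F_{U'}$, so finite acceptance in the factors yields finite $N_1 \subseteq \pi_1(K)$ and $N_2 \subseteq \pi_2(K)$ whose joins land in $F_U$ and $F_{U'}$. For each element of $N_1$ and $N_2$, pick a witness in $K$ having that element as first or second coordinate, and let $N \subseteq K$ be the finite union of these witnesses. Then $\biglatjoin \pi_1(N) \geq \biglatjoin N_1 \in F_U$, and since $F_U$ is upward closed (upward compatibility) we get $\biglatjoin \pi_1(N) \in F_U$; symmetrically for the second coordinate. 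Hence $\biglatjoin N \in F_U \times F_{U'}$. The main obstacle of the whole argument is the convergence clause, specifically the closure of converging sequences under passing to subsequences, which is needed to make the diagonal extraction work; the remaining clauses are essentially bookkeeping on products of lattices.
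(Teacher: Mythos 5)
Your proof is correct, and it follows the only natural route: the paper itself defers the proof of this lemma to the full version, and the componentwise verification of the four CTS clauses, with the diagonal subsequence extraction for convergence, is exactly the intended argument. You correctly identify and prove the one non-trivial ingredient, namely that subsequences of converging sequences are again converging (with the same join), which is what makes the two-stage extraction sound.
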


We summarize the findings so far.
Given disjoint WSTS languages $\langof{\anultsp_1}\cap\langof{\anultsp_2}=\emptyset$, the goal is to show regular separability $\langof{\anultsp_1}\regsep\langof{\anultsp_2}$.
We first determinize both WSTS.
By Proposition~\ref{Proposition:Correspondence}, $\detof{\anultsp_1}$ and $\detof{\anultsp_2}$ are language-equivalent CTS.
We use Lemma~\ref{Lemma:CountableCTS} to obtain countable CTS $\anults_1$ and $\anults_2$ that still accept the orginal languages.
To show regular separability, we now intend to invoke Theorem~\ref{thm:core} on $\anults_1$ and~$\anults_2$.
Since CTS are deterministic by definition, it remains to show that $\anults_1\times\anults_2$ has a finitely-represented inductive invariant.
With Lemma~\ref{Lemma:ProductCTS}, $\anults_1\times\anults_2$ is another CTS $\anults$.
Moreover, the product corresponds to language intersection, and so $\langof{\anults}=\emptyset$.
By Lemma~\ref{Lemma:Invariants}, we know that~$\anults$ has an inductive invariant.
We now show how to turn this invariant into a finitely-represented one.

\section{Finitely Represented Invariants in CTS}\label{Section:FullRegSep}

We show the following surprising property for countable CTS: every inductive invariant $\inductiveinv$ can be generalized to an inductive invariant $\closureof{\inductiveinv}$ that is finitely represented.
The closure operator is defined by adding to $\inductiveinv$ the joins of all converging sequences:
\begin{align*}
\closureof{\inductiveinv}\quad=\quad\setcond{\ \biglatjoin_{i\in\nat}\astate_{i}}{\sequence{\astate_{i}}{i\in\nat}\text{ a converging sequence in }\inductiveinv\ }\ .
\end{align*}
\begin{prop}\label{Proof:FinitelyRepresentedInvariants}
Let $\anults$ be a countable CTS and $\inductiveinv$ an inductive invariant of $\anults$. Then also $\closureof{\inductiveinv}$ is an inductive invariant of $\anults$ and it is finitely represented.
\end{prop}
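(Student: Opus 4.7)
The plan is to handle the two claims separately. For the first, that $\closureof{\inductiveinv}$ is an inductive invariant, I would verify each condition in turn using the convergence machinery from Section~\ref{Section:CTS}. Downward closure reduces to applying Lemma~\ref{Lemma:ConvergenceContinuous} to the join-preserving map $x \mapsto t \wedge x$ (join preservation coming from complete distributivity) when cutting a converging sequence in $\inductiveinv$ below some $t$; each $t \wedge s_i$ remains in $\inductiveinv$ by downward closure, and the resulting converging sequence has join $t$. Containment of $\initstates$ is witnessed by constant sequences. Closure under $\transitions(-,a)$ is immediate from the defining join preservation of transitions together with Lemma~\ref{Lemma:ConvergenceContinuous}. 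Disjointness from $\finalstates$ is the one delicate condition: given a converging sequence $(s_i)$ in $\inductiveinv$ whose join lies in $\finalstates$, I would first pass to the monotone sequence of tail meets $\biglatmeet_{j \geq i} s_j$, which stays in $\inductiveinv$ by downward closure and preserves the join by the very definition of convergence; finite acceptance then supplies a finite subset whose join is in $\finalstates$, and by monotonicity this join is attained by a single element of $\inductiveinv$, contradicting disjointness of $\inductiveinv$ from $\finalstates$.

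For the finite representation, the route is to first establish that $\closureof{(\cdot)}$ is idempotent, i.e.\ $\closureof{\closureof{\inductiveinv}} = \closureof{\inductiveinv}$, and then to apply Zorn's lemma. Granting idempotence, every ascending chain in $\closureof{\inductiveinv}$ is countable (by countability of the state space), hence a monotone and therefore converging sequence whose join lies in $\closureof{\inductiveinv}$, giving Zorn its upper bounds. So every element lies below some maximal one. The set $\max \closureof{\inductiveinv}$ is an antichain; were it infinite, the converging lattice property would extract a converging subsequence whose join lies in $\closureof{\inductiveinv}$ (again by idempotence) and strictly exceeds its first term (the antichain consisting of pairwise incomparable and distinct elements), violating maximality. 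Finiteness of $\max \closureof{\inductiveinv}$ combined with Zorn yields $\closureof{\inductiveinv} = \downcls{\max \closureof{\inductiveinv}}$, the sought finite representation.

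The main obstacle is idempotence. Starting from a converging $(t_i)$ in $\closureof{\inductiveinv}$ with join $t$, and unfolding each $t_i = \biglatjoin_j s^{i}_{j}$ in $\inductiveinv$, I need to collapse the doubly indexed family $(s^{i}_{j}) \subseteq \inductiveinv$ into a single converging sequence in $\inductiveinv$ whose join is $t$. The obstruction is that $\inductiveinv$ is only downward closed, not closed under finite joins, so partial joins are unavailable. I would handle this by a diagonal construction that uses countability: first normalise both the outer sequence $(t_i)$ and each inner sequence $(s^{i}_{j})_j$ to monotone increasing via tail meets; then enumerate a countable generating system for $t$ in terms of join-irreducible elements below $t$ (equivalently, in the powerset instantiation $\downpowof{Q}$, the elements of the underlying WQO lying in $t$). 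At stage $\ell$ I would choose $I_\ell$ large enough so that the first $\ell$ generators lie below $t_{I_\ell}$, and then, using join-irreducibility together with monotonicity in $j$, choose $J_\ell$ large enough so that all $\ell$ generators lie below $u_\ell := s^{I_\ell}_{J_\ell} \in \inductiveinv$. The resulting sequence $(u_\ell)$ lies in $\inductiveinv$, eventually majorises each generator, and is thus converging with join exactly $t$.
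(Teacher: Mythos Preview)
Your approach matches the paper's closely: both verify invariance condition by condition using join preservation and finite acceptance, isolate idempotence of $\closureof{(-)}$ (the paper's Lemma~\ref{Lemma:IdempotenceAndDC}) as the key technical lemma, and then combine Zorn's lemma with the converging-lattice property to show that $\max\closureof{\inductiveinv}$ is a finite cover. Your idempotence sketch via completely join-prime generators and a diagonal extraction is more concrete than what the paper shows here (it defers that proof to the full version, noting only that complete distributivity is the ingredient); it is a reasonable route, since completely distributive lattices have enough completely join-prime elements and countability of the state space makes the generating system countable. The paper's organization differs slightly in that it abstracts the property ``contains the joins of its converging sequences'' into the notion of a \emph{closed} set and records separately that closed sets are preserved under removing a downward-closed set (Lemma~\ref{Lemma:DownDiffClosed}), but the substance is the same.

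There is, however, one genuine gap in your Zorn step. A countable chain is \emph{not} automatically ``a monotone and therefore converging sequence'': countable total orders can have order type $\omega\cdot 2$, or that of the rationals, and no enumeration of such a chain respects the order. The paper confronts exactly this point in the proof of Lemma~\ref{Lemma:Maximality}: given an arbitrary countable chain $P$, one writes $P$ as the union of an increasing sequence of finite subsets $P_i$, sets $p_i=\max P_i$, and observes that $(p_i)_i$ is now an honest increasing $\omega$-sequence with $\biglatjoin_i p_i=\biglatjoin P$. Only after this reduction do convergence and idempotence place the join back in $\closureof{\inductiveinv}$. Without it, your chain-completeness claim, and hence your invocation of Zorn, is incomplete.
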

The proposition concludes the proof of Theorem~\ref{Theorem:RegSep}.
We simply invoke it on the inductive invariant that exists by Lemma~\ref{Lemma:Invariants} as discussed above.
The rest of the section is devoted to the proof.
We fix a countable CTS $\anults=(\states, \leq, \initstate, \analph, \transitions, \finalstates)$
and an inductive invariant $\inductiveinv\subseteq\states$.

As Lemma~\ref{Lemma:IdempotenceAndDC} states, the closure is expansive and idempotent.
This means further applications do not add new limits. 
Here, we need the fact that we have a completely distributive lattice.
Moreover, the closure yields a downward-closed set.
The closure is also trivially monotonic, and hence an upper closure operator indeed~\cite[Section 11.7]{C21}, but we will not need monotonicity.
The proof of Lemma~\ref{Lemma:IdempotenceAndDC} is given in~\cite{keskinmeyer2023sep}.
\begin{lem}\label{Lemma:IdempotenceAndDC}
    $\inductiveinv\subseteq\completedinv{\inductiveinv}=\completedinv{\completedinv{\inductiveinv}}
    =\downcls{\completedinv{\inductiveinv}}$.
\end{lem}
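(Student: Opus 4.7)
The plan is to verify the three assertions in turn, all within the completely distributive converging lattice $(\states,\leq)$ and exploiting that $\inductiveinv$ is downward closed as part of being an inductive invariant.

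For the inclusion $\inductiveinv\subseteq\completedinv{\inductiveinv}$, the strategy is to use the constant sequence: for $s\in\inductiveinv$, take $s_i = s$, which is trivially converging since both sides of the equation in Definition~\ref{Definition:ConvergingLattice} collapse to $s$, witnessing $s\in\completedinv{\inductiveinv}$.

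For the equation $\completedinv{\inductiveinv}=\downcls{\completedinv{\inductiveinv}}$, the nontrivial direction is $\supseteq$. Given $t\leq\biglatjoin_i s_i$ for a converging sequence $(s_i)$ in $\inductiveinv$, the idea is to witness $t\in\completedinv{\inductiveinv}$ through the pointwise meet with $t$, namely the sequence $\bigl(\biglatmeet\{t,s_i\}\bigr)_i$. Each term still lies in $\inductiveinv$ because it sits below $s_i\in\inductiveinv$ and $\inductiveinv$ is downward closed, and both the join $\biglatjoin_{i}\biglatmeet\{t,s_i\}$ and the $\biglatjoin\biglatmeet$-expression $\biglatjoin_i\biglatmeet_{j\geq i}\biglatmeet\{t,s_j\}$ simplify to $t$ by pulling $t$ outside the iterated joins and meets using complete distributivity and the convergence of $(s_i)$.

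Finally, idempotence $\completedinv{\inductiveinv}=\completedinv{\completedinv{\inductiveinv}}$: the inclusion $\subseteq$ is just expansiveness applied to $\completedinv{\inductiveinv}$. For $\supseteq$, I take $s=\biglatjoin_k x_k$ with $(x_k)$ a converging sequence in $\completedinv{\inductiveinv}$, and write each $x_k=\biglatjoin_\ell y_{k,\ell}$ for a converging sequence $(y_{k,\ell})_\ell$ in $\inductiveinv$. This is the main obstacle: I need a \emph{single} converging sequence in $\inductiveinv$ whose join is $s$, whereas an arbitrary enumeration of the doubly-indexed family $\{y_{k,\ell}\}$ is only guaranteed by Definition~\ref{Definition:ConvergingLattice} to admit a converging subsequence, which a priori may have a strictly smaller join. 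My plan is to construct such a sequence diagonally, choosing indices $(k_n,\ell_n)$ so that they are cofinal both in $k$ and, for every fixed $k$, in $\ell$, and then to rearrange the iterated joins and meets using complete distributivity together with the convergence of the outer sequence $(x_k)$ and of each inner sequence $(y_{k,\ell})_\ell$, ultimately identifying the $\biglatjoin\biglatmeet$-value of the constructed sequence with $\biglatjoin_k\biglatmeet_{j\geq k}x_j = s$. Countability of the state space (Lemma~\ref{Lemma:CountableCTS}) ensures such an enumeration exists.
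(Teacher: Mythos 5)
Your arguments for $\inductiveinv\subseteq\completedinv{\inductiveinv}$ (constant sequences) and for $\completedinv{\inductiveinv}=\downcls{\completedinv{\inductiveinv}}$ (meeting the converging sequence pointwise with $t$ and pulling $t$ out of the iterated joins and meets by complete distributivity) are correct and complete. Note that the paper defers the proof of this lemma to the full version, so there is no in-text argument to compare against; the question is whether your sketch of the hard inclusion $\completedinv{\completedinv{\inductiveinv}}\subseteq\completedinv{\inductiveinv}$ actually closes.

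It does not yet. The selection criterion you state for the diagonal~--~indices $(k_n,\ell_n)$ cofinal in $k$ and, for each fixed $k$, in $\ell$~--~is not sufficient, and no rearrangement of joins and meets by distributivity can repair this on its own. Concretely, in $\downpowof{\states}$ take $x_k=\downcls{\set{e}}$ for all $k$ and $y_{k,\ell}=\emptyset$ for $\ell\leq k$, $y_{k,\ell}=\downcls{\set{e}}$ for $\ell>k$: each inner sequence is increasing and converging with join $x_k$, yet the diagonal $y_{n,n}$ (cofinal in both coordinates) is constantly $\emptyset$ and its join misses $s$. The missing ingredient is to diagonalize against a countable family of lower approximants of $s$. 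First normalize: replace every converging sequence by the increasing sequence of its tail-meets $\biglatmeet_{j\geq i}(\cdot)_j$; these stay inside the ambient set because it is downward closed (for the outer sequence this uses $\downcls{\completedinv{\inductiveinv}}=\completedinv{\inductiveinv}$, so that part must come first) and have the same join by convergence. Then fix a countable set $B$ with $\biglatjoin B=s$ such that each $b\in B$ with $b\leq x_n$ lies below some $y_{n,\ell}$: in the concrete lattice $\downpowof{\states}$ over a countable WQO the singletons of elements of $s$ do the job, while in the abstract setting one needs the way-below machinery of completely distributive lattices (every element is the join of its way-way-below approximants, and there are countably many since the CTS is countable). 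Enumerating $B=\set{b_1,b_2,\dots}$ and choosing $\ell_n$ so large that $b_m\leq y_{n,\ell_n}$ for every $m\leq n$ with $b_m\leq x_n$ puts each $b_m$ below cofinitely many diagonal terms, whence $\biglatjoin_{n}\biglatmeet_{j\geq n}y_{j,\ell_j}\geq\biglatjoin B=s\geq\biglatjoin_{n}y_{n,\ell_n}$; this shows the diagonal sequence is itself converging with join exactly $s$, with no need to extract a further converging subsequence (which, as you rightly worry, could shrink the join). Without this quantitative choice of $\ell_n$, the step "identifying the $\biglatjoin\biglatmeet$-value of the constructed sequence with $s$" is exactly where the argument breaks.
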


Towards showing Proposition~\ref{Proof:FinitelyRepresentedInvariants}, we first argue for invariance.

\begin{lem}
$\closureof{\inductiveinv}$ is an inductive invariant.
\end{lem}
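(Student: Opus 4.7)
The plan is to verify, in turn, the three invariant conditions for $\closureof{\inductiveinv}$ --- initial containment, successor closure, and disjointness from $F$ --- while Lemma~\ref{Lemma:IdempotenceAndDC} takes care of downward closure. Property~\eqref{eq:iiI} is immediate, since $\initstate \in \inductiveinv \subseteq \closureof{\inductiveinv}$ by that lemma.

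For property~\eqref{eq:iiSucc}, I would take an arbitrary $x = \biglatjoin_{i\in\nat} s_i \in \closureof{\inductiveinv}$ witnessed by a converging sequence $\sequence{s_i}{i\in\nat}$ in $\inductiveinv$, fix a letter $a \in \analph$, and apply join preservation of the CTS transition $\transitions(-, a)$ to get $\transitions(x, a) = \biglatjoin_{i\in\nat} \transitions(s_i, a)$. Lemma~\ref{Lemma:ConvergenceContinuous} then tells me that $\sequence{\transitions(s_i, a)}{i\in\nat}$ is itself a converging sequence, and since $\inductiveinv$ is successor-closed each $\transitions(s_i, a)$ lies in $\inductiveinv$. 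Hence $\transitions(x, a) \in \closureof{\inductiveinv}$, as required.

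The interesting step --- and the one I expect to be the main obstacle --- is property~\eqref{eq:iiF}, because $\inductiveinv$ is only downward closed, not closed under finite joins, so there is no direct way to descend from a putative ``final'' join $\biglatjoin_i s_i$ back to a single final element of $\inductiveinv$. My trick is to rewrite $x$ via the convergence equation in the meet-based form $x = \biglatjoin_{i\in\nat} T_i$ with $T_i := \biglatmeet_{j \geq i} s_j$, and to exploit that $\sequence{T_i}{i\in\nat}$ is monotone increasing (meets over smaller tails can only grow). Arguing by contradiction from $x \in F \cap \closureof{\inductiveinv}$, finite acceptance applied to $K = \setcond{T_i}{i \in \nat}$ produces a finite $N \subseteq K$ with $\biglatjoin N \in F$, and monotonicity of $\sequence{T_i}{i\in\nat}$ collapses this join to a single $T_m$. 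Since $T_m \leq s_m$ and $F$ is upward closed, this forces $s_m \in F$, contradicting $s_m \in \inductiveinv$ together with $F \cap \inductiveinv = \emptyset$. The whole argument hinges on invoking the meet-based side of the convergence equation rather than the raw join, which converts an uncontrollable infinite join into an ascending chain to which finite acceptance can be usefully applied.
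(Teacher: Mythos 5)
Your proposal is correct and follows essentially the same route as the paper's proof: the successor-closure argument via join preservation and Lemma~\ref{Lemma:ConvergenceContinuous} is identical, and for disjointness from $F$ the paper likewise rewrites the join using the meet-based side of the convergence equation, applies finite acceptance to the tails $\biglatmeet_{j\geq i}\astate_j$, collapses the finite join by monotonicity to a single tail, and uses upward closedness of $F$ to push the contradiction down to some $\astate_k\in\inductiveinv$.
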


\begin{proof}
    To prove that $\closureof{\inductiveinv}$ is an inductive invariant, we must show two properties
    for the joins $\biglatjoin_{i\in\nat}{\astate_i}=\astate$ of converging sequences $\sequence{\astate_{i}}{i\in\nat}$ in~$\inductiveinv$ that we added.
    First, we must show that we do not leave $\completedinv{\inductiveinv}$ when taking transitions,
    $\apply{\astate}{\aletter}\in
    \completedinv{\inductiveinv}$
    for all $\aletter\in\analph$.
    Second, we must show that the join is not a final state.
    We begin with the latter.
    Towards a contradiction, suppose $\astate\in\finalstates$.
    Convergence yields $\biglatjoin_{i\in\nat}\biglatmeet_{j\geq i}\astate_j\in\finalstates$.
    By the finite acceptance property of CTS, there must be a finite set $K\subseteq\nat$ with
    $k=\max K$ so that
    \[
        \biglatjoin_{i\in K}\biglatmeet_{j \geq i}\astate_j\ =\
        \biglatmeet_{j\geq k} \astate_j
        \in\finalstates\ .
    \]
    Since $\biglatmeet_{j\geq k}\astate_j \leq \astate_k$ and $\finalstates$ is upward closed, we obtain $\astate_k\in\finalstates$.
    This is a contradiction: $\astate_k$ belongs to the inductive invariant
    $\inductiveinv$ and the invariant does not intersect the final states.

    To show $\apply{\astate}{\aletter}\in\completedinv{\inductiveinv}$, we use
    \begin{align*}
        \transitions(\astate, \aletter)\ =\
        \transitions(\biglatjoin_{i\in \nat}\astate_i, \aletter)\ =\
        \biglatjoin_{i\in\nat}\transitions(\astate_i, \aletter)\ \in\ \completedinv{\inductiveinv}\ .
    \end{align*}
    The first equality is by the definition of $\astate$, the following is by the fact that the transition function $\transitions(-, \aletter)$ is join preserving. 
    For membership in the closure, note that $\sequence{\astate_i}{i\in\nat}$ converges and the transition function is join preserving.
    By Lemma~\ref{Lemma:ConvergenceContinuous}, also  $\sequence{\transitions(\astate_i, \aletter)}{i\in\nat}$ converges.
    As~$\inductiveinv$ is an invariant and $\astate_i\in\inductiveinv$, we have $\transitions(\astate_i, \aletter)\in\inductiveinv$ for all $i\in\nat$.
    This means $\sequence{\transitions(\astate_i, \aletter)}{i\in\nat}$ is a converging sequence in the invariant, and so its join is added to the closure.
\end{proof}

It only remains to show that $\completedinv{\inductiveinv}$ is finitely represented.
In the following, we refer to a set $\acover\subseteq\completedinv{\inductiveinv}$ with $\downcls{\acover}=\completedinv{\inductiveinv}$ as a \emph{cover} of $\completedinv{\inductiveinv}$.

\begin{prop}\label{Proposition:FiniteCover}
    There is a finite cover of $\completedinv{\inductiveinv}$.
\end{prop}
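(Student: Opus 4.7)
The plan is to take $\acover = M$, the set of maximal elements of $\completedinv{\inductiveinv}$, and show both that $\completedinv{\inductiveinv} = \downcls{M}$ and that $M$ is finite. Both parts rely on a single closure property, which I extract up front: whenever $\sequence{x_i}{i\in\nat}$ is a converging sequence in $\completedinv{\inductiveinv}$, its join belongs to $\completedinv{\inductiveinv}$. This is immediate from Lemma~\ref{Lemma:IdempotenceAndDC}, since the definition of the closure operator identifies $\completedinv{\completedinv{\inductiveinv}}$ with exactly such joins, and idempotence gives $\completedinv{\completedinv{\inductiveinv}} = \completedinv{\inductiveinv}$.

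For $\completedinv{\inductiveinv} = \downcls{M}$, I apply Zorn's lemma to $(\completedinv{\inductiveinv}, \leq)$. Because the CTS is countable, every chain $C \subseteq \completedinv{\inductiveinv}$ is countable; enumerate it as $c_0, c_1, \ldots$ and set $d_n = c_0 \vee \cdots \vee c_n$. Since chains are totally ordered, each $d_n$ is just the maximum of $\{c_0, \ldots, c_n\}$, so $d_n$ already lies in $C \subseteq \completedinv{\inductiveinv}$. The resulting non-decreasing sequence satisfies $\biglatmeet_{j\geq n} d_j = d_n$, hence is converging, with $\biglatjoin_n d_n = \biglatjoin C$. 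The closure property places $\biglatjoin C$ in $\completedinv{\inductiveinv}$, providing the required upper bound. Zorn produces a maximal element above every $c \in \completedinv{\inductiveinv}$, and combined with the downward closure of $\completedinv{\inductiveinv}$ from Lemma~\ref{Lemma:IdempotenceAndDC} this yields $\completedinv{\inductiveinv} = \downcls{M}$.

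For finiteness of $M$, suppose instead that $M$ contains a pairwise distinct sequence $m_0, m_1, \ldots$. The converging-lattice property of $(\states, \leq)$ extracts a converging subsequence $\sequence{m_{\varphi(i)}}{i\in\nat}$, and by the closure property its join $s$ again lies in $\completedinv{\inductiveinv}$. Since $s \geq m_{\varphi(i)}$ and $m_{\varphi(i)}$ is maximal in $\completedinv{\inductiveinv}$, we obtain $m_{\varphi(i)} = s$ for every $i$, contradicting pairwise distinctness.

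The step I expect to be the main obstacle is the Zorn's lemma application, specifically the passage from an arbitrary chain to an $\nat$-indexed converging sequence with the same join. It is here that the countability assumption of Proposition~\ref{Proof:FinitelyRepresentedInvariants} is genuinely used, since for chains of uncountable cofinality one could not in general reduce to a sequence to which the closure property applies.
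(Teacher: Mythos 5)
Your proof is correct and follows essentially the same route as the paper: chain completeness of $\completedinv{\inductiveinv}$ (obtained because increasing sequences are converging and the closure is idempotent) feeds Zorn's lemma to produce the cover by maximal elements, and extracting a converging subsequence from infinitely many maximal elements yields the contradiction that bounds their number. The only differences are cosmetic streamlinings: you invoke the ``every element lies below a maximal one'' form of Zorn, which lets you bypass the paper's auxiliary Lemma~\ref{Lemma:DownDiffClosed} and its remainder-set contradiction, and your running-maxima construction $d_n = c_0\vee\cdots\vee c_n$ gives an elementary replacement for the paper's citation of Markowsky's theorem when reducing countable chains to increasing sequences.
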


We break down the proof of Proposition~\ref{Proposition:FiniteCover}
into two steps.
First, we show that $\completedinv{\inductiveinv}$
can be covered by an antichain.
Then, we show that infinite antichain covers do not exist.
This implies that there must be a finite antichain cover.
The proofs reason over \emph{closed} sets, sets that contain the limits of their converging sequences.
We rely on the fact that closed sets have at least one maximal element.
\begin{lem}\label{Lemma:Maximality}
    Consider
    $G\subseteq\states$ closed and non-empty. Then $\max G\neq\emptyset$.
\end{lem}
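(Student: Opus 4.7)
My plan is to prove the lemma by an application of Zorn's lemma, using closure under joins of converging sequences to supply upper bounds for chains, and using countability of the state space (inherited from the fact that we work with a countable CTS, as fixed after Proposition~\ref{Proposition:FiniteCover}) to ensure that every chain in $G$ is countable.

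The first step is to take an arbitrary chain $C \subseteq G$ and produce an upper bound for $C$ inside $G$. Since $\states$ is countable, so is $C$. If $C$ is finite, the chain has a largest element, which is its upper bound and already lies in $G$. Otherwise, enumerate $C$ as $\sequence{c_i}{i\in\nat}$ and define $d_i = \max\{c_0, \dots, c_i\}$, which exists in $C$ because $C$ is totally ordered. The sequence $\sequence{d_i}{i\in\nat}$ is weakly increasing, hence converging: for an increasing sequence one has $\biglatmeet_{j\geq i} d_j = d_i$, so
\[
\biglatjoin_{i\in\nat}\biglatmeet_{j\geq i} d_j \;=\; \biglatjoin_{i\in\nat} d_i.
\]
Moreover, since $c_i \leq d_i$ for every $i$, and each $d_i$ is one of the $c_j$'s, one checks that $\biglatjoin_{i\in\nat} d_i = \biglatjoin C$. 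Hence $\biglatjoin C$ is the join of a converging sequence inside $G$, and closedness of $G$ gives $\biglatjoin C \in G$. This $\biglatjoin C$ is the required upper bound of $C$.

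Having shown that every chain in the non-empty set $G$ has an upper bound in $G$, Zorn's lemma produces a maximal element of $G$, so $\max G \neq \emptyset$, as required.

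The only delicate point is the countability of chains; once that is in hand, the rest is mechanical, because the defining property of closed sets (containing joins of converging sequences from the set) is exactly tailored to give an upper bound to an enumerated increasing sequence obtained from a chain. If one wanted to avoid countability, one would need a more topological argument in completely distributive lattices, but in our setting the countability assumption made for the CTS is sufficient and avoids this detour.
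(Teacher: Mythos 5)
Your proof is correct and follows essentially the same route as the paper's: establish that $G$ is chain complete by reducing an arbitrary (countable) chain to an increasing sequence, observe that increasing sequences converge and hence their joins stay in the closed set $G$, and then apply Zorn's lemma. The only difference is that where the paper cites Markowsky's theorem to obtain an increasing sequence of finite subsets exhausting the chain, you build that sequence directly by enumerating the chain and taking running maxima $d_i=\max\{c_0,\dots,c_i\}$ --- a harmless, slightly more self-contained variant of the same step.
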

Moreover, closedness is preserved when subtracting a downward-closed set.
\begin{lem}\label{Lemma:DownDiffClosed}
    Consider $G, H\subseteq\states$
    where $G$ is closed.
    Then $G\setminus\downcls{H}$ is closed.
\end{lem}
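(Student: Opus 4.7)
The plan is to unfold the definition of closedness directly: I must show that whenever $\sequence{\astate_i}{i\in\nat}$ is a converging sequence living entirely in $G \setminus \downcls{H}$, its join $\astate = \biglatjoin_{i\in\nat}\astate_i$ also lies in $G \setminus \downcls{H}$. Membership in $G$ is immediate from the hypothesis that $G$ is closed, so the real content is to show $\astate \notin \downcls{H}$.

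For this I would argue by contradiction. Suppose $\astate \in \downcls{H}$, so there is some $h \in H$ with $\astate \leq h$. Since $\astate$ is the least upper bound of the sequence, each term satisfies $\astate_i \leq \astate \leq h$, which puts $\astate_i \in \downcls{H}$. This contradicts the assumption that every $\astate_i$ belongs to $G \setminus \downcls{H}$. Therefore $\astate \notin \downcls{H}$, and combined with $\astate \in G$ we obtain $\astate \in G \setminus \downcls{H}$, which is exactly closedness.

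I do not anticipate any real obstacle: the only facts used are that the join is an upper bound of the sequence, that $\downcls{H}$ is downward closed by construction, and that $G$ is closed by assumption. Notably, the argument does not require anything about $H$ itself (not even that it is a subset related to the invariant), which is consistent with the way the lemma will be used in the proof of Proposition~\ref{Proposition:FiniteCover}, where $H$ ranges over arbitrarily chosen subsets of the cover being trimmed.
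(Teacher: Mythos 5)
Your proposal is correct and follows essentially the same argument as the paper: membership of the join in $G$ comes from closedness of $G$, and membership in $\downcls{H}$ would propagate down to every $\astate_i$ since the join is an upper bound of the sequence, contradicting the assumption. No gaps.
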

We postpone the proofs of these lemmas until after the proof of
Proposition~\ref{Proposition:FiniteCover}.

\begin{lem}
    There is an antichain cover of $\completedinv{\inductiveinv}$.
\end{lem}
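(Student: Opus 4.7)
The plan is to propose $\acover := \max \completedinv{\inductiveinv}$, the set of maximal elements of the closed invariant, as the antichain cover. Being an antichain is automatic from the definition of $\max$. Moreover, $\acover \subseteq \completedinv{\inductiveinv}$ and Lemma~\ref{Lemma:IdempotenceAndDC} tells us that $\completedinv{\inductiveinv} = \downcls{\completedinv{\inductiveinv}}$ is downward closed, so $\downcls{\acover} \subseteq \completedinv{\inductiveinv}$ is for free. Only the reverse inclusion $\completedinv{\inductiveinv} \subseteq \downcls{\acover}$ requires work.

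I will prove this direction by contradiction. Assume
\[
H \ := \ \completedinv{\inductiveinv} \setminus \downcls{\acover}
\]
is non-empty. The key observation is that $\completedinv{\inductiveinv}$ is itself closed in the sense of Section~\ref{Section:FullRegSep}, because Lemma~\ref{Lemma:IdempotenceAndDC} gives $\completedinv{\completedinv{\inductiveinv}} = \completedinv{\inductiveinv}$ — so joins of converging sequences inside it stay inside. Lemma~\ref{Lemma:DownDiffClosed}, applied with $G := \completedinv{\inductiveinv}$ closed and the downward-closed subtrahend $\downcls{\acover}$, then yields that $H$ is closed. Applying Lemma~\ref{Lemma:Maximality} to the non-empty closed set $H$ produces a maximal element $y \in H$.

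To derive the contradiction, I show that $y$ is in fact maximal in the whole of $\completedinv{\inductiveinv}$, so $y \in \acover \subseteq \downcls{\acover}$, which is incompatible with $y \in H$. Suppose some $z \in \completedinv{\inductiveinv}$ satisfies $z > y$. Either $z \in H$, contradicting maximality of $y$ within $H$; or $z \in \downcls{\acover}$, in which case some $m \in \acover$ satisfies $y < z \leq m$, placing $y$ in $\downcls{\acover}$ and again contradicting $y \in H$. The main obstacle here is not a calculation but recognising the right pairing of the two preceding lemmas: one must read Lemma~\ref{Lemma:IdempotenceAndDC} as saying $\completedinv{\inductiveinv}$ is closed (so Lemma~\ref{Lemma:DownDiffClosed} is applicable to it) and then notice that maximality inside $H$ automatically promotes to maximality inside $\completedinv{\inductiveinv}$ by downward-closedness of $\downcls{\acover}$. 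Once this is spotted, the remainder is a short case split.
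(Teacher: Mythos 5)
Your proposal is correct and follows essentially the same route as the paper's proof: the candidate cover $\max\completedinv{\inductiveinv}$, the observation that $\completedinv{\inductiveinv}$ is closed and downward closed by Lemma~\ref{Lemma:IdempotenceAndDC}, and the combination of Lemma~\ref{Lemma:DownDiffClosed} with Lemma~\ref{Lemma:Maximality} to extract a maximal element of the residual set and derive a contradiction via the same two-case split. The only difference is presentational: you phrase the contradiction as promoting maximality in $H$ to maximality in $\completedinv{\inductiveinv}$, while the paper argues directly that the maximal element of the residual set cannot have a strict upper bound in either part of the partition.
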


\begin{proof}
    We claim that the maximal elements $\max\completedinv{\inductiveinv}$
    form an antichain cover of $\completedinv{\inductiveinv}$.
    It is clear that $\max\completedinv{\inductiveinv}$ is an antichain.
    Since $\completedinv{\inductiveinv}$ is downward closed by Lemma~\ref{Lemma:IdempotenceAndDC},
    we also have $\downcls{(\max\completedinv{\inductiveinv})}
    \subseteq\completedinv{\inductiveinv}$.
    To see that $\max\completedinv{\inductiveinv}$ is
    a cover, let $G=\completedinv{\inductiveinv}\setminus
    \downcls{(\max\completedinv{\inductiveinv})}$ and suppose $G\neq\emptyset$.
    Lemma~\ref{Lemma:DownDiffClosed} tells us that $G$ is closed.
    By Lemma~\ref{Lemma:Maximality}, we get $\max G\neq\emptyset$.
    Consider $\astate\in \max G$.
    By the definition of $G$, we have $\astate\not\in\max\completedinv{\inductiveinv}$.
    Then, however, there must be $\astatep\in\completedinv{\inductiveinv}$
    with $\astate\leq\astatep$ and $\astate\neq\astatep$.
    If $\astatep\in\downcls{(\max\completedinv{\inductiveinv})}$,
    then $\astate\in\downcls{(\max\completedinv{\inductiveinv})}$
    as well, which is a contradiction to $\astate\in G$.
    If conversely
    $\astatep\in
    \completedinv{\inductiveinv}\setminus\downcls{(\max\completedinv{\inductiveinv})}
    =G$,
    then we have a contradiction to $p\in\max G$.
\end{proof}

Now we prove the second part of Proposition~\ref{Proposition:FiniteCover}.
%
%
\begin{lem}
    There is no infinite antichain cover of $\completedinv{\inductiveinv}$.
\end{lem}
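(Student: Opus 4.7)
The plan is to derive a contradiction by using the converging lattice structure to produce, from any infinite antichain cover, a join element that collapses two members of the antichain. Concretely, suppose $\acover = \setcond{\astate_i}{i\in\nat} \subseteq \completedinv{\inductiveinv}$ is an infinite antichain cover of $\completedinv{\inductiveinv}$. Since the antichain is infinite, we may assume its elements are pairwise distinct. The state space $(\states, \leq)$ is a converging lattice, so the sequence $\sequence{\astate_i}{i\in\nat}$ admits a converging subsequence $\sequence{\astate_{\varphi(i)}}{i\in\nat}$.

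Now I would form the join $\astate = \biglatjoin_{i\in\nat} \astate_{\varphi(i)}$. The first key step is to argue that $\astate \in \completedinv{\inductiveinv}$. This is exactly what idempotence of the closure, established in Lemma~\ref{Lemma:IdempotenceAndDC} (namely $\completedinv{\completedinv{\inductiveinv}} = \completedinv{\inductiveinv}$), gives us: the subsequence lies in $\completedinv{\inductiveinv}$ and is converging, so its join belongs to the closure of $\completedinv{\inductiveinv}$, which is $\completedinv{\inductiveinv}$ itself.

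Next, because $\acover$ is a cover of $\completedinv{\inductiveinv}$, there must be an index $j$ with $\astate \leq \astate_j$. From the definition of the join we have $\astate_{\varphi(i)} \leq \astate$ for every $i \in \nat$, so by transitivity $\astate_{\varphi(i)} \leq \astate_j$ for every $i$. Since $\acover$ is an antichain and both $\astate_{\varphi(i)}$ and $\astate_j$ lie in $\acover$, the comparability $\astate_{\varphi(i)} \leq \astate_j$ forces $\astate_{\varphi(i)} = \astate_j$ for every $i$. But the subsequence was chosen from pairwise distinct antichain elements, so the values $\astate_{\varphi(i)}$ are infinitely many distinct elements, contradicting their all being equal to the single element $\astate_j$.

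The main obstacle is the first step above: one must be confident that the join of a converging sequence inside $\completedinv{\inductiveinv}$ stays inside $\completedinv{\inductiveinv}$. Without idempotence this would not be immediate, since the definition of $\completedinv{\inductiveinv}$ refers to converging sequences drawn from $\inductiveinv$, not from its closure. Everything else is a short combinatorial argument exploiting the incomparability of antichain members together with the existence of a dominating element guaranteed by the cover property.
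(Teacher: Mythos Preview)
Your proof is correct and follows essentially the same route as the paper: extract a converging subsequence from the infinite antichain, use idempotence of the closure (Lemma~\ref{Lemma:IdempotenceAndDC}) to place its join back in $\completedinv{\inductiveinv}$, then use the cover property to find a dominating antichain element and derive a contradiction with incomparability. The only cosmetic difference is that the paper phrases the final contradiction as finding two comparable distinct elements of $\acover$ rather than forcing all subsequence terms to equal a single element, but the logic is identical.
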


\begin{proof}
    Suppose there is an infinite
    antichain cover $\acover\subseteq\completedinv{\inductiveinv}$.
    Then, there is an infinite sequence
    $\sequence{\astate_i}{i\in \nat}$ in $\acover$
    in which no entry repeats.
    By Definition~\ref{Definition:ConvergingLattice},
    it has an infinite converging subsequence
    $\sequence{\astate_{\varphi(i)}}{i\in \nat}$.
    The closure operator adds
    $\biglatjoin_{i\in \nat}\astate_{\varphi(i)}$ to $\completedinv{\inductiveinv}$.
    Since $\acover$ is a cover of $\completedinv{\inductiveinv}$,
    there must be $\astatep\in\acover$ with
    $\biglatjoin_{i\in \nat}\astate_{\varphi(i)}\leq\astatep$.
    Hence, $\astate_{\varphi(i)}\leq\astatep$ for all~$i\in\nat$.
    Either $\astate_{\varphi(0)} = \astatep$, in which case $\astate_{\varphi(1)} < \astatep = \astate_{\varphi(0)}$, or $\astate_{\varphi(0)} < \astatep$.
    In both cases, we have found two comparable distinct elements of $\acover$.
    This contradicts the antichain property.
\end{proof}
We conclude by showing Lemma~\ref{Lemma:Maximality} and~\ref{Lemma:DownDiffClosed}.

\begin{proof}[Proof of Lemma~\ref{Lemma:Maximality}]
    Let $\emptyset\neq G\subseteq\states$ be closed.
    We prove $G$ chain complete, meaning for every chain $P\subseteq G$ the limit $\biglatjoin P$ is again in $G$.
    Then Zorn's lemma~\cite{SetTheory} applies and yields $\max G\neq \emptyset$.
    We have Zorn's lemma, because we agreed on the Axiom of Choice.
    Towards chain completeness, consider an increasing sequence $\sequence{\astate_i}{i\in\nat}$ in $G$.
    We prove that $\biglatjoin_{i\in\nat} \astate_i\in G$.
    For any $i\in\nat$, we have $\biglatmeet_{j\geq i}\astate_j=\astate_{i}$.
    Hence, replacing each meet with the smallest element shows convergence.
    Since $\sequence{\astate_i}{i\in\nat}$ converges
    and $G$ is closed, we have $\biglatjoin_{i\in\nat}\astate_i\in G$.

    Although we are in a countable setting, the argument for sequences does not yet cover all chains.
    The problem is that the counting processs may not respect the order.
    To see this, consider a chain $P\subseteq G$ of order type $\omega\cdot 2$.
    The chain is countable, but no counting process can respect the order.
    We now argue that still $\biglatjoin P\in G$.
    By \cite[Theorem~1]{ChainCompleteMarkowsky}, there is a (wrt.~inclusion)
    increasing sequence of subsets
    $\sequence{P_i}{i\in\nat}$ in~$\powof{P}$, where each $P_{i}$ is finite and $\bigcup_{i\in\nat}P_{i}=P$.
    Finite chains contain maximal elements, so let $p_i = \max P_{i}
    =\biglatjoin P_{i}$.
    Then
    \[
        \biglatjoin P\ =\ \biglatjoin \bigcup_{i\in\nat} P_i
        \ =\ \biglatjoin_{i\in\nat} \biglatjoin P_i\ =\ \biglatjoin_{i\in\nat} p_i\ .
    \]
    Since $\sequence{P_i}{i\in\nat}$ is an increasing sequence,
    $\sequence{p_i}{i\in\nat}$ is also an increasing
    sequence.
    As we have shown before, $\biglatjoin_{i\in\nat} p_i \in G$.
    This concludes the proof.
\end{proof}


\begin{proof}[Proof of Lemma~\ref{Lemma:DownDiffClosed}]
    Consider $G, H\subseteq\states$ with $G$ closed.
    We show that $G\setminus\downcls{H}$ is closed.
    Let $\sequence{\astate_i}{i\in\nat}$ be  a converging sequence in $G\setminus\downcls{H}$.
    Let $\astatep=\biglatjoin_{i\in\nat}\astate_i$ and
    suppose $\astatep\not\in G\setminus\downcls{H}$.
    Since $G$ is closed, $\astatep\in G$.
    Then necessarily~$\astatep\in\downcls{H}$.
    But by definition, $\astate_i\leq\astatep$
    for all $i\in\nat$.
    So $\astate_i\in\downcls{H}$ as well.
    This contradicts the fact that the sequence $\sequence{\astate_{i}}{i\in\nat}$ lives in $G\setminus\downcls{H}$.
\end{proof}

\section{A Closer Look at WSTS Determinization}\label{Section:NonDet}

We explore the expressive power of non-determinism in the context of WSTS.
The main finding is that there are \wsts\ languages that, as we prove, cannot be accepted by a deterministic WSTS.
This reinforces the proof of our regular separability result via CTS.
The non-determinizability result is non-trivial and requires a novel characterization of the deterministic WSTS languages.

As a second result, we identify important subclasses of WSTS that admit determinization.
This motivates us to reconsider (in the next section and for these subclasses) the construction of a finitely-represented inductive invariant, and hence a regular separator.
\subsection{Determinizability}
We begin with the positive results and show that finitely branching WSTS and $\omega^{2}$-WSTS are language-equivalent to deterministic WSTS.
Recall that the best studied WSTS models are in fact $\omega^{2}$-\WSTSes, meaning they all can be determinized.
The positive results also play a role in the proof of our non-determinizability result.
They inform us which properties a language outside $\langof{\updwsts}$ should not have.
\begin{thm}\label{Theorem:PositiveExpressivity}
    $\langof{\finbranchwsts}, \; \langof{\omegasqwsts}\subseteq \langof{\dwsts}$
\end{thm}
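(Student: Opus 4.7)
The plan is to apply the powerset-style determinization from Lemma~\ref{Lemma:Determinization} in both cases and then show that, under the respective assumption, the resulting deterministic ULTS is actually a WSTS, i.e., that the state space used by the construction is a WQO. The language equivalence is already delivered by Lemma~\ref{Lemma:Determinization}, so the real content is to maintain the WQO property through determinization. The two hypotheses---being an $\omega^2$-\wqo and being finitely branching---are tailored, via Lemmas~\ref{lem:omega2} and~\ref{c:pfin}, exactly to the two natural state spaces one can use.

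For an $\omega^2$-\wsts $\anults$ over $(\states,\leq)$, I would use the determinization $\detof{\anults}$ unchanged. Its state space is $(\downpowof{\states},\subseteq)$, which is a \wqo precisely by Lemma~\ref{lem:omega2}. Upward compatibility of $\detof{\anults}$ with respect to $\subseteq$ is immediate from the monotonicity of $\transitions$ on sets and from $\detof{\finalstates}$ being upward closed under $\subseteq$ by construction. Hence $\detof{\anults}$ is a deterministic \wsts with $\langof{\detof{\anults}}=\langof{\anults}$.

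For a finitely branching \wsts\ $\anults$, the space $\downpowof{\states}$ need not be a \wqo, so I would restrict the determinization to $\findownpowof{\states}$, defining $\anults^{\mathrm{fd}}=(\findownpowof{\states},\subseteq,\downcls{\initstates},\analph,\detof{\transitions},\finalstates^{\mathrm{fd}})$ with $\finalstates^{\mathrm{fd}}=\setcond{\downcls{Q}\in\findownpowof{\states}}{\downcls{Q}\cap\finalstates\neq\emptyset}$. Finite branching gives $\downcls{\initstates}\in\findownpowof{\states}$, and the key closure argument is that $\detof{\transitions}(\downcls{Q},\aletter)=\downcls{\transitions(Q,\aletter)}$ for finite $Q$: the inclusion $\supseteq$ is trivial, and $\subseteq$ uses upward compatibility to replace each element in $\downcls{Q}$ by a dominating element of $Q$ whose $\aletter$-successors dominate the original successors. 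Since $Q$ is finite and each $\transitions(q,\aletter)$ is finite, $\transitions(Q,\aletter)$ is finite, so $\anults^{\mathrm{fd}}$ stays in $\findownpowof{\states}$. By Lemma~\ref{c:pfin}, $(\findownpowof{\states},\subseteq)$ is a \wqo, and the language equivalence argument of Lemma~\ref{Lemma:Determinization} carries over verbatim because every reachable state of $\detof{\anults}$ is already in $\findownpowof{\states}$.

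The main obstacle, such as it is, lies entirely in the finitely branching case: one has to commit to the restricted state space $\findownpowof{\states}$ (so that Lemma~\ref{c:pfin}, rather than the non-applicable Lemma~\ref{lem:omega2}, can be invoked) and then verify that this restriction is stable under the determinized transitions. Once the identity $\detof{\transitions}(\downcls{Q},\aletter)=\downcls{\transitions(Q,\aletter)}$ is established using upward compatibility and finite branching, everything else---upward compatibility of $\subseteq$, upward closure of $\finalstates^{\mathrm{fd}}$, and language preservation---reduces to the same routine checks as in the proof of Lemma~\ref{Lemma:Determinization}.
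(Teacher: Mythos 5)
Your proposal is correct and follows essentially the same route as the paper: determinize naively and invoke Lemma~\ref{lem:omega2} in the $\omega^2$ case, and in the finitely branching case restrict to $(\findownpowof{\states},\subseteq)$ via Lemma~\ref{c:pfin}, with the same key identity $\downcls{\transitions(\downcls{Q},\aletter)}=\downcls{\transitions(Q,\aletter)}$ proved from upward compatibility. The paper phrases this as an induction showing every reachable state of $\detof{\anults}$ is the downward closure of a finite set, but that is only a cosmetic difference from your a priori restriction of the state space.
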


\begin{proof}
    Let $\anults=(\states, \leq, \initstates, \analph, \transitions, \finalstates)$ be a WSTS.
    If $\anults$ is an $\omega^{2}$-WSTS, the naive determinization shows $\langof{\anults}\in\langof{\dwsts}$.
    Indeed, $\detof{\anults}=(\downpowof{\states}, \subseteq, \detof{\initstates}, \analph, \detof{\transitions}, \detof{\finalstates})$ accepts the same language. Moverover, since $(\states, \leq)$ is an $\omega^{2}$-WQO, $(\downpowof{S}, \subseteq)$ is a WQO and so $\detof{\anults}$ a WSTS.

    Assume $\anults$ is finitely branching.
    To show $\langof{\anults}\in\langof{\dwsts}$, we again argue with~$\detof{\anults}$.
    We show that all states $X\in\reachall{\detof{\anults}}$ are downward closures of finite subsets of $\states$.
    This implies $\langof{\anults}\in\langof{\dwsts}$:
    we can restrict the states of $\detof{\anults}$ to $(\findownpowof{S}, \subseteq)$ and get~$\detof{\anults}_{\text{fin}}$ with the same language.
    By Lemma~\ref{c:pfin}, $(\findownpowof{S}, \subseteq)$ is a WQO, and so $\detof{\anults}_{\text{fin}}$ a WSTS.

    To prove that all reachable states are downward closures of finite subsets, we proceed by induction on the length of the computation.
    For the base case, we have the initial state $\downcls{\initstates}$ of $\detof{\anults}$.
    By the definition of finite branching, $\initstates$ is finite.
    For the inductive case, consider $X\in\reachall{\detof{\anults}}$ and~$a\in\Sigma$.
    The state reached from $X$ along $a$ in $\detof{\anults}$ is $\downcls{\succe{\anults}{X}{a}}$.
    By the induction hypothesis, there is a finite set $\acover\subseteq S$ with $X=\downcls{\acover}$.
    We claim that $\downcls{\succe{\W}{X}{a}}=\downcls{\acoverp}$ with $\acoverp=\succe{\W}{\acover}{a}$.
    As $\anults$ is finitely branching and $\acover$ is finite, we know that also $\acoverp$ is finite.
    We have $\acoverp=\succe{\W}{\acover}{a}\subseteq\succe{\W}{X}{a}$, because $\acover\subseteq X$.
    This implies $\downcls{\acoverp}\subseteq\downcls{\succe{\W}{X}{a}}$.
    For the reverse inclusion, let $\astatep\in\downcls{\succe{\W}{X}{a}}$.
    Then there is $\astate\in X$ and  $\astatep'\in S$ with $\astate\overset{a}{\to}\astatep'$ and $\astatep\preceq\astatep'$.
    Since $\downcls{\acover}=X$, there is also a state $\astate'\in\acover$ with $\astate\preceq\astate'$.
    By the simulation property of ULTS, we get $\astate'\overset{a}{\to}\astatep''$ for some $\astatep''\in \states$ with $\astatep'\preceq\astatep''$.
    By the definition of $\acoverp$, we have $\astatep''\in \acoverp$, and so $\astatep, \astatep'\in\downcls{\acoverp}$.
    This concludes the proof.
\end{proof}
\subsection{Non-Determinizability}
The positive results do not generalize to all WSTS, but we show that the \dwsts\ languages form a strict subclass of the \wsts\ languages.
To this end, we define a \wsts\ language $\witnesslang$, called  the \emph{witness language}, that we prove cannot be accepted by a \dwsts.
The proof relies on a novel characterization of the \dwsts\ languages that may be of independent interest.

\begin{thm}\label{Theorem:WSTSneqDWSTS}
$\langof{\dwsts}\neq\langof{\wsts}$.
\end{thm}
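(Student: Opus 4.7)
The proof has two components. First I would establish the characterization hinted at in the introduction: a language $L$ lies in $\langof{\dwsts}$ if and only if the right (Nerode) quasi order, namely $u \sqsubseteq_L v$ iff $v^{-1}L \subseteq u^{-1}L$, is a WQO. The forward direction is straightforward: any deterministic WSTS $\anults$ recognising $L$ induces a quasi order on words via $u \preceq v$ iff $\reach{\anults}{u} \leq \reach{\anults}{v}$ in the state WQO; this order is a WQO (as the image of a WQO under the surjection $\analph^* \to \reachall{\anults}$) and refines $\sqsubseteq_L$, so $\sqsubseteq_L$ is a WQO as well. The backward direction is obtained by viewing the $\sqsubseteq_L$-classes themselves as a canonical deterministic WSTS: states are the classes ordered by the induced $\sqsubseteq_L$, the initial state is $[\varepsilon]$, transitions are $[u] \trans{a} [ua]$, and the final classes are the classes of words in $L$ (which form an upward-closed set by the very definition of $\sqsubseteq_L$). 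Upward compatibility of the induced transitions is immediate from $u \sqsubseteq_L v \Rightarrow ua \sqsubseteq_L va$.

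With the characterization in place, it suffices to exhibit a WSTS language $\witnesslang$ whose Nerode quasi order contains an infinite antichain. Following the hint, I would build $\witnesslang$ from the Rado WQO $R$ on pairs $(i,j)$ with $i<j$, where $(i,j) \leq_R (i',j')$ iff either $i = i'$ and $j \leq j'$, or $j < i'$. By Proposition~4.2 of~\cite{IdealDecomp}, $(\downpowof{R}, \subseteq)$ admits an infinite antichain of downward-closed sets $(D_n)_{n\in\nat}$. The WSTS I would design has a state space built on top of $R$ itself (so that $(\states, \leq)$ remains a WQO), fully deterministic transitions, and a countably infinite set of initial states indexed by $n \in \nat$; the transitions and accepting condition are tuned so that starting from initial state $n$, the accepted continuations are exactly the encodings of membership probes that stay within $D_n$. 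For each $n$ one extracts a distinguishing probe $w_n$ witnessing $D_n \neq D_m$ for every $m \neq n$; the residuals $w_n^{-1}\witnesslang$ are then pairwise $\sqsubseteq_{\witnesslang}$-incomparable. Hence the Nerode quasi order has an infinite antichain, so by the characterization $\witnesslang \notin \langof{\dwsts}$, while by construction $\witnesslang \in \langof{\wsts}$.

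The principal obstacle is requirement engineering on the WSTS for $\witnesslang$: the state order must be a genuine WQO (which is why we work inside $R$ rather than $\downpowof{R}$), the transitions and final set must be upward compatible with that order, and the $\subseteq$-antichain $(D_n)_n$ must translate faithfully into a $\sqsubseteq_{\witnesslang}$-antichain without two distinct $D_n$ collapsing into the same Nerode residual. This last constraint is precisely what forces the ``deterministic except for the choice of the initial state'' design: concentrating all non-determinism into the initial-state index $n$ is what lets the machine distinguish the $D_n$ without ever importing the non-WQO structure of $\downpowof{R}$ into the reachable state space. The finite-branching and $\omega^2$-WQO determinizability results of Theorem~\ref{Theorem:PositiveExpressivity} also serve as sanity checks: the construction must be infinitely branching (via the initial states) and its state WQO must not be an $\omega^2$-WQO (which Rado guarantees, by Lemma~\ref{lem:omega2} together with the existence of the antichain $(D_n)_n$).
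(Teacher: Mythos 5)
Your overall strategy coincides with the paper's: first prove that $\alang\in\langof{\dwsts}$ iff the Nerode quasi order of $\alang$ is a WQO, then exhibit a WSTS language over the Rado order whose Nerode quasi order contains an infinite antichain. The characterization half is essentially the paper's argument, up to an orientation slip: you set $u \sqsubseteq_L v$ iff $v^{-1}L \subseteq u^{-1}L$, whereas the order that the state order of a deterministic WSTS refines, and with respect to which $L$ is upward closed, is $u^{-1}L \subseteq v^{-1}L$. This is not purely cosmetic for the characterization itself, since a quasi order can be a WQO while its reverse is not; it is harmless for your intended application, because an antichain in one order is an antichain in the other.

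The genuine gap is in the witness construction, which is the heart of the theorem and is left here at the level of intent; moreover, the specific design you sketch cannot work. If the initial states $p_n$ are indexed by $n\in\nat$ and the continuations accepted from $p_n$ faithfully encode membership in $D_n$, then, since the $p_n$ must live in a WQO, there are $n\neq m$ with $p_n\leq p_m$; upward compatibility together with upward-closed final states forces $L(p_n)\subseteq L(p_m)$ and hence $D_n\subseteq D_m$, contradicting the antichain property of the $D_n$. The paper's $\anults_{\radoset}$ avoids exactly this trap: the initial set is the whole column $\radocol{0}$, which is a \emph{chain}, and the antichain $\setcond{\downcls{\radocol{n}}}{n\in\nat}$ is reached not by choosing among initial states but by reading the prefixes $\opendyck^n$ (deterministic transitions applied to the entire initial column); the role of the infinite initial set is to keep the reachable macro-state infinite so that it can be told apart from the finite triangles $\downcls{\radocol{k}}\cap\radocol{0}$. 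The genuinely delicate part, which your proposal does not address, is designing probe suffixes that are upward compatible with $\radoleq$ yet still separate the columns: $\closedyck$ separates $\radocol{0}$ from the rest, and $\fauxzero$ must ``fail with a delay'' (decrementing rows in column $0$, jumping from column $k$ to $(0,k-1)$), because a transition that simply fails on $\radocol{k}$ would violate simulation of the smaller states $(0,r)$ with $r\leq k$. Without these ideas, the translation of the $\subseteq$-antichain into a Nerode antichain does not go through, so the proposal should be counted as identifying the correct plan but not yet proving the theorem.
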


Towards the definition of $\witnesslang$, we just showed that finitely branching WSTS and $\omega^2$-WSTS can be determinized.
Moreover, it is known that $\omega^2$-WQOs are precisely the WQOs that do not embed the Rado WQO~\cite[Section 2]{JANCAR1999}.
This suggests we should accept the witness language $\witnesslang$ by an infinitely branching WSTS over the Rado WQO.
We begin with our characterization of the \dwsts\ languages, as it will provide additional guidance in the definition of the witness language.

\subsection{Characterization of the \dwsts\ Languages}
Our characterization is based on a classical concept in formal languages~\cite[Theorem 3.9]{HU79}.
The \emph{Nerode quasi order} $\mathord{\nerodeleqof{\alang}} \subseteq\analph^{*}\times\analph^{*}$ of a language $\alang\subseteq\analph^{*}$ is defined by $\aword\nerodeleqof{\alang}\awordp$, if
\begin{align*}
\text{for all }\awordpp\in\analph^*\text{ we have that $\aword\concat\awordpp\in\alang$ implies $\awordp\concat\awordpp\in\alang$\ .}
\end{align*}
The characterization says that the \dwsts\ languages are precisely the languages whose Nerode quasi order is a WQO.
This is not the folklore \cite[Chapter 11, Proposition 5.1]{HandbookFL1} saying that a language is regular if and only if the syntactic quasi order is a WQO.
%
\begin{lem}[Characterization of $\langof{\dwsts}$]\label{Lemma:WQONerode}
    $\alang\in\langof{\dwsts}$ iff \ $\mathord{\nerodeleqof{\alang}}$ is a WQO.
\end{lem}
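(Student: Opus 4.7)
I will prove the equivalence by establishing each direction separately. The forward direction exploits determinism to assign a canonical state to every word; the backward direction uses the Nerode quasi order itself as the state space of a canonical deterministic WSTS, in the spirit of the Myhill--Nerode construction.

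\textbf{Forward direction $(\Rightarrow)$.} Let $\anults=(\states,\leq,\initstate,\analph,\transitions,\finalstates)$ be a deterministic WSTS with $\langof{\anults}=\alang$. For every word $\aword\in\analph^*$ there is a unique state $\rho(\aword)=\transitions(\initstate,\aword)\in\states$. My first step is to show that the state order \emph{refines} the Nerode quasi order, i.e.\ $\rho(\aword)\leq \rho(\awordp)$ implies $\aword\nerodeleqof{\alang}\awordp$. Suppose $\aword\concat\awordpp\in\alang$, so $\transitions(\rho(\aword),\awordpp)\in\finalstates$. By the simulation property applied inductively along $\awordpp$ (combined with determinism at each step), we get $\transitions(\rho(\aword),\awordpp)\leq \transitions(\rho(\awordp),\awordpp)=\rho(\awordp\concat\awordpp)$. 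Since $\finalstates$ is upward closed, $\awordp\concat\awordpp\in\alang$. Now given any infinite sequence of words $\sequence{\aword_i}{i\in\nat}$, the sequence $\sequence{\rho(\aword_i)}{i\in\nat}$ admits indices $i<j$ with $\rho(\aword_i)\leq\rho(\aword_j)$ because $(\states,\leq)$ is a WQO, and the refinement yields $\aword_i\nerodeleqof{\alang}\aword_j$.

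\textbf{Backward direction $(\Leftarrow)$.} Suppose $\nerodeleqof{\alang}$ is a WQO. I construct the \emph{Nerode WSTS} $\anults_{\alang}=(\analph^*,\nerodeleqof{\alang},\varepsilon,\analph,\transitions,\alang)$ with $\transitions(\aword,\aletter)=\aword\concat\aletter$. The system is deterministic by construction and obviously accepts $\alang$. Three WSTS axioms remain to be checked: (i) the state order is a WQO by hypothesis; (ii) the final set $\alang$ is upward closed wrt.\ $\nerodeleqof{\alang}$, which is immediate by specializing the defining implication of $\nerodeleqof{\alang}$ to the suffix $\awordpp=\varepsilon$; (iii) upward compatibility, meaning $\aword\nerodeleqof{\alang}\awordp$ implies $\aword\concat\aletter\nerodeleqof{\alang}\awordp\concat\aletter$, which follows by re-associating the suffix: if $\aword\concat\aletter\concat\awordpp\in\alang$, then applying the Nerode implication to the suffix $\aletter\concat\awordpp$ gives $\awordp\concat\aletter\concat\awordpp\in\alang$.

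\textbf{Obstacles.} There is no serious obstacle; the proof is a structural analogue of the Myhill--Nerode theorem where ``finite index'' is replaced by ``WQO''. The only subtlety is the refinement lemma in the forward direction, which genuinely uses both determinism (to know that the simulating successor provided by upward compatibility is actually $\rho(\awordp\concat\awordpp)$) and upward closure of $\finalstates$. These two ingredients are exactly the WSTS features that the Nerode order encodes on the syntactic side, which is why the characterization is tight.
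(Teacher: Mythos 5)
Your proof is correct and follows essentially the same route as the paper: for the forward direction you show that the pullback of the state order along $\aword\mapsto\transitions(\initstate,\aword)$ refines the Nerode quasi order (using determinism, simulation, and upward closure of $\finalstates$), and for the backward direction you build the canonical deterministic WSTS on $(\analph^*,\nerodeleqof{\alang})$ with final states $\alang$. The only difference is that you spell out the verification of the WSTS axioms for the Nerode construction, which the paper leaves as "readily checked".
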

\begin{proof}
For the only if-direction, consider $\alang = \langof{\anults}$ with $\anults=(\states, \leq, i, \analph, \transitions, \finalstates)$ a \dwsts.
We extend the order $\leq\ \subseteq\states\times \states$ on the states to an order $\leqof{\anults}\ \subseteq\analph^*\times\analph^*$ on words by setting $\aword\leqof{\anults}\awordp$, if $\transitions(i, \aword)=\astate$ and $\transitions(i, \awordp)=\astatep$ with $\astate\leq\astatep$.
Since $\anults$ is deterministic, the states $\astate$ and $\astatep$ are guaranteed to exist and to be unique.
It is easy to see that $\leqof{\anults}$ is a WQO.
We now show that $\leqof{\anults}$ is included in the Nerode quasi order, and so also $\nerodeleqof{\alang}$ is a WQO.
To this end, we consider $\aword\leqof{\anults}\awordp$ and $\awordpp\in\analph^{*}$ with
    $\aword\concat\awordpp\in\alang$, and show that also  $\awordp\concat\awordpp\in\alang$.
We have $\transitions(i, \aword\concat\awordpp)=\transitions(\astate_1, \awordpp)=\astate_2$ and $\transitions(i, \awordp\concat\awordpp)=\transitions(\astatep_1, \awordpp)=\astatep_2$ with $\astate_1=\transitions(i, \aword)$ and  $\astatep_1=\transitions(i, \awordp)$.
Since $\aword\leqof{\anults}\awordp$, we have $\astate_1\leq\astatep_1$.
With the simulation property of WSTS, this implies $\astate_2\leq \astatep_2$.
Since $\aword\concat\awordpp\in\alang$ and $\alang=\langof{\anults}$, we get $\astate_2\in\finalstates$.
Since $\finalstates$ is upward closed, also $\astatep_2\in\finalstates$.
Hence, $\awordp\concat\awordpp\in\langof{\anults}=\alang$ as desired.

For the if-direction, consider a language $\alang\subseteq\analph^*$ whose Nerode quasi order $\leqof{\alang}$ is a WQO.
We define the \dwsts\ $\anults_{\alang}=
(\analph^{*}, \nerodeleqof{\alang}, \varepsilon, \analph, \transitions, \alang)$.
The states are all words ordered by the Nerode quasi order.
The empty word is the initial state, the language $\alang$ is the set of final states.
Note that $\alang$ is upward closed wrt.\ $\nerodeleqof{\alang}$.
The transition relation is defined as expected, $\transitions(\aword,\aletter)
=\aword\concat\aletter$.
It is readily checked that $\langof{\anults_{\alang}}=\alang$.
\end{proof}

The lemma gives a hint on how to construct the witness language $\witnesslang$: The associated Nerode quasi order $\nerodeleqof{\witnesslang}$ should have an infinite antichain (then it cannot be a~WQO).
To obtain such an antichain, remember that $\witnesslang$ will be accepted by a
WSTS over the Rado WQO~$(\radoset, \radoleq)$~\cite{Rado54}.
It is known that $(\downpowof{\radoset}, \subseteq)$ has an infinite antichain.
Our strategy for the definition of $\witnesslang$ will therefore be to translate the infinite antichain in $(\downpowof{\radoset}, \subseteq)$ into an infinite antichain in $(\analph^*, \nerodeleqof{\witnesslang})$.
We turn to the details, starting with the Rado WQO.
\subsection{Witness Language}

\mbox{}

\paragraph*{Rado Order}
Our presentation of the Rado WQO~\cite{Rado54} follows \cite{Milner1985}. 
The \emph{Rado set} is the upper diagonal, $\radoset = \setcond{(c, r)}{c< r}\subseteq\nat^2$.
The Rado WQO $\radoleq\ \subseteq \radoset\times\radoset$ is defined by:
\begin{align*}
(c_1, r_1)\radoleq(c_2, r_2), \qquad \text{if}\qquad r_1\leq c_2\,\,\vee\,\, (c_1=c_2\,\,\wedge\,\, r_1\leq r_2)\ .
\end{align*}
Given an element $(c, r)$, we call $c$ the \emph{column} and $r$ the \emph{row}, as suggested by Figure~\ref{Figure:RadoComparable}~(left).
Columns will play an important role and we denote column~$i$ by $\radocol{i}=\setcond{(i, r)}{i < r}\subseteq\radoset$.
To arrive at a larger element in the Rado WQO, one can increase the row while remaining in the same column,
or move to the rightmost column of the current row, and select an element to the right, Figure~\ref{Figure:RadoComparable}~(middle).
\begin{figure}[!h]
    \centering
    \scalebox{0.5} {
        \rowsandcolsrado{3}{5}{8}
    }\qquad
    \scalebox{0.5} {
        \markedradofigure{3}{5}{8}
    }\qquad
    \scalebox{0.5} {
        \downclscolrado{3}{8}
    }
    \caption{Rado order with the column and row of $(3,5)$ marked (left), with the elements larger than~$(3,5)$ marked (middle), and with the downward closure of column $3$ marked (right).
    \label{Figure:RadoComparable}}
\end{figure}

It is not difficult to see that $(\radoset, \radoleq)$ is a WQO~\cite{Rado54}.
In an infinite sequence, either the columns eventually plateau out, in which case the rows lead to comparable elements, or the columns grow unboundedly, in which case they eventually exceed the row in the initial pair.
The interest in the Rado WQO is that the WQO property is lost when moving to $(\downpowof{\radoset}, \subseteq)$.
%
%
This failure is due to the following well-known fact.

\begin{lem}[\cite{IdealDecomp}, Proposition 4.2]\label{Lemma:RadoUpAntichain}
$\setcond{\downcls{\radocol{i}}}{i\in\nat}$ is an infinite antichain in $(\downpowof{\radoset}, \subseteq)$.
\end{lem}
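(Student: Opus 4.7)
The plan is to first unfold $\downcls{\radocol{i}}$ from the definition of $\radoleq$, and then use the resulting description to produce pairwise witnesses of non-containment. For the characterization, I would show that for every $i \in \nat$,
\[
\downcls{\radocol{i}} \;=\; \radocol{i} \;\cup\; \setcond{(c,r)\in\radoset}{r \leq i}.
\]
This follows directly from unwinding the definition: a pair $(c,r) \in \radoset$ lies below some $(i,s) \in \radocol{i}$ (for some $s>i$) exactly when either $r \leq i$, in which case the first disjunct of $\radoleq$ is satisfied for any sufficiently large $s > i$, or $c = i$ and $r \leq s$, which is the second disjunct and covers all of $\radocol{i}$ itself by taking $s = r$. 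Geometrically, $\downcls{\radocol{i}}$ is the horizontal strip of rows up to height $i$ together with column $i$ (cf.\ Figure~\ref{Figure:RadoComparable}, right).

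With this description in hand, I would prove pairwise incomparability by exhibiting explicit witnesses. Fix two distinct indices, say $i < j$. Pick any $(i, r) \in \radocol{i}$ with $r > j$: it lies in $\downcls{\radocol{i}}$ by construction, but it is not in $\downcls{\radocol{j}}$ since $r > j$ falsifies $r \leq j$ and $c = i \neq j$ falsifies the column condition. Symmetrically, any $(j, r) \in \radocol{j}$ automatically satisfies $r > j > i$ and $c = j \neq i$, hence lies in $\downcls{\radocol{j}}$ but not in $\downcls{\radocol{i}}$. Thus $\downcls{\radocol{i}} \not\subseteq \downcls{\radocol{j}}$ and $\downcls{\radocol{j}} \not\subseteq \downcls{\radocol{i}}$, and the collection $\setcond{\downcls{\radocol{i}}}{i\in\nat}$ is an infinite antichain in $(\downpowof{\radoset}, \subseteq)$.

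I do not expect a genuine obstacle: once the downward closure of a single column is explicitly described, the antichain property reduces to observing that the descriptions for distinct indices differ both in the height of the horizontal strip and in the identity of the distinguished column. The only subtlety is to remember that $\radocol{i}$ itself sits inside $\downcls{\radocol{i}}$, since it is precisely this component that supplies a witness point in column $i$ above every finite row threshold and thereby separates $\downcls{\radocol{i}}$ from $\downcls{\radocol{j}}$ for $j \neq i$.
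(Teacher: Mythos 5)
Your proof is correct and follows essentially the same route as the paper, which cites the result and justifies it by the same geometric picture: $\downcls{\radocol{i}}$ is column $i$ together with the finite triangle of points in rows at most $i$, so distinct columns cannot contain one another. The only cosmetic difference is that you exhibit explicit single-point witnesses such as $(i,r)$ with $r>j$, whereas the paper argues one direction by noting the infinite column $\radocol{i}$ cannot fit into the finite triangle of $\downcls{\radocol{j}}$.
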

\noindent To explain the result, we illustrate the downward closure of a column in Figure~\ref{Figure:RadoComparable}~(right).
Inclusion fails to be a WQO as each column $\radocol{i}$ forms an infinite set that the downward closure $\downcls{\radocol{j}}$ with $j>i$ cannot cover.
Indeed, $\downcls{\radocol{j}}$ only has the triangle to the bottom-left of column $\radocol{j}$ available to cover $\radocol{i}$, and the triangle is a finite set.
We will use exactly this difference between infinite and finite sets in our witness language.
\paragraph*{Definition of $\witnesslang$}
The witness language is the language accepted by $\anults_{\radoset}=(\radoset, \radoleq, \radocol{0},  \analph,
\transitions, \radoset)$.
The set of states is the Rado set, the set of initial states is the first column, and the set of final states is again the entire Rado set.
The latter means that a word is accepted as long as it admits a run.
The letters in
$\analph=\set{\opendyck, \closedyck, \fauxzero}$
reflect the operation that the transitions
$\transitions\subseteq
\radoset\times\analph\times\radoset$ perform on the states:
\begin{alignat*}{5}
       \transitions((c, r), \opendyck)\ &=\ {(c+1,r+1)}
    \qquad \qquad& \transitions((c+1, r+1), \closedyck)\ &=\ {(c, r)}\\
    %
    \transitions((c+1, r), \fauxzero)\ &= \
    {(0, c)}
    \quad&
    \transitions((0, r+1), \fauxzero)\ &=\
    {(0, r)}\ .
    %
\end{alignat*}
We explain the transitions in a moment, but remark that they are designed in a way that makes $\radoleq$ a simulation relation and hence $\anults_{\radoset}$ a $\wsts$.
\begin{lem}\label{Lemma:WSTSMembership}
$\witnesslang\in\langof{\wsts}$.
\end{lem}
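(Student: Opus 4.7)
The plan is to show directly that $\anults_{\radoset}$ satisfies all three requirements of being a WSTS: its states form a WQO, its final states are upward closed, and its quasi order is a simulation wrt.~the transition relation. The first two points are immediate and I would dispatch them in one line: $(\radoset, \radoleq)$ is a WQO by Rado's classical argument, which is recalled in the paragraph just above the lemma, and the final state set is the whole state space $\radoset$, hence trivially upward closed.

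The bulk of the work is the simulation property, which I would prove by a case analysis on the four shapes of transitions combined with the two disjunctive clauses of $\radoleq$. Fix $(c_1, r_1) \radoleq (c_2, r_2)$. Under the first clause we have $r_1 \leq c_2$, and under the second $c_1 = c_2$ and $r_1 \leq r_2$. For the $\opendyck$-transitions, the successors $(c_i+1, r_i+1)$ are related by exactly the same clause that related the sources (the inequality $r_1 \leq c_2$ or the inequality $r_1 \leq r_2$ survives the $+1$ on both sides). For $\closedyck$, I would first observe that whenever the transition fires at the smaller state we must have $c_1 \geq 1$, and then both clauses of $\radoleq$ force $c_2 \geq 1$ as well (in the first clause, $c_2 \geq r_1 > c_1 \geq 1$), so the transition also fires at the larger state and the same inequalities carry over to the predecessors $(c_i-1, r_i-1)$.

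The interesting case is $\fauxzero$, because the target state lives in column $0$ regardless of the source column. I would split on whether $c_1 = 0$ or $c_1 \geq 1$. If $c_1 \geq 1$ the transition leads to $(0, c_1 - 1)$, and as in the $\closedyck$-case we get $c_2 \geq 1$ in both Rado clauses, so the $\fauxzero$-transition from $(c_2, r_2)$ leads to $(0, c_2 - 1)$; comparability $(0, c_1-1) \radoleq (0, c_2-1)$ then reduces to $c_1 \leq c_2$, which both clauses give (from $c_1 < r_1 \leq c_2$ or $c_1 = c_2$). If $c_1 = 0$, the transition leads to $(0, r_1 - 1)$; in the first Rado clause we still have $c_2 \geq r_1 \geq 1$ and land at $(0, c_2 - 1)$ with $r_1 - 1 \leq c_2 - 1$, while in the second clause $c_2 = 0$ and we land at $(0, r_2 - 1)$ with $r_1 - 1 \leq r_2 - 1$.

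I do not expect any real obstacle here; the arithmetic is routine and the only subtle point is making sure that, whenever the smaller state is allowed to fire $\closedyck$ or the column-decrementing variant of $\fauxzero$, the larger state is also allowed to fire it, which is exactly what the two Rado clauses guarantee by forcing $c_2 \geq c_1$.
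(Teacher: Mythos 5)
Your proposal is correct and follows the same route as the paper, which likewise establishes the lemma by observing that the final states ($=\radoset$) are trivially upward closed, that $(\radoset,\radoleq)$ is a WQO, and that the four transition shapes were designed so that $\radoleq$ is a simulation (the detailed case analysis being deferred to the full version). The only cosmetic point is that in the boundary cases where the $\fauxzero$-target would be $(0,0)\notin\radoset$ the transition is simply undefined and there is nothing to simulate; otherwise your case analysis checks exactly the right inequalities.
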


To develop an intuition to the witness language, consider
\[
    \witnesslang\cap\opendyck^{*}\concat\closedyck^{*}\concat\fauxzero^{*}\ =\ \setcond{\opendyck^{n}\concat\closedyck^{n}\concat\fauxzero^{i}}{n, i\in\nat}\cup
    \setcond{\opendyck^{n}\concat\closedyck^{k}\concat\fauxzero^{i}}{n, k, i\in\nat, n-k > i}\ .
\]
Until reading the first $\fauxzero$ symbol, the language keeps track of the (Dyck) balance of $\opendyck$ and~$\closedyck$ symbols in a word.
If the balance becomes negative, the word is rejected.
If the balance is non-negative, it is the task of the $\fauxzero$ symbols to distinguish a balance of exactly zero from a positive balance.
Words with a balance of exactly zero get accepted regardless of how many $\fauxzero$ symbols follow.
Word that have a positive balance of $c>0$ when reading the first $\fauxzero$ get rejected after reading $c$-many $\fauxzero$ symbols.
As we show, this is enough to distinguish words with a balance of $c>0$ from words with a balance of $d>0$ for $d\neq c$, and thus obtain infinitely many classes in the Nerode quasi order.
We turn to the details.
\begin{prop}\label{Proposition:Witness}
    $\witnesslang\notin \langof{\updwsts}$
\end{prop}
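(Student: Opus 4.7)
The plan is to invoke Lemma~\ref{Lemma:WQONerode} and reduce the task to exhibiting an infinite antichain in the Nerode quasi order $\nerodeleqof{\witnesslang}$; the absence of a WQO then rules out $\witnesslang \in \langof{\updwsts}$. The natural candidate antichain is $\setcond{\opendyck^n}{n \in \nat}$, mirroring the Rado column antichain of Lemma~\ref{Lemma:RadoUpAntichain}. Indeed, starting from any initial state $(0, r) \in \radocol{0}$, the deterministic transitions of $\anults_\radoset$ carry $\opendyck^n$ into $(n, r+n) \in \radocol{n}$, so the non-determinism of the initial choice alone makes the set of states reached after $\opendyck^n$ exactly $\radocol{n}$. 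The $\opendyck^n$'s will thus serve as syntactic shadows of the semantic antichain $\setcond{\downcls{\radocol{n}}}{n\in\nat}$.

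To establish pairwise incomparability, I would first work out which $\fauxzero$-continuations are fireable from each column. The crucial point is that $\fauxzero$ is \emph{not} defined on any state in column $1$ (the would-be target $(0,0)$ lies outside $\radoset$) nor on $(0,1)$, so from $(c,r)$ with $c \geq 1$ a burst of $\fauxzero$'s lands first in $(0, c-1)$ and then walks down the $0$-th column to $(0,1)$, for a total of at most $c-1$ consecutive $\fauxzero$'s. This yields the clean rule $\opendyck^n \concat \fauxzero^k \in \witnesslang$ iff $k \leq n-1$ (for $n \geq 1$), whereas $\fauxzero^k \in \witnesslang$ for every $k$ (choose a large enough initial row). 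For $m < n$ with $m \geq 1$, the continuation $w = \fauxzero^{m}$ then witnesses one direction of incomparability: $\opendyck^n \concat w \in \witnesslang$ but $\opendyck^m \concat w \notin \witnesslang$; the case $m = 0$ is analogous with $w = \fauxzero^n$.

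The main obstacle will be the reverse direction: for $m < n$, finding a continuation accepted from $\opendyck^m$ but not from $\opendyck^n$. This is delicate because upward compatibility tends to let the runs of $\opendyck^n$ mimic those of $\opendyck^m$ through a suitable initial state. The way around is to exploit the non-determinism at the moment of returning to column $0$: after $\opendyck^m \concat \closedyck^m$ the system is back at $(0, r)$ with $r$ still freely choosable, so arbitrarily long $\fauxzero$-suffixes remain available; whereas after $\opendyck^n \concat \closedyck^m$ the system is frozen in column $n - m \geq 1$, from which only boundedly many $\fauxzero$'s fire regardless of $r$. Accordingly, the continuation $w = \closedyck^m \concat \fauxzero^{n-m}$ separates the two words ($\opendyck^m \concat w \in \witnesslang$ by picking $r$ large, $\opendyck^n \concat w \notin \witnesslang$ because column $n-m$ admits only $n-m-1$ many $\fauxzero$'s). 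Combining the two directions across all $m \neq n$ yields the desired infinite antichain, and Lemma~\ref{Lemma:WQONerode} gives $\witnesslang \notin \langof{\updwsts}$.
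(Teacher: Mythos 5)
Your proof is correct and takes essentially the same route as the paper: Lemma~\ref{Lemma:WQONerode} plus an infinite Nerode antichain built from the Rado columns, with your words $\opendyck^n$ being exactly the canonical representatives of the paper's column languages $\radolangof{n}$, and your two directions matching the paper's Lemmas~\ref{Lemma:BaseCase} and~\ref{Lemma:MovingColumns}. One small slip: for $m=0$ the suffix $\fauxzero^n$ witnesses $\varepsilon\not\nerodeleqof{\witnesslang}\opendyck^n$ rather than $\opendyck^n\not\nerodeleqof{\witnesslang}\varepsilon$ (indeed $\opendyck^n\concat\fauxzero^n\notin\witnesslang$ while $\fauxzero^n\in\witnesslang$); the missing direction is witnessed by the suffix $\closedyck$, as in the paper, or avoided entirely by taking the antichain $\setcond{\opendyck^n}{n\geq 1}$.
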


To prove $\witnesslang\notin\langof{\dwsts}$, we associate with each column $\radocol{i}$ in the Rado WQO the \emph{column language} $\radolangof{i}=\setcond{\aword\in\analph^*}{\apply{\radocol{0}}{\aword}=\radocol{i}}$.
It consists of those words that reach \emph{all} states in~$\radocol{i}$ from the initial column $\radocol{0}$.
The column languages are non-empty.
\begin{lem}\label{Lemma:NonEmpty}
$\radolangof{i}\neq\emptyset$ for all $i\in\nat$.
\end{lem}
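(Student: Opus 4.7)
The plan is to exhibit an explicit witness word for each $i$, namely $\opendyck^{i}$. The guiding intuition is that the $\opendyck$-transition $(c,r)\mapsto (c+1,r+1)$ shifts every state diagonally by one and therefore maps the whole column $\radocol{i}$ onto the whole column $\radocol{i+1}$. Iterating this $i$ times starting from $\radocol{0}$ should yield exactly $\radocol{i}$, which is precisely the condition $\opendyck^{i}\in\radolangof{i}$.

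I would formalize this by induction on $i$. The base case $i=0$ is immediate: $\apply{\radocol{0}}{\varepsilon}=\radocol{0}$, so $\varepsilon\in\radolangof{0}$. For the inductive step, assuming $\apply{\radocol{0}}{\opendyck^{i}}=\radocol{i}$, I would compute
\begin{align*}
\apply{\radocol{0}}{\opendyck^{i+1}}\ =\ \apply{\radocol{i}}{\opendyck}\ =\ \setcond{(i+1,r+1)}{r>i}\ =\ \setcond{(i+1,s)}{s>i+1}\ =\ \radocol{i+1}.
\end{align*}
Two small things need to be verified along the way: that $\opendyck$ is defined on every element of $\radocol{i}$ (it is, since $c<r$ implies $c+1<r+1$, keeping the successor inside $\radoset$) and that the only transition rule for $\opendyck$ is this diagonal shift, so that no stray targets appear in $\apply{\radocol{i}}{\opendyck}$.

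There is no real obstacle here; the entire content of the lemma lies in the one-line computation above. Notably, the $\closedyck$- and $\fauxzero$-transitions play no role at this point, and neither does the Rado order. The lemma becomes substantive only when it is combined afterwards with the antichain $\setcond{\downcls{\radocol{i}}}{i\in\nat}$ from Lemma~\ref{Lemma:RadoUpAntichain} to translate column-indexed words into Nerode-incomparable witnesses in the proof of Proposition~\ref{Proposition:Witness}.
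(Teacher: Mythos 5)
Your proof is correct and follows essentially the same route as the paper: the paper also observes that $\varepsilon\in\radolangof{0}$ and that the $\opendyck$-labeled transitions carry the whole of column $\radocol{i}$ onto the whole of column $\radocol{i+1}$, so that $\opendyck^{i}\in\radolangof{i}$ by induction. Your write-up merely makes the inductive computation explicit; no discrepancy to report.
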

\noindent We start from the entire initial column, meaning $\varepsilon\in \radolangof{0}$.
The transitions labeled by $\opendyck$ move from all states in one column to all states in the next column, $\radolangof{i}.\opendyck\subseteq\radolangof{i+1}$.
This already proves the lemma.
The $\closedyck$-labeled transitions undo the effect of the $\opendyck$-labeled transitions and decrement the column, $\radolangof{i+1}.\closedyck\subseteq\radolangof{i}$.
In the initial column, this is impossible, $\apply{\radocol{0}}{\closedyck}=\emptyset$.
We illustrate the behaviour of $\opendyck$ and $\closedyck$ in Figure~\ref{Figure:TransitionEffect} (left).

By Lemma~\ref{Lemma:RadoUpAntichain}, the columns form an antichain in $(\downpowof{\radoset}, \subseteq)$.
The languages $\radolangof{i}$ translate this antichain into (actually several) antichains of the form we need.
When combined, the Lemmas~\ref{Lemma:NonEmpty},~\ref{Lemma:ColumnWordAntichain}, and~\ref{Lemma:WQONerode} conclude the proof of Proposition~\ref{Proposition:Witness}, and therefore Theorem~\ref{Theorem:WSTSneqDWSTS}.
\begin{lem}\label{Lemma:ColumnWordAntichain}
    Every set $\alangp\subseteq\analph^{*}$ with $|\alangp\cap\radolangof{i}|=1$ for all $i\in\nat$ is an antichain in $(\analph^{*}, \nerodeleqof{\witnesslang})$.
\end{lem}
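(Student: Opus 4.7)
My plan is to show, for any two distinct members $w_i \in \alangp \cap \radolangof{i}$ and $w_j \in \alangp \cap \radolangof{j}$ with $i < j$, that they are incomparable in $\nerodeleqof{\witnesslang}$. Since every state of $\anults_{\radoset}$ is final and $w_k \in \radolangof{k}$ exactly means $\apply{\radocol{0}}{w_k} = \radocol{k}$, the question $w_k u \in \witnesslang$ reduces to $\apply{\radocol{k}}{u} \neq \emptyset$. So it suffices to exhibit continuations $u, v \in \analph^*$ such that $\apply{\radocol{i}}{u} \neq \emptyset = \apply{\radocol{j}}{u}$ and $\apply{\radocol{j}}{v} \neq \emptyset = \apply{\radocol{i}}{v}$.

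The preparatory step is a direct unfolding of the transition rules to compute how each letter acts on a whole column. The letter $\closedyck$ decrements the column: $\apply{\radocol{n}}{\closedyck} = \radocol{n-1}$ for $n \geq 1$, while $\apply{\radocol{0}}{\closedyck} = \emptyset$. The letter $\fauxzero$ splits into cases: $\apply{\radocol{0}}{\fauxzero} = \radocol{0}$; for $n \geq 2$, $\apply{\radocol{n}}{\fauxzero}$ collapses onto the singleton $\{(0, n-1)\}$, from which further $\fauxzero$-letters decrement the row until the state $(0, 1)$ is reached, where the next $\fauxzero$ has no valid target; and $\apply{\radocol{1}}{\fauxzero}$ is empty from the outset. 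Iterating yields $\apply{\radocol{n}}{\fauxzero^k} = \emptyset$ whenever $n \geq 1$ and $k \geq n$, while $\apply{\radocol{0}}{\fauxzero^k} = \radocol{0}$ for all $k$; and likewise $\apply{\radocol{n}}{\closedyck^k} = \emptyset$ whenever $k > n$.

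With these facts the two separating continuations fall out. Take $v = \closedyck^{\,i+1}$: it empties $\radocol{i}$ by one closing too many, while $\apply{\radocol{j}}{v} = \radocol{j-i-1}$ is non-empty since $j > i$. Take $u = \closedyck^{\,i}\fauxzero^{\,j-i}$: the prefix $\closedyck^{\,i}$ reduces $\radocol{i}$ to $\radocol{0}$, where the suffix $\fauxzero^{\,j-i}$ is harmless; but it reduces $\radocol{j}$ only to $\radocol{j-i}$, from which $j-i$ further $\fauxzero$-letters empty the state. Hence $w_i u \in \witnesslang \not\ni w_j u$ and $w_j v \in \witnesslang \not\ni w_i v$, and so $w_i$ and $w_j$ are incomparable in $\nerodeleqof{\witnesslang}$, as required.

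The only bookkeeping hurdle worth flagging is the $\fauxzero$-boundary: $\fauxzero$ has no transition from $\radocol{1}$ nor from $(0,1)$, because the would-be target $(0,0)$ is not in $\radoset$. This boundary is precisely what kills the higher column $\radocol{j-i}$ faster than the lower column $\radocol{0}$ once the two have been equalised by the prefix $\closedyck^{\,i}$, and it is the only subtle point in the case analysis.
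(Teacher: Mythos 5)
Your proof is correct and takes essentially the same route as the paper, which splits the argument into a base case against $\radolangof{0}$ (Lemma~\ref{Lemma:BaseCase}) and a reduction of general $\radolangof{i}$ vs.\ $\radolangof{j}$ to it via $\closedyck^{i}$ (Lemma~\ref{Lemma:MovingColumns}); your separating suffixes $\closedyck^{i+1}$ and $\closedyck^{i}\fauxzero^{j-i}$ coincide with the paper's up to the inessential choice of $\closedyck^{i+1}$ in place of $\closedyck^{j}$, and your explicit computation of the column dynamics under $\fauxzero$ (including the boundary at $(0,1)$) is exactly the key point the paper relies on.
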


In the rest of the section, we prove Lemma~\ref{Lemma:ColumnWordAntichain}.
The lemma claims that entire column languages are incomparable in the Nerode quasi order, so we write $\alang\nerodeincompof{\witnesslang}\alangp$ if for all $\aword\in\alang$ and all $\awordp\in\alangp$ we have $\aword\not\nerodeleqof{\witnesslang}\awordp$ and $\awordp\not\nerodeleqof{\witnesslang}\aword$.
Difficult is the incomparability with $\radolangof{0}$ stated in the next lemma. The proof will make formal the idea behind the $\fauxzero$-labeled transitions. 
\begin{figure}[h]
    \centering
    \scalebox{0.5} {
        \dycktransitionsrado{3}{7}
    }\qquad\qquad
    \scalebox{0.5} {
        \dominatezerorado{3}{7}
    }
    \caption{The effect of $\opendyck$ and $\closedyck$-labeled transitions on column $3$ (left) and the effect of $\fauxzero$-labeled transitions on columns $0$ and $3$ (right).}
    \label{Figure:TransitionEffect}
\end{figure}
\begin{lem}\label{Lemma:BaseCase}
$\radolangof{0}\nerodeincompof{\witnesslang}\radolangof{k}$ for all $k>0$.
\end{lem}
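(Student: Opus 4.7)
The plan is to exhibit, for each direction of incomparability, a distinguishing suffix whose effect can be computed by tracking the action of the transitions on entire columns. Fix $w \in \radolangof{0}$ and $v \in \radolangof{k}$ with $k > 0$, so that by definition of the column languages $\apply{\radocol{0}}{w} = \radocol{0}$ and $\apply{\radocol{0}}{v} = \radocol{k}$. Both directions then reduce to computing $\apply{\radocol{0}}{u}$ and $\apply{\radocol{k}}{u}$ for a carefully chosen suffix $u$.

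For $w \not\nerodeleqof{\witnesslang} v$, I would take $u = \fauxzero^k$. The two combinatorial facts I would verify are: (i)~$\fauxzero$ acts as the identity on $\radocol{0}$, because the rule $\transitions((0, r+1), \fauxzero) = (0, r)$ maps $\{(0,r) : r \geq 2\}$ onto $\radocol{0}$ (the state $(0,1)$ has no $\fauxzero$-successor since $(0,0)\notin \radoset$), hence $\apply{\radocol{0}}{\fauxzero^m} = \radocol{0}$ for every $m \in \nat$; and (ii)~for $k \geq 2$ the rule $\transitions((c+1, r), \fauxzero) = (0, c)$ collapses the whole column $\radocol{k}$ to the singleton $\{(0, k-1)\}$, while for $k = 1$ it yields $\emptyset$ directly, because the target $(0,0)$ is not in $\radoset$. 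Iterating (i) inside $\radocol{0}$ then peels off one row per $\fauxzero$, so $\{(0,k-1)\}$ shrinks to $\{(0,1)\}$ after $k-2$ further steps and to $\emptyset$ after one more, for a total of $k$ steps from $\radocol{k}$. Combining these, $w\cdot \fauxzero^k$ reaches the nonempty all-final set $\radocol{0}$, whereas $v\cdot \fauxzero^k$ reaches $\emptyset$, i.e.\ $w\cdot \fauxzero^k \in \witnesslang$ and $v\cdot \fauxzero^k \notin \witnesslang$.

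For the reverse direction $v \not\nerodeleqof{\witnesslang} w$, a single $\closedyck$ already suffices. The rule $\transitions((c+1, r+1), \closedyck) = (c, r)$ requires a positive first coordinate, hence no $\closedyck$-transition fires anywhere in $\radocol{0}$, i.e.\ $\apply{\radocol{0}}{\closedyck} = \emptyset$ and $w\cdot \closedyck \notin \witnesslang$. Applied to each $(k, r+1) \in \radocol{k}$, the same rule yields $(k-1, r) \in \radocol{k-1}$; since $\radocol{k-1}$ is nonempty and all of $\radoset$ is final, $v\cdot \closedyck \in \witnesslang$.

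There is no real obstacle in this argument once one commits to reasoning about the orbit of an entire column under a single letter. The only subtlety worth flagging is the edge case $k=1$ in the $\fauxzero$-descent, where $\radocol{1}$ is emptied in one step rather than funneling through $\radocol{0}$; this fits the general pattern but is cleaner to handle separately. Everything else is bookkeeping on $(c,r)$ pairs using the four transition rules.
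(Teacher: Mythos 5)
Your proof is correct and takes essentially the same route as the paper's: appending $\closedyck$ separates $\radolangof{k}$ from $\radolangof{0}$ in one direction, and appending $\fauxzero^{k}$ --- which leaves the set image of $\radocol{0}$ unchanged while draining $\radocol{k}$ to $\emptyset$ in exactly $k$ steps --- separates in the other. Your explicit orbit bookkeeping, including the $k=1$ edge case, only makes precise what the paper states informally.
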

\begin{proof}
Let $\aword\in\radolangof{0}$ and $\awordp\in\radolangof{k}$, meaning $\aword$ leads to all states in column $0$ while $\awordp$ leads to all states in column $k>0$.
It is easy to find a suffix that shows  $\awordp\not\nerodeleqof{\witnesslang}\aword$, namely $\closedyck$.
Appending $\closedyck$ to $\awordp$ leads to column $\radocol{k-1}$, and so $\awordp\concat\closedyck\in\witnesslang$, while there is no transition on $\closedyck$ from $\radocol{0}$, and so $\aword\concat\closedyck\notin \witnesslang$.

For $\aword\not\nerodeleqof{\witnesslang}\awordp$, we need the $\fauxzero$ transitions.
The idea is to make them fail in $\radocol{k}$ for~$k>0$, and have no effect in $\radocol{0}$.
%
%
The problem is that the states in $\radocol{k}$ must simulate $(0, r)$ for~$r\leq k$.
The trick is to fail with a delay.
%
Instead of having no effect in~$\radocol{0}$, we let the $\fauxzero$ transitions decrement the row.
Instead of failing in $\radocol{k}$, we let the $\fauxzero$ transitions imitate the behavior from $(0, k)$
and move to $(0, k-1)$.
%
%
This is illustrated in Figure~\ref{Figure:TransitionEffect}~(right).

By working with column languages, the $\fauxzero$ transitions fail in $\radocol{k}$ with a delay as follows.
We have $\radolangof{0}.\fauxzero\subseteq\radolangof{0}$ but $\radolangof{k}.\fauxzero\not\subseteq\radolangof{0}$, meaning from $\radocol{0}$ we again reach the entire column~$\radocol{0}$, while from $\radocol{k}$ we only reach the state $(0, k-1)$. 
%
%
%
The decrement behavior in the initial column allows us to distinguish the cases by exhausting the rows.
%
%
%
%
%
Certainly, $\fauxzero^{k-1}$ is enabled in large enough states of~$\radocol{0}$, meaning $\aword\concat\fauxzero^{k}\in\witnesslang$.
The state $(0, k-1)$ reached by $\awordp.\fauxzero$, however, does not enable corresponding transitions, $\awordp\concat\fauxzero^{k}\notin\witnesslang$.
\end{proof}

It is interesting to note that, when executed in $\radocol{k}$ with $k>0$, the $\fauxzero$ transitions resemble reset transitions~\cite{DufourdFinkelReset}.
An analogue of leaving $\radocol{0}$ unchanged despite decrements, however, does not seem to exist in the classical model.
Moreover, reset nets are defined over~$\nat^k$, which is an $\omega^2$-WQO, as opposed to the Rado set.

To conclude the proof of Lemma~\ref{Lemma:ColumnWordAntichain}, we lift the previous result to arbitrary column languages.

\begin{lem}\label{Lemma:MovingColumns}
$\radolangof{i}\nerodeincompof{\witnesslang}\radolangof{j}$ for all $i\neq j$.
\end{lem}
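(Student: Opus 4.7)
The plan is to reduce Lemma~\ref{Lemma:MovingColumns} to the base case Lemma~\ref{Lemma:BaseCase} by appending a common suffix that shifts one of the two column languages down to the initial column language. Without loss of generality assume $i<j$, and pick arbitrary $\aword\in\radolangof{i}$ and $\awordp\in\radolangof{j}$. The goal is to exhibit, for each of the two directions, a distinguishing suffix witnessing $\aword\not\nerodeleqof{\witnesslang}\awordp$ and $\awordp\not\nerodeleqof{\witnesslang}\aword$.

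First I would verify that $\closedyck^{i}$ sends $\radolangof{i}$ into $\radolangof{0}$ and $\radolangof{j}$ into $\radolangof{j-i}$. The $\closedyck$-labeled transitions of $\anults_{\radoset}$ map any state $(c,r)$ with $c\geq 1$ to $(c-1,r-1)$, so their simultaneous application on $\radocol{k}$ with $k\geq 1$ produces exactly $\radocol{k-1}$: the second coordinate ranges over all integers $>k$, hence after decrement over all integers $\geq k$, which is precisely the range of second coordinates in $\radocol{k-1}$. Iterating $i$ times gives $\apply{\radocol{i}}{\closedyck^{i}}=\radocol{0}$ and $\apply{\radocol{j}}{\closedyck^{i}}=\radocol{j-i}$, and since $\aword$ reaches all of $\radocol{i}$ and $\awordp$ reaches all of $\radocol{j}$, one obtains $\aword\concat\closedyck^{i}\in\radolangof{0}$ and $\awordp\concat\closedyck^{i}\in\radolangof{j-i}$, with $j-i>0$.

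Lemma~\ref{Lemma:BaseCase} applied with parameter $k=j-i$ then yields $\radolangof{0}\nerodeincompof{\witnesslang}\radolangof{j-i}$, so $\aword\concat\closedyck^{i}$ and $\awordp\concat\closedyck^{i}$ are Nerode-incomparable. It only remains to transport this incomparability back to the original pair, which is the standard right-congruence property of the Nerode quasi order: if $u\nerodeleqof{\witnesslang}v$, then for any $u'\in\analph^{*}$ the defining condition applied with suffix $u'\concat\awordpp$ directly gives $u\concat u'\nerodeleqof{\witnesslang}v\concat u'$. Contrapositively, the incomparability of $\aword\concat\closedyck^{i}$ and $\awordp\concat\closedyck^{i}$ forces the incomparability of $\aword$ and $\awordp$, as required. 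I do not foresee a substantive obstacle: the entire content sits in Lemma~\ref{Lemma:BaseCase}, and the reduction only needs the uniform action of $\closedyck$ on non-initial columns together with one line of Nerode bookkeeping.
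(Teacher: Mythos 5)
Your proof is correct and follows essentially the same route as the paper: append $\closedyck^i$ to shift the two words down to $\radolangof{0}$ and $\radolangof{j-i}$, and invoke Lemma~\ref{Lemma:BaseCase}. The only (harmless) difference is that the paper dispatches the direction $\awordp\not\nerodeleqof{\witnesslang}\aword$ separately with the direct suffix $\closedyck^j$, whereas you obtain both directions at once from the base case via the contrapositive of the right-congruence property of $\nerodeleqof{\witnesslang}$ — equally valid, and slightly more economical.
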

\begin{proof}
Let $i<j$ and consider $\aword\in\radolangof{i}$ and $\awordp\in\radolangof{j}$.
For $\awordp\not\nerodeleqof{\witnesslang}\aword$, we append $\closedyck^j$, which is possible only from the larger column: $\awordp\concat\closedyck^{j}\in\witnesslang$ but $\aword\concat\closedyck^{j}\notin\witnesslang$.
For $\aword\not\nerodeleqof{\witnesslang}\awordp$, we append $\closedyck^i$.
Then $\aword\concat\closedyck^i\in \radolangof{0}$ while $\awordp\concat\closedyck^i\in\radolangof{k}$ with $k>0$.
Now Lemma~\ref{Lemma:BaseCase} applies and yields a suffix $\awordpp$ so that $\aword\concat\closedyck^i\concat \awordpp\in \witnesslang$ but $\awordp\concat\closedyck^i\concat \awordpp\notin \witnesslang$.
\end{proof}

The \wsts\ accepting the witness language $\witnesslang$ uses non-determinism only in the choice of the initial state.
The transitions are deterministic.
Moreover, the Rado WQO is embedded in every non-$\omega^2$-WQO~\cite[Section 2]{JANCAR1999}.
Given the determinizability results in Theorem~\ref{Theorem:PositiveExpressivity}, language $\witnesslang$ thus shows non-determinizability of \wsts\ with minimal requirements.

\section{A Specialized Invariant Construction for \dwsts}\label{Section:SimpleRegSep}
With the determinizability results at hand, we now give an alternative proof of our regular separability result in Theorem~\ref{Theorem:RegSep} for the special case that one of the \wsts\ is deterministic. 
We again use the proof principle for separability developed in Theorem~\ref{thm:core}, which means our alternative proof takes the form of a specialized invariant construction.
The advantage of this invariant construction over the general case is that (i) it merely decomposes an invariant of the product WSTS rather than adding limits of converging sequences and (ii) it works with better known concepts, namely ideals rather than closed sets. 
\begin{thm}\label{Theorem:RegSepLimited}
Let $\anults_1, \anults_2$ be WSTS with $\anults_2$ deterministic and  $\langof{\anults_1}\cap\langof{\anults_2}=\emptyset$. 
Then there is a regular separator of $\langof{\anults_1}$ and $\langof{\anults_2}$ whose states are ideals of the product~WQO. 
\end{thm}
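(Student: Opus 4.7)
The plan is to instantiate the proof principle of Theorem~\ref{thm:core} with a finitely-represented inductive invariant whose representation is given by an ideal decomposition. Since $\langof{\anults_1}\cap\langof{\anults_2}=\emptyset$, the product $\anults_1 \times \anults_2$ is a WSTS with empty language. By Lemma~\ref{Lemma:Invariants}, the downward closure of its reachability set, $X = \downcls{\reachall{\anults_1 \times \anults_2}}$, is an inductive invariant. The key classical fact I would invoke is that in any WQO, every downward-closed set is a \emph{finite} union of ideals (an ideal being a non-empty downward-closed directed set). Applied to the product WQO, this yields a finite decomposition $X = \anideal_1 \cup \dots \cup \anideal_n$ which serves as the desired finite representation, albeit using ideals rather than single states.

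With these ideals in hand, I would adapt the construction from Section~\ref{Subsection:ProofOfProofPrinciple} by taking the $\anideal_i$ themselves as states of the separating automaton $\annfa$. Initial states are ideals that meet $\initstates_1 \times \initstates_2$, final states are ideals that meet $\finalstates_1 \times \states_2$ (since we approximate $\langof{\anults_1}$), and the transitions are the natural over-approximation: $\anideal \trans{a} \anidealp$ whenever some pair in $\anideal$ has an $a$-successor in $\anidealp$ in the product. The two correctness properties are proved by mirroring Lemmas~\ref{lem:sim} and~\ref{lem:invsimWprime}. For $\langof{\anults_1}\subseteq \langof{\annfa}$, determinism of $\anults_2$ lifts any $\anults_1$-computation on $w$ to a computation of the product, and upward compatibility together with the fact that $X$ covers the reachability set guarantees that some ideal $\anideal_i$ tracks it. For disjointness from $\langof{\anults_2}$, the deterministic $\anults_2$-computation on $w$ yields a state that must be dominated by the second component of some pair in the final ideal reached by $\annfa$, producing a state of $X$ in $\finalstates_1 \times \finalstates_2$, contradicting property~\eqref{eq:iiF}.

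The main obstacle will be performing the inductive simulation arguments at the level of ideals rather than individual states. An ideal contains infinitely many incomparable pairs, so the ``witness'' element that simulates the current state of $\anults_1 \times \anults_2$ must be chosen carefully to be consistent over the run. The directedness of ideals is precisely what is needed here: for any finite collection of pairs in $\anideal_i$ used along a prefix of the computation, directedness supplies a common upper bound inside $\anideal_i$ whose second component dominates the state reached by $\anults_2$. This makes the simulation from deterministic $\anults_2$ into the separator robust, even though the separator is non-deterministic and its states are infinite sets. Once this is in place, the disjointness argument closes exactly as in Proposition~\ref{prop:SepWprime}, and Theorem~\ref{Theorem:RegSepLimited} follows.
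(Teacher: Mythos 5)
You take the same route as the paper: let $X=\downcls{\reachall{\anults_1\times\anults_2}}$ be the inductive invariant of the product guaranteed by Lemma~\ref{Lemma:Invariants}, decompose it by Lemma~\ref{lem:idec} into finitely many ideals, and use these ideals as the states of the separating automaton. The paper packages exactly this as the ideal completion ($\idcompl{\anults_1}\times\idcompl{\anults_2}=\idcompl{(\anults_1\times\anults_2)}$ together with Proposition~\ref{prop:inducedinvariant}) and then invokes Theorem~\ref{thm:core}; you inline the construction. The gap is in your transition relation. You put $J\trans{\aletter}K$ whenever \emph{some} pair in $J$ has an $\aletter$-successor in $K$. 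Writing $J=J_1\times J_2$ and $K=K_1\times K_2$ (Lemma~\ref{lem:prodideals}), the analogue of Lemma~\ref{lem:invsimWprime} that you need is that every ideal reached by $\annfa$ on $\aword$ contains a pair whose second component dominates the unique state $\reach{\anults_2}{\aword}$; this is what lets you combine, at the end, an element of $K\cap(\finalstates_1\times\states_2)$ with a pair dominating the accepting $\anults_2$-state (using directedness of $K$) into a point of $X\cap(\finalstates_1\times\finalstates_2)$. Your transitions do not preserve this invariant. If $(x,x')\in J$ dominates the current $\anults_2$-state while the transition to $K$ is witnessed by a different pair $(r,r')\in J$ with $r'$ incomparable to $x'$, then directedness yields a common upper bound $(u,u')$ \emph{in the source ideal $J$}, and monotonicity gives $\transitions_2(x',\aletter)\leq\transitions_2(u',\aletter)$ and $\transitions_2(r',\aletter)\leq\transitions_2(u',\aletter)$; but the only element you know to lie in $K_2$ is $\transitions_2(r',\aletter)$, which sits \emph{below} $\transitions_2(u',\aletter)$, so nothing forces $K_2$ to contain anything above $\transitions_2(x',\aletter)$. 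The upper bound your sketch appeals to lives in the wrong ideal: the argument needs one in the target, and the ``meets'' transition relation does not supply it.

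The repair is precisely the paper's definition of transitions in the ideal completion. Since $\anults_2$ is deterministic and $J_2$ is directed, $\downcls{\transitions_2(J_2,\aletter)}$ is itself a single ideal; one must allow $J\trans{\aletter}K$ only when $K_2$ contains this \emph{entire} successor ideal (while $K_1$ contains some maximal ideal of $\downcls{\transitions_1(J_1,\aletter)}$). Because $X$ is an inductive invariant and ideals are irreducible (Lemma~\ref{lem:idealsirred}), such a target always exists among the maximal ideals of $X$, so the inclusion $\langof{\anults_1}\subseteq\langof{\annfa}$ is unharmed, and the disjointness invariant now propagates, since the unique $\anults_2$-state reached on $\aword$ always lies in the second component of the reached ideal. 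With this change your argument coincides with the paper's proof via Lemma~\ref{lem:idcompl} and Proposition~\ref{prop:inducedinvariant}.
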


The key to our construction is the notion of ideal completions \cite{InfBranching14,FG09}.
We show that every invariant for a \WSTS yields a
finitely-represented invariant for the corresponding ideal completion.
Theorem~\ref{Theorem:RegSepLimited} follows from this. 
\subsection{Ideal Completion}
An \emph{ideal} in a  \wqo $(\states, \leq)$ is a downward-closed set $\emptyset\neq J\subseteq \states$ which is directed:
for every $j, j'\in J$ there is $j'' \in J$ with $j  \leq j''$ and $j' \leq j''$.
In a product WQO $(\states_1\times \states_2, \leq_{\times})$, the ideals are precisely the products of the ideals in $\states_1$ and~$\states_2$.
\begin{lem}[\cite{KP92,FG09,LS15}]
\label{lem:prodideals}
    A set $J\subseteq \states_1\times \states_2$ is an ideal if and only if $J = J_1 \times J_2$, where $J_1 \subseteq \states_1$
    and $J_2\subseteq \states_2$ are ideals.
\end{lem}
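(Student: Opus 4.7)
The plan is to prove both directions directly from the definitions of ideal (non-empty, downward-closed, directed). The product order on $\states_1 \times \states_2$ is componentwise, so most checks reduce to checking each component separately; the one place where something non-trivial happens is recovering $J_1 \times J_2 \subseteq J$ in the forward direction, which is where directedness of the ambient ideal is essential.

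For the easy direction, suppose $J_1 \subseteq \states_1$ and $J_2 \subseteq \states_2$ are ideals. Non-emptiness of $J_1 \times J_2$ is immediate. Downward-closedness follows because $(t_1,t_2) \leq (s_1,s_2)$ in the product means $t_i \leq s_i$ for each $i$, so $t_i \in J_i$ by downward-closedness of $J_i$. For directedness, given $(a_1,a_2),(b_1,b_2) \in J_1 \times J_2$, pick upper bounds $c_i \in J_i$ of $a_i,b_i$ using directedness of each $J_i$; then $(c_1,c_2)$ lies in $J_1 \times J_2$ and dominates both pairs.

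For the converse, assume $J \subseteq \states_1 \times \states_2$ is an ideal. Define $J_i$ as the $i$-th projection of $J$. I will verify that each $J_i$ is an ideal and that $J = J_1 \times J_2$. Non-emptiness and directedness of $J_i$ follow by projecting the corresponding witnesses in $J$: if $a,b \in J_1$ come from $(a,a'),(b,b') \in J$, a common upper bound $(c_1,c_2) \in J$ provided by directedness of $J$ yields $c_1 \in J_1$ above both $a$ and $b$. Downward-closedness of $J_1$ uses a small trick: if $t \leq s$ with $s \in J_1$ witnessed by $(s,u) \in J$, then $(t,u) \leq (s,u)$, so $(t,u) \in J$ by downward-closedness of $J$, giving $t \in J_1$. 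The inclusion $J \subseteq J_1 \times J_2$ is trivial from the definition of projection.

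The main step, and the only slightly subtle point, is the reverse inclusion $J_1 \times J_2 \subseteq J$. Given $s_1 \in J_1$ and $s_2 \in J_2$, pick witnesses $(s_1,a) \in J$ and $(b,s_2) \in J$; note that a priori these come from unrelated elements of $J$. I invoke directedness of $J$ to find $(c_1,c_2) \in J$ above both witnesses, which in particular gives $s_1 \leq c_1$ and $s_2 \leq c_2$, i.e.\ $(s_1,s_2) \leq (c_1,c_2)$. Downward-closedness of $J$ then places $(s_1,s_2)$ in $J$, completing the proof. This is the only place the directedness hypothesis on $J$ (as opposed to mere downward-closedness) is needed, and it is exactly what prevents the statement from being a purely formal consequence of the componentwise definition of the product order.
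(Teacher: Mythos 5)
Your proof is correct and complete: both directions are verified from the definition of an ideal (non-empty, downward closed, directed), and you correctly identify that directedness of $J$ is the essential ingredient for the inclusion $J_1\times J_2\subseteq J$. The paper states this lemma as a known fact with citations to the literature and gives no proof of its own, so there is nothing to compare against; your direct verification via projections is the standard argument from the cited references.
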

Every downward-closed set decomposes into finitely many ideals.
In fact, the finite antichain property is sufficient and necessary for this.
\begin{lem}[\cite{KP92,FG09,LS15}]
\label{lem:idec}
    In a \wqo, every downward-closed set is a finite union of ideals.
\end{lem}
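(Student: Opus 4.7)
The plan is to prove the lemma in two conceptual steps: first characterize ideals as the \emph{irreducible} downward-closed sets, and then show that in a \wqo\ every downward-closed set decomposes into finitely many irreducibles by a well-foundedness argument.

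For the characterization, I would call a nonempty downward-closed $D$ irreducible if $D = D_1 \cup D_2$ with $D_1, D_2$ downward-closed forces $D = D_1$ or $D = D_2$, and then prove that irreducible equals ideal. If $D$ is an ideal and $D = D_1 \cup D_2$ with both $D_i \subsetneq D$, pick $x \in D \setminus D_1$ and $y \in D \setminus D_2$; by directedness there is $z \in D$ with $x, y \leq z$, but then $z \in D_1$ (say) together with downward-closure forces $x \in D_1$, a contradiction. Conversely, if $D$ is nonempty downward-closed but not directed, witnesses $x, y \in D$ with no common upper bound in $D$ yield the decomposition $D = (D \setminus \upcls{\{x\}}) \cup (D \setminus \upcls{\{y\}})$, where both pieces are proper downward-closed subsets, using the easy fact that removing an upward-closed set from a downward-closed set leaves a downward-closed set.

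Next I would establish that the inclusion order on downward-closed sets of a \wqo\ is well-founded. Suppose, for contradiction, that $D_1 \supsetneq D_2 \supsetneq \ldots$ is strictly descending, and choose $x_i \in D_i \setminus D_{i+1}$. If $x_i \leq x_j$ for some $i < j$, then $x_j \in D_j \subseteq D_{i+1}$ combined with downward-closure of $D_{i+1}$ would place $x_i$ in $D_{i+1}$, contrary to the choice of $x_i$. Hence the sequence $(x_i)$ admits no ascending pair, contradicting the \wqo\ property.

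Finally I would conclude by minimum-counterexample. If some downward-closed set fails to be a finite union of ideals, pick a $\subseteq$-minimal such $D$; it is nonempty (the empty set is an empty union of ideals) and is not itself an ideal, so by the characterization it is reducible: $D = D_1 \cup D_2$ with both $D_i \subsetneq D$. By minimality each $D_i$ is a finite union of ideals, and hence so is $D$, a contradiction. The argument is essentially standard; the only mildly subtle point is the characterization half where a nondirected downward-closed set is split, which relies on the stability of downward closure under removal of upward-closed sets.
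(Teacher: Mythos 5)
Your proof is correct. The paper does not prove this lemma but cites it as known from the literature, and your argument (ideals are exactly the irreducible nonempty downward-closed sets, inclusion on downward-closed sets of a \wqo\ is well-founded, then a minimal-counterexample induction) is precisely the standard proof from those references, with all the steps — including the split of a non-directed set along $\upcls{\{x\}}$ and $\upcls{\{y\}}$ and the well-foundedness argument — carried out correctly.
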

\noindent
We use $\idec{\states}{Z}$ to denote the set of inclusion-maximal ideals in the downward-closed set~$Z$.
By the above lemma, $\idec{\states}{Z}$ is always finite and
\begin{align}
\label{eq:union}
    Z\ =\ \bigcup \idec{\states}{Z}\ .
\end{align}

We will also make use of the fact that ideals are irreducible in the following sense.

\begin{lem}[\cite{KP92,FG09,LS15}]
\label{lem:idealsirred}
    Let $(\states, \leq)$ be a \wqo.
    If $Z \subseteq \states$ is downward closed and $I \subseteq Z$ an ideal, then
    $I \subseteq J$ for some $J \in \idec{\states}{Z}$.
\end{lem}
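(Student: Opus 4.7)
The plan is to use the finite ideal decomposition of $Z$ together with the directedness of $I$. By Lemma~\ref{lem:idec}, $Z$ is a finite union of ideals, and equation~\eqref{eq:union} tells us explicitly that $Z = \bigcup \idec{\states}{Z}$, say $Z = J_1 \cup \ldots \cup J_n$ with each $J_k$ a maximal ideal in $Z$. So the inclusion $I \subseteq Z$ gives a finite cover $I \subseteq J_1 \cup \ldots \cup J_n$, and the task reduces to showing that an ideal contained in a finite union of downward-closed sets must be contained in one of them.

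I would argue this by contradiction. Suppose $I \not\subseteq J_k$ for every $k \in \{1, \ldots, n\}$. For each $k$, pick a witness $i_k \in I \setminus J_k$. Since $I$ is directed and there are only finitely many witnesses, iterating the directedness condition yields an upper bound $i^{*} \in I$ with $i_k \leq i^{*}$ for all $k$. Now $i^{*} \in I \subseteq Z$, so $i^{*} \in J_\ell$ for some $\ell$. But $J_\ell$ is downward closed and $i_\ell \leq i^{*}$, forcing $i_\ell \in J_\ell$, which contradicts the choice of $i_\ell$.

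This gives $I \subseteq J_k$ for some $k \in \{1, \ldots, n\}$, and since $J_k \in \idec{\states}{Z}$ the lemma follows. I do not anticipate a genuine obstacle: the only subtlety is the iteration of binary directedness to a common upper bound for finitely many elements, which is a standard induction on $n$ using the ideal axiom on pairs. The finiteness of $\idec{\states}{Z}$, which makes this induction halt, is precisely what the \wqo hypothesis buys us via Lemma~\ref{lem:idec}.
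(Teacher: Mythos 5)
Your argument is correct and is the standard proof of ideal irreducibility: the paper itself cites this lemma from the literature without proof, and the cited proofs proceed exactly as you do, combining the finite decomposition $Z = \bigcup \idec{\states}{Z}$ with directedness of $I$ to produce a common upper bound of the finitely many witnesses and derive a contradiction via downward closure. Your closing remark is also on point -- the \wqo{} hypothesis is needed only to make the cover finite; the irreducibility step itself uses nothing beyond directedness and downward closure.
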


The \emph{ideal completion} $(\idcompl{\states}, \subseteq)$ of $(\states, \leq)$ has as elements all ideals
in $\states$.
The order is inclusion.
The ideal completion $\idcompl{\states}$ can be seen as extension of $\states$: every element $\astate\in \states$ is represented
by $\downclosure{\{\astate\}} \in \idcompl{\states}$, and inclusion among such representations coincides with the original quasi order~$\leq$.
Later, we will also need general ideals that may not be the downward closure of a single element.

In \cite{FG09,InfBranching14}, the notion has been lifted to \WSTS\ $\anults=(\states, \leq, \initstates, \analph, \transitions, \finalstates)$.
The ideal completion is the \ULTS 
$\idcompl{\anults}\ =\ (\idcompl{\states}, \subseteq, \idcompl{\initstates}, \analph, \idcompl{\transitions}, \idcompl{\finalstates})$. 
The WQO is replaced by its ideal completion.
The initial states are the ideals in the decomposition of $\downclosure{I}$, 
$\idcompl{\initstates} = \idec{S}{\downclosure{I}}$. 
The transition relation is defined similarly, by decomposing $\downclosure{\succe{\anults}{J}{a}}$ with $J$ an ideal:
\begin{align*}
\idcompl{\transitions}(J, a) \ = \ \idec{S}{\downclosure{\succe{\anults}{J}{a}}}\ .
\end{align*}
The final states are the ideals that intersect $F$, 
    $\idcompl{\finalstates} = \setof{J\in\idcompl{\states}}{J\cap F \neq \emptyset}$. 

The ideal completion yields a ULTS, preserves the language, preserves determinism, and is compatible with forming products.
\begin{lem}
\label{lem:idcompl}
Let $\anults$ be a WSTS.
Then $\idcompl{\anults}$ is a ULTS, we have $\langof{\idcompl{\anults}}=\langof{\anults}$, and if $\anults$ is deterministic, so is $\idcompl{\anults}$.   
Finally, $\idcompl{\anults} \times \idcompl{\anultsp}=\idcompl{(\anults \times \anultsp)}$, with $\anultsp$ another WSTS.
\end{lem}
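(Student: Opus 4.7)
The plan is to verify the four assertions in sequence, leaning on Lemmas~\ref{lem:prodideals},~\ref{lem:idec}, and~\ref{lem:idealsirred} as the main tools.

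I would first show that $\idcompl{\anults}$ is a ULTS. Upward-closedness of $\idcompl{\finalstates}$ is immediate: if $J\cap F \neq \emptyset$ and $J\subseteq J'$, then $J'\cap F \neq \emptyset$. For the simulation property, suppose $J \subseteq J'$ and $K \in \idcompl{\transitions}(J, a)$. Applying the simulation property of $\anults$ pointwise, every successor of a state in $J$ on $a$ is dominated by a successor in $\succe{\anults}{J'}{a}$, so $\succe{\anults}{J}{a} \subseteq \downclosure{\succe{\anults}{J'}{a}}$. Hence $K \subseteq \downclosure{\succe{\anults}{J'}{a}}$, and Lemma~\ref{lem:idealsirred} yields some $K' \in \idcompl{\transitions}(J', a)$ with $K \subseteq K'$.

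For language preservation, I would prove the stronger identity $\reach{\idcompl{\anults}}{w} = \idec{S}{\downclosure{\reach{\anults}{w}}}$ by induction on $|w|$, using \eqref{eq:union} at each step to collapse the union of the reachable ideals back to $\downclosure{\reach{\anults}{w}}$. Membership $w \in \langof{\idcompl{\anults}}$ is then equivalent to $\downclosure{\reach{\anults}{w}} \cap F \neq \emptyset$, which coincides with $\reach{\anults}{w} \cap F \neq \emptyset$ by upward-closedness of $F$. For determinism preservation, $|\initstates|=1$ makes $\downclosure{\initstates}$ itself an ideal, so $\idcompl{\initstates}$ is a singleton. Given an ideal $J$ and letter $a$, directedness of $J$ combined with deterministic simulation makes $\succe{\anults}{J}{a}$ directed (any two successors are dominated by the unique successor of a common upper bound), so $\downclosure{\succe{\anults}{J}{a}}$ is already an ideal and $\idcompl{\transitions}(J, a)$ is a singleton.

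The product compatibility rests on Lemma~\ref{lem:prodideals}: the state space of $\idcompl{(\anults \times \anultsp)}$ is in natural bijection with that of $\idcompl{\anults} \times \idcompl{\anultsp}$ via $(J_1, J_2) \mapsto J_1 \times J_2$. Under this bijection, initial and final states correspond routinely, since $\idec{}{\downclosure{I \times I'}}$ factors into the product of componentwise decompositions, and $(J_1 \times J_2) \cap (F \times F') = (J_1 \cap F) \times (J_2 \cap F')$. The main technical step, which I expect to be the principal obstacle, is matching the transition relations: one must show $\idec{S_1\times S_2}{\downclosure{\succe{\anults\times\anultsp}{J_1\times J_2}{a}}}$ equals $\idec{S_1}{\downclosure{\succe{\anults}{J_1}{a}}} \times \idec{S_2}{\downclosure{\succe{\anultsp}{J_2}{a}}}$. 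This follows from $\succe{\anults\times\anultsp}{J_1\times J_2}{a} = \succe{\anults}{J_1}{a} \times \succe{\anultsp}{J_2}{a}$ together with the observation, again via Lemma~\ref{lem:prodideals}, that the maximal-ideal decomposition of a product of downward-closed sets is the product of their maximal-ideal decompositions.
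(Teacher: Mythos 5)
The paper itself does not prove Lemma~\ref{lem:idcompl}; it defers to \cite{FG09,InfBranching14} and the full version. Your reconstruction follows the standard route and is essentially sound: upward-closedness of $\idcompl{\finalstates}$, the simulation property via Lemma~\ref{lem:idealsirred}, determinism via directedness of $\succe{\anults}{J}{a}$ for deterministic $\anults$, and product compatibility via Lemma~\ref{lem:prodideals} are all argued correctly. (Minor remark: for the simulation property you do not even need the simulation of $\anults$, since $J\subseteq J'$ already gives $\succe{\anults}{J}{a}\subseteq\succe{\anults}{J'}{a}$.)

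The one step that would fail as stated is the induction invariant for language preservation. The identity $\reach{\idcompl{\anults}}{w} = \idec{\states}{\downclosure{\reach{\anults}{w}}}$ is too strong: by definition, $\reach{\idcompl{\anults}}{wa}$ is the union over the reached ideals $J$ of the \emph{local} decompositions $\idec{\states}{\downclosure{\succe{\anults}{J}{a}}}$, and an ideal maximal in one of these local sets need not be maximal in the global set $\downclosure{\reach{\anults}{wa}}$ (it may be strictly contained in an ideal arising from another $J$). So the reached set can strictly contain the maximal-ideal decomposition and the induction breaks. The fix is to weaken the invariant to $\bigcup \reach{\idcompl{\anults}}{w} = \downclosure{\reach{\anults}{w}}$, which is exactly what your collapsing step via \eqref{eq:union} establishes, together with the observation (from upward compatibility) that $\downclosure{\succe{\anults}{\downclosure{X}}{a}} = \downclosure{\succe{\anults}{X}{a}}$. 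This weaker invariant still gives $\langof{\idcompl{\anults}}=\langof{\anults}$, since some reached ideal meets $\finalstates$ iff their union does, iff $\reach{\anults}{w}$ meets $\finalstates$ by upward-closedness of $\finalstates$. With that correction the argument is complete.
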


\noindent
As a matter of fact, $\idcompl{\anults}$ is even finitely branching, but we do not need this property.
\subsection{Invariant Construction}
Ideal completions make it easy to find inductive invariants that are finitely represented. 
Assume the given \UWSTS $\anults$ has an inductive invariant $X$, not necessarily finitely represented. 
By the definition of invariants, $X$ is downward closed.
Thus, by Lemma~\ref{lem:idec}, $X$ is a finite union of ideals.
These ideals are states of the ideal completion $\idcompl{\anults}$. 
To turn $\idec{\states}{X}$ into an inductive invariant of $\idcompl{\anults}$, it remains to take the downward closure of the set.
As the order among ideals is inclusion, this does not add states.
In short, an inductive invariant for $\anults$ induces a finitely-represented inductive invariant for $\idcompl{\anults}$.

\begin{prop}
\label{prop:inducedinvariant}
    If $X \subseteq \states$ is an inductive invariant of $\anults$,
    $\downclosure{\idec{\states}{X}}$ is a finitely-represented inductive invariant of $\idcompl{\anults}$.
\end{prop}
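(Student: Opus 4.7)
The plan is to verify the finite-representation property and the three invariant conditions \eqref{eq:iiI}, \eqref{eq:iiF}, \eqref{eq:iiSucc} for the set $\downclosure{\idec{\states}{X}}$ seen as a subset of the ideal completion $(\idcompl{\states}, \subseteq)$. Finite representation is immediate from Lemma~\ref{lem:idec}: the set $\idec{\states}{X}$ of maximal ideals in the decomposition of $X$ is finite, and by definition $\downclosure{\idec{\states}{X}}$ is its downward closure in $\idcompl{\states}$. Downward closedness in $\idcompl{\states}$ also holds by construction, since the order on $\idcompl{\states}$ is inclusion.

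For \eqref{eq:iiI}, the initial states of $\idcompl{\anults}$ are the ideals in $\idec{\states}{\downclosure{I}}$. Since $I \subseteq X$ and $X$ is downward closed, $\downclosure{I} \subseteq X$. Lemma~\ref{lem:idealsirred} then tells us that every ideal contained in the downward-closed set $X$ sits inside some maximal element $J \in \idec{\states}{X}$, and therefore lies in $\downclosure{\idec{\states}{X}}$. For \eqref{eq:iiF}, any ideal $J \in \idcompl{\finalstates} \cap \downclosure{\idec{\states}{X}}$ would satisfy $J \cap F \neq \emptyset$ and $J \subseteq J'$ for some $J' \in \idec{\states}{X}$. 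This forces $J' \cap F \neq \emptyset$, contradicting $J' \subseteq X$ together with $X \cap F = \emptyset$.

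The successor condition \eqref{eq:iiSucc} is the one that brings all ingredients together. Pick $J \in \downclosure{\idec{\states}{X}}$, fix $J' \in \idec{\states}{X}$ with $J \subseteq J'$, and let $a\in\analph$. From $J \subseteq J'$ we get $\succe{\anults}{J}{a} \subseteq \succe{\anults}{J'}{a}$, and invariance of $X$ yields $\succe{\anults}{J'}{a} \subseteq \succe{\anults}{X}{a} \subseteq X$. Since $X$ is downward closed, $\downclosure{\succe{\anults}{J}{a}} \subseteq X$. A final appeal to Lemma~\ref{lem:idealsirred} places each ideal in $\idcompl{\transitions}(J, a) = \idec{\states}{\downclosure{\succe{\anults}{J}{a}}}$ inside some element of $\idec{\states}{X}$, so $\idcompl{\transitions}(J, a) \subseteq \downclosure{\idec{\states}{X}}$, as required. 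The proof is largely routine once one commits to working with ideal decompositions; the only point that could mislead is that no appeal to the simulation property of $\anults$ is needed in the successor step, because the inclusion of states $J \subseteq J'$ already propagates to their successor sets. Lemma~\ref{lem:idealsirred} is the glue that repeatedly lets us pass from an arbitrary sub-ideal of $X$ to a representative in the finite decomposition $\idec{\states}{X}$.
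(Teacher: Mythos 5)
Your proof is correct and follows essentially the same route as the paper's: finiteness from Lemma~\ref{lem:idec}, and each of the three invariant properties reduced to the observation that $\downclosure{\idec{\states}{X}}$ contains every ideal contained in $X$ (which you justify via Lemma~\ref{lem:idealsirred}, the same fact the paper uses implicitly). The only cosmetic difference is your detour through an intermediate $J'\in\idec{\states}{X}$ in the successor step, where the paper argues directly from $J\subseteq X$; your closing remark that the simulation property is not needed here is accurate and consistent with the paper's argument.
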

\begin{proof}
    Define $\sepstates = \idec{\states}{X}$.
    Since $\sepstates$ contains all ideals $J \subseteq X$ that are maximal wrt.~inclusion, $\downclosure{\sepstates}$ contains all ideals $J \subseteq X$.
    We observe that
    \begin{align*}
    X\ \stackrel{\eqref{eq:union}}{=}\ \bigcup \sepstates\ =\ \bigcup \downclosure{\sepstates}
    \ .
    \end{align*}
    By Lemma~\ref{lem:idec}, $\sepstates$ is finite and thus $\downclosure{Q}$  finitely represented.
    It remains to check that $\downclosure{\sepstates}$ satisfies the Properties~\eqref{eq:iiI}, \eqref{eq:iiF}, and~\eqref{eq:iiSucc}.

    For Property~\eqref{eq:iiI}, we need to prove $\idec{\states}{\downclosure{\initstates}} \subseteq \downclosure{\sepstates}$.
    Since $X$ is an invariant of~$\anults$, we have $I \subseteq X$. 
    Since $X$ is downward closed, we obtain $\downclosure{I} \subseteq X$.
    Consequently, every ideal that is a subset of $\downclosure{I}$ is also a subset of $X$, and $\downclosure{Q}$ contains all such ideals.
    For Property~\eqref{eq:iiF}, assume towards a contradiction that $\downclosure{Q}$ contains an ideal $J$ that is final in~$\idcompl{\anults}$.
    Then $J$ contains a final state.
    Since $J \subseteq X$, we obtain a contradiction to $X \cap F = \emptyset$, which again holds by the fact that $X$ is an invariant.
    To check the inclusion $\succe{\idcompl{\anults}}{\downclosure{\sepstates}}{a} \subseteq \downclosure{\sepstates}$, we pick an ideal $J \in \downclosure{\sepstates}$ and show $\succe{\idcompl{\anults}}{J}{a} \subseteq \downclosure{\sepstates}$.
    Recall the definition $\succe{\idcompl{\anults}}{J}{a} = \idec{\states}{\downclosure{\succe{\anults}{J}{a}}}$.
    Thus, every element of $\succe{\idcompl{\anults}}{J}{a}$ is an ideal that is a subset of $\downclosure{\succe{\anults}{J}{a}}$.
    We have $\succe{\anults}{X}{a} \subseteq X$ as $X$ is an invariant.
    Since $J \subseteq X$, this implies $\succe{\anults}{J}{a} \subseteq X$. 
    We even have $\downclosure{\succe{\anults}{J}{a}} \subseteq X$ as $X$ is downward closed.
    Hence, every ideal that is a subset of $\downclosure{\succe{\anults}{J}{a}}$ is also subset of $X$, and thus an element of
    $\downclosure{\sepstates}$.
\end{proof}

\begin{proof}[Proof of Theorem~\ref{Theorem:RegSepLimited}]
Consider WSTS $\anults_1$ and $\anults_2$ with disjoint languages and where $\anults_2$ is deterministic.
Let $(\states_1, \leq_1)$ resp. $(\states_2, \leq_2)$ be the underlying sets of states. 
Using Lemma~\ref{lem:idcompl}, we construct the ideal completions $\idcompl{\anults_1}$ and $\idcompl{\anults_2}$, and note that $\idcompl{\anults_2}$ is still deterministic. 
This allows us to invoke the proof principle for regular separability, Theorem~\ref{thm:core}. 
We argue that the product $\idcompl{\anults_1}\times\idcompl{\anults_2}$ has a finitely-represented inductive invariant and, moreover, the elements representing the invariant are ideals of $\states_1\times \states_2$.
Then Theorem~\ref{Theorem:RegSepLimited} follows. 

To arrive at the invariant, we use the equality $\idcompl{\anults_1}\times\idcompl{\anults_2}=\idcompl{(\anults_1\times \anults_2)}$. 
Since $\langof{\anults_1}\cap\langof{\anults_2}=\emptyset=\langof{\anults_1\times\anults_2}$, the synchronized product of the WSTS has an inductive invariant $X$ by Lemma~\ref{Lemma:Invariants}. 
With Proposition~\ref{prop:inducedinvariant}, we get that $\downclosure{\idec{\states_1\times\states_2}{X}}$ is a finitely-represented inductive invariant of $\idcompl{(\anults_1\times \anults_2)}$. 
The elements representing the invariant are ideals of $\states_1\times\states_2$, as required.
\end{proof}

\section{Downward-Compatible WSTS}\label{Section:DWSTS}
We show that the regular separability and non-determinizability results we have obtained for upward-compatible \wsts\ so far can be lifted to downward-compatible WSTS (\downwsts).
In \downwsts, smaller states simulate larger ones and the set of final states is downward closed.
We lift our results by establishing general relations between the language classes $\langof{\upwsts}$, $\langof{\downwsts}$, $\langof{\updwsts}$, and $\langof{\downdwsts}$.
Figure~\ref{Figure:Relations} summarizes them.
\begin{figure}[h]
    \centering
    \begin{tikzpicture}

        \node (upwsts) at (2, 4) {$\langof{\upwsts}$};
        \node (downwsts) at (7, 4) {$\langof{\downwsts}$};
        \node (detupwsts) at (1, 2) {$\langof{\updwsts}$};
        \node (detdownwsts) at (8, 2) {$\langof{\downdwsts}$};
        \node (omegasqupwsts) at (0, 0) {$\langof{\upomegasqwsts}$};
        \node (omegasqdownwsts) at (9, 0) {$\langof{\downomegasqwsts}$};

        \node (finbdownwsts) at (10.5, 2) {$=\langof{\downfinbranchwsts}$};
        \node (finbdownwstsreason) at (9.5, 2.4) {Proposition~\ref{Proposition:DWSTSfinb}};

        \node (finbupwsts) at (-0.8, 0.37) {$\langof{\upfinbranchwsts}$};

        \path[<->]
        (detupwsts) edge
        node[yshift=-0.75em]{$=_{\cmplabel}$,\ {\text{\footnotesize Lemma~\ref{Lemma:Complementation}}}}
        node[yshift=0.75em]{
            $\not\subseteq_{\revlabel}$, $\not\supseteq_{\revlabel}$,\ {\text{\footnotesize Lemma~\ref{Lemma:DetRevDet}}}}
        (detdownwsts)

        (upwsts) edge node[yshift = 0.5em] {$=_{\revlabel}$,\ {\text{\footnotesize Lemma~\ref{Lemma:Reversal}}}} (downwsts);

        \path[->]
        (detupwsts) edge node[xshift=3.5em] {$\subsetneq,\ {\text{\footnotesize Theorem~\ref{Theorem:WSTSneqDWSTS}}}$} (upwsts)
        (detdownwsts) edge node[xshift=-3.5em] {$\subsetneq,\ {\text{\footnotesize  Theorem~\ref{Theorem:ComplWitnessLanguage}}}$} (downwsts);

        \path[<->]
        (omegasqupwsts) edge node[yshift = 0.5em] {$=_{\revlabel}$,\ {\text{\footnotesize Lemma~\ref{Lemma:Reversal}}}} (omegasqdownwsts);

        \path[->]
        (omegasqupwsts) edge node[xshift=3.5em] {$\subseteq,\ {\text{\footnotesize Theorem~\ref{Theorem:PositiveExpressivity}}}$} (detupwsts)
        (omegasqdownwsts) edge node[xshift=-3.5em] {$\subseteq,\ {\text{\footnotesize  Theorem~\ref{Theorem:ComplWitnessLanguage}}}$} (detdownwsts)
        (finbupwsts) edge node[xshift=3.5em] {} (detupwsts);
    \end{tikzpicture}
    \caption{Relations between language classes.\label{Figure:Relations}}
\end{figure}

%
%
%
%
\paragraph*{Downward Compatibility}
A \emph{downward-compatible LTS} (\downlts)~$\adownlts=(\states, \initstates, \analph, \transitions, \finalstates)$ is an LTS whose states are equipped with a quasi order $\leq\ \subseteq \states\times \states$ so that the following holds.
The final states are downward closed, $\downcls{\finalstates}=\finalstates$, and $\geq$ is a simulation relation.
Recall that this means for all $\astate_1\leq\astatep_1$  and for all $\astatep_2\in\transitions(\astatep_1, \aletter)$ there is $\astate_2\in\transitions(\astate_1, \aletter)$ with $\astate_2\leq\astatep_2$. We denote the class of deterministic \downlts\ by \downdlts.
We use $\langof{\downlts}$ and $\langof{\downdlts}$ to refer to the classes of all \downlts\ resp. $\downdlts$ languages.
If $\leq$ is also a WQO, we call $\adownlts$ a \emph{downward-compatible WSTS} ($\downwsts$). 
\subsection{Relations between $\langof{\downlts}$ and $\langof{\ults}$}
%
%
The languages accepted by \downlts\ are the reverse of the languages accepted by \ults, and vice-versa.
This is easy to see by reversing the transitions.
Let $\anults=(\states, \leq, \initstates, \analph, \transitions, \finalstates)$
be an \ults.
We define $\reverseof{\anults}=(\states, \leq, \finalstates, \analph, \reverseof{\transitions}, \downcls{\initstates})$ to be its reversal.
The initial and final states are swapped and the direction of the transitions is flipped, $\reverseof{\transitions}=\setcond{(\astate, \aletter, \astatep)}{(\astatep, \aletter, \astate')\in\transitions, \astate\leq\astate'}$.
Note that we close the initial states downwards and add transitions from states smaller than the original target.
This corresponds to the assumption that the original transitions relate downward-closed sets.
%
%
The construction can also be applied in reverse to get a ULTS $\reverseof{\adownlts}$ from a DLTS $\adownlts$.
\begin{lem}\label{Lemma:Reversal}
If $\anults\in \ults$ (\wsts, $\omegasqwsts$), then $\reverseof{\anults}\in\downlts$ (\downwsts, $\downomegasqwsts$) and $\langof{\reverseof{\anults}}=\reverseof{\langof{\anults}}$.
If $\adownlts\in \downlts$ (\downwsts, $\downomegasqwsts$), then $\reverseof{\adownlts}\in\ults$ (\wsts, $\omegasqwsts$) and $\langof{\reverseof{\adownlts}}=\reverseof{\langof{\adownlts}}$.
\end{lem}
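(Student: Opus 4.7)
The plan is to prove the lemma by analyzing the reversal construction directly, breaking it into three independent claims: the order-theoretic properties (WQO/$\omega^2$-WQO) are trivial, downward compatibility follows from a direct unfolding of the definition, and language equivalence follows from a path-reversal argument. I will prove the first half (ULTS to DLTS); the second half is symmetric since the reversal construction on a DLTS is entirely analogous (initial states become the upward closure of the original final states, transitions are flipped and closed upwards).

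For the structural properties, I note that $\reverseof{\anults}$ uses the same underlying quasi order as $\anults$, so if $(\states, \leq)$ is a WQO (or $\omega^2$-WQO), the reversal inherits this immediately. The final states of $\reverseof{\anults}$ are $\downcls{\initstates}$, which is downward closed by construction. To show $\geq$ is a simulation for $\reverseof{\transitions}$, I will fix $\astate_1 \leq \astatep_1$ and $\astatep_2 \in \reverseof{\transitions}(\astatep_1, \aletter)$. Unfolding the definition gives $\astate' \geq \astatep_1$ with $(\astatep_2, \aletter, \astate') \in \transitions$. Since $\astate_1 \leq \astatep_1 \leq \astate'$, the same $\astate'$ witnesses $(\astate_1, \aletter, \astatep_2) \in \reverseof{\transitions}$, and taking $\astate_2 = \astatep_2$ completes the simulation.

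For the language equality, the forward inclusion $\reverseof{\langof{\anults}} \subseteq \langof{\reverseof{\anults}}$ is a straightforward path reversal: given an accepting run $\astate_0 \trans{\aletter_1} \astate_1 \trans{\aletter_2} \cdots \trans{\aletter_n} \astate_n$ in $\anults$ with $\astate_0 \in \initstates$ and $\astate_n \in \finalstates$, I use each original transition $\astate_{i-1} \trans{\aletter_i} \astate_i$ with the choice $\astate' = \astate_i$ in the definition of $\reverseof{\transitions}$ to obtain $\astate_i \trans{\aletter_i} \astate_{i-1}$ in $\reverseof{\anults}$. The reversed path starts at $\astate_n \in \finalstates$ (initial in $\reverseof{\anults}$) and ends at $\astate_0 \in \initstates \subseteq \downcls{\initstates}$ (final in $\reverseof{\anults}$).

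The main obstacle is the reverse inclusion, since a run in $\reverseof{\anults}$ need not arise from a single run in $\anults$: the definition of $\reverseof{\transitions}$ introduces a ``slack'' via the comparison $\astate \leq \astate'$. The plan is to use the upward simulation of $\anults$ to absorb this slack by induction along the run. Concretely, given a run $\astatep_0 \trans{\aletter_n} \astatep_1 \trans{\aletter_{n-1}} \cdots \trans{\aletter_1} \astatep_n$ in $\reverseof{\anults}$ with $\astatep_0 \in \finalstates$ and $\astatep_n \in \downcls{\initstates}$, I unfold each reversal transition to obtain states $\astate'_i$ in $\anults$ with $\astatep_i \trans{\aletter_{n-i+1}} \astate'_i$ and $\astatep_{i-1} \leq \astate'_i$. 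Starting from $\astate_0 \in \initstates$ with $\astatep_n \leq \astate_0$, I inductively build a run $\astate_0 \trans{\aletter_1} \astate_1 \trans{\aletter_2} \cdots \trans{\aletter_n} \astate_n$ in $\anults$ by applying upward simulation at each step to $\astatep_{n-i+1} \leq \astate_{i-1}$ and the transition $\astatep_{n-i+1} \trans{\aletter_i} \astate'_{n-i+1}$. A short induction yields $\astatep_0 \leq \astate_n$, and since $\finalstates$ is upward closed with $\astatep_0 \in \finalstates$, we conclude $\astate_n \in \finalstates$, so $\aletter_1 \cdots \aletter_n \in \langof{\anults}$ as required.
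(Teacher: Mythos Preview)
The paper does not include a proof of this lemma in the body; it is stated and deferred to the full version~\cite{keskinmeyer2023sep}. Your argument is correct and is the natural direct proof: the simulation check for $\reverseof{\transitions}$ follows immediately from the built-in downward closure on the source, and the nontrivial inclusion $\langof{\reverseof{\anults}}\subseteq\reverseof{\langof{\anults}}$ is handled by the standard ``absorb the slack via upward compatibility'' induction along the run, exactly as one would expect. One minor slip: in your parenthetical on the DLTS-to-ULTS direction you write that ``initial states become the upward closure of the original final states''; in fact the initial states of $\reverseof{\adownlts}$ are simply the original final states $F$ (already downward closed), while it is the \emph{final} states that become $\upcls{I}$. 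This does not affect your argument, which is otherwise fully symmetric.
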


The \downdlts\ languages are precisely the complements of the \udlts\ languages.
For a \udlts\ or \downdlts\ $\anults=(\states, \leq, \initstate, \analph, \transitions, \finalstates)$, we define the complement $\complementof{\anults}=(\states, \leq, \initstate, \analph, \transitions, \complementof{\finalstates})$ by complementing the set of final states~\cite[Theorem 5]{RabinScott59}.
\begin{lem}\label{Lemma:Complementation}
$\anults\in\udlts$ $(\updwsts)$ iff $\complementof{\anults}\in\downdlts$ $(\downdwsts)$,  and $\langof{\complementof{\anults}}=\complementof{\langof{\anults}}$.
\end{lem}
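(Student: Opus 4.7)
The plan is to separate the claim into two independent observations: the structural one (compatibility flips together with the direction of the closure condition on final states) and the language-theoretic one (standard determinism-based complementation). The WSTS specializations require no extra work since $\complementof{\anults}$ inherits the very same quasi order from $\anults$, so the WQO property is preserved verbatim.

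First, I would check that $\complementof{\anults}$ is deterministic: the initial state and the transition function are unchanged, so $|\initstates|$ and each $|\transitions(\astate,\aletter)|$ remain $1$. Next, the key structural observation: in the deterministic setting the two simulation conditions collapse to the same inequality. Writing $\transitions(\astate,\aletter)$ for the unique successor, upward compatibility of $\anults$ says $\astate_1\leq\astatep_1$ implies $\transitions(\astate_1,\aletter)\leq\transitions(\astatep_1,\aletter)$, and downward compatibility of $\complementof{\anults}$ (with $\geq$ as simulation relation) says exactly the same. Thus the simulation half of compatibility is preserved automatically. For the final states, a set $\finalstates\subseteq \states$ is upward closed iff its complement $\complementof{\finalstates}$ is downward closed, by a trivial set-theoretic argument. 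Combining both halves gives $\anults\in\udlts$ iff $\complementof{\anults}\in\downdlts$, and passing to the WQO case gives $\anults\in\updwsts$ iff $\complementof{\anults}\in\downdwsts$.

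For the language equation, I use determinism and the fact that $\transitions(\astate,\aletter)$ is total (a consequence of $|\transitions(\astate,\aletter)|=1$): for every word $\aword\in\analph^{*}$ there is a unique state $\transitions(\initstate,\aword)\in\states$. By definition, $\aword\in\langof{\anults}$ iff $\transitions(\initstate,\aword)\in\finalstates$, while $\aword\in\langof{\complementof{\anults}}$ iff $\transitions(\initstate,\aword)\in\complementof{\finalstates}$. These are complementary conditions over $\analph^{*}$, so $\langof{\complementof{\anults}}=\analph^{*}\setminus\langof{\anults}=\complementof{\langof{\anults}}$.

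There is no real obstacle here: the only thing to be careful about is that totality of the transition function is genuinely needed for the language complement (otherwise a word with no run would lie in neither language), and this is built into the definition of determinism used in this paper. The converse direction of the structural equivalence follows by symmetry since applying complementation twice recovers $\anults$.
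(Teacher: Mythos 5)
Your proposal is correct and follows essentially the same route as the paper, whose entire argument is the single observation that under determinism the upward and downward simulation conditions coincide (citing Milner), combined with the standard Rabin--Scott complementation of final states. You fill in the remaining routine details (closure of $\finalstates$ flips under complement, the order and hence the WQO property is untouched, and totality of the deterministic transition function gives each word a unique run, so the acceptance conditions are complementary), all of which is exactly what the paper leaves implicit.
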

Behind this is the observation that, under determinism, $\leq$ is a simulation if and only if $\geq$ is~\cite[Theorem 3.3(ii)]{M71}.
\subsection{Lifting Results}\label{Subsection:GeneralizingResults}

\mbox{}

\paragraph*{Regular Separability of \downwsts}
We obtain the regular separability of disjoint \downwsts\ languages as a consequence of the previous results.
More precisely, we need Lemma~\ref{Lemma:Reversal}, Theorem~\ref{Theorem:RegSep}, and the closure of the regular languages under reversal.
\begin{thm}
    Let $\alang_1, \alang_2\in\langof{\downwsts}$.
    We have $\alang_1\regsep\alang_2$ if and only if $\alang_1\cap\alang_2=\emptyset$.
\end{thm}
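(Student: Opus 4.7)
The only-if direction is immediate, since any separator witnesses disjointness. For the if-direction, my plan is to reduce to the upward-compatible case by applying the reversal construction twice, once to push the problem into $\langof{\upwsts}$ and once to pull the resulting regular separator back.

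More precisely, assume $\alang_1 \cap \alang_2 = \emptyset$ with $\alang_1, \alang_2 \in \langof{\downwsts}$. By Lemma~\ref{Lemma:Reversal}, the reversals $\reverseof{\alang_1}$ and $\reverseof{\alang_2}$ lie in $\langof{\upwsts}$. Reversal commutes with intersection, so
\[
\reverseof{\alang_1} \cap \reverseof{\alang_2} \ =\ \reverseof{\alang_1 \cap \alang_2} \ =\ \emptyset\ .
\]
Thus, Theorem~\ref{Theorem:RegSep} applies and yields a regular language $R$ with $\reverseof{\alang_1} \subseteq R$ and $R \cap \reverseof{\alang_2} = \emptyset$.

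I would then take $\reverseof{R}$ as the desired separator for $\alang_1$ and $\alang_2$. Regularity is preserved under reversal, so $\reverseof{R}$ is again regular. From $\reverseof{\alang_1} \subseteq R$ we get $\alang_1 = \reverseof{\reverseof{\alang_1}} \subseteq \reverseof{R}$, and from $R \cap \reverseof{\alang_2} = \emptyset$ we get $\reverseof{R} \cap \alang_2 = \reverseof{R \cap \reverseof{\alang_2}} = \emptyset$. Hence $\alang_1 \regsep \alang_2$.

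There is no real obstacle here: all the heavy lifting has been done in Theorem~\ref{Theorem:RegSep}, and the translation through reversal uses only the closure properties already collected in Lemma~\ref{Lemma:Reversal} together with the standard fact that regular languages are closed under reversal. The only thing to be careful about is bookkeeping the double reversal $\reverseof{\reverseof{\alang_1}}=\alang_1$ so that the containments line up, which is routine.
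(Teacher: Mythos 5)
Your proof is correct and follows exactly the route the paper takes: reverse both languages via Lemma~\ref{Lemma:Reversal} to land in $\langof{\upwsts}$, apply Theorem~\ref{Theorem:RegSep}, and pull the separator back using closure of the regular languages under reversal. No gaps.
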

\paragraph*{Non-Determinizability of \downwsts}
To show that \downwsts\ cannot be determinized, recall our witness language $\witnesslang$ from Section~\ref{Section:NonDet}.
Surprisingly, we have the following.
\begin{lem}\label{Lemma:WitnessReverse}
$\reverseof{\witnesslang}\in\langof{\downdwsts}$ and $\reverseof{\complementof{\witnesslang}}\in \langof{\updwsts}$.
\end{lem}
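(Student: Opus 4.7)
By Lemma~\ref{Lemma:Complementation}, the two statements are equivalent: a deterministic DWSTS $\adownlts$ accepting $\reverseof{\witnesslang}$ yields, by complementation, a deterministic UWSTS $\complementof{\adownlts}$ accepting $\complementof{\reverseof{\witnesslang}}=\reverseof{\complementof{\witnesslang}}$, and conversely. So it suffices to construct a deterministic DWSTS for $\reverseof{\witnesslang}$, from which the second statement follows by a single application of Lemma~\ref{Lemma:Complementation}.

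The natural starting point is $\reverseof{\anults_{\radoset}}$, which by Lemma~\ref{Lemma:Reversal} is already a DWSTS accepting $\reverseof{\witnesslang}$. However, it is non-deterministic for two reasons. First, its initial set is all of $\radoset$, inherited from the final set of $\anults_{\radoset}$. Second, the reverse of the $\fauxzero$-transition is multi-valued: the predecessors of $(0,c)$ in $\anults_{\radoset}$ are $(0,c+1)$ together with every $(c+1,r)$ with $r\geq c+2$, so the reverse $\fauxzero$-transition out of $(0,c)$ branches over infinitely many targets. The plan is to collapse both sources of non-determinism by canonical, downward-compatible choices, and to verify that the resulting machine still recognizes $\reverseof{\witnesslang}$.

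Concretely, I would (a) adjoin to the Rado WQO a fresh initial element placed appropriately so that the unique run started from it simulates the union of runs of $\reverseof{\anults_{\radoset}}$ from all states of $\radoset$; the downward-closedness of the final set $\radocol{0}$ is what makes such a collapse compatible with acceptance, since in a DWSTS smaller states accept larger languages. Then (b) resolve the $\fauxzero$-branching by selecting, for each $(0,c)$, the canonical successor most faithful to downward simulation; the smallest candidate in the Rado order is the natural pick. The resulting system must then satisfy (1)~its state space is a WQO (the Rado WQO plus a bounded augmentation), (2)~its transitions are single-valued, (3)~it is downward-compatible, and (4)~it accepts exactly $\reverseof{\witnesslang}$.

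The main obstacle is reconciling (3) with (4): the canonical choices must be such that the unique deterministic run lands in $\downcls{\radocol{0}}=\radocol{0}$ precisely when some non-deterministic run of $\reverseof{\anults_{\radoset}}$ does. The Rado WQO is the very structure that, by failing to be an $\omega^2$-WQO, witnesses the non-determinizability of $\witnesslang$ in the upward-compatible setting (Theorem~\ref{Theorem:WSTSneqDWSTS}); here the same structure allows determinizability of the reversal in the downward-compatible setting. Showing that this asymmetry is realised by a correctly designed deterministic DWSTS, and verifying the downward simulation property for the $\fauxzero$-choice in particular, is the technical heart of the proof.
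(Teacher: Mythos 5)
Your proposal follows essentially the same route as the paper: reverse $\anults_{\radoset}$ to obtain a \downwsts\ for $\reverseof{\witnesslang}$, determinize it by retaining only the (unique) minimal element of the initial set and of each reversed transition image — downward compatibility guaranteeing that the discarded states are simulated and hence irrelevant — and derive the second claim by complementation via Lemma~\ref{Lemma:Complementation} together with the fact that reversal commutes with complement. The only cosmetic difference is that you adjoin a fresh initial element, whereas the paper simply observes that $\radoset$ already has the minimum $(0,1)$, so no augmentation of the WQO is needed.
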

For the first claim, recall that the witness language is accepted by the \wsts\ $\anults_{\radoset}$.
The \downwsts\ $\reverseof{\anults_{\radoset}}$ has one minimal
initial state, and transition images $\reverseof{\transitions}(\astate, \aletterp)$ with one minimal element for all $\astate\in\radoset$ and $\aletterp\in\analph$.
Removing simulated states yields a deterministic \downwsts.
The details are in the full version \cite{keskinmeyer2023sep}.
Actually, the idea of removing simulated states reappears in a moment when we study finitely branching \downwsts.
For the second claim, $\complementof{\reverseof{\witnesslang}}\in\langof{\updwsts}$ by Lemma~\ref{Lemma:Complementation}.
But $\complementof{\reverseof{\witnesslang}}=\reverseof{\complementof{\witnesslang}}$, and so $\reverseof{\complementof{\witnesslang}}\in \langof{\updwsts}$.
Behind this is the fact that bijections commute with complements, and reversal is a bijection.

The lemma allows us to prove non-determinizability for \downwsts.
Notably, we do not need a characterization for the languages of deterministic \downwsts.
\begin{thm}\label{Theorem:ComplWitnessLanguage}
    $\complementof{\witnesslang}\in\langof{\downwsts}\setminus\langof{\downdwsts}$ and so $\langof{\downwsts}\neq\langof{\downdwsts}$.
\end{thm}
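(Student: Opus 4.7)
The plan is to derive both claims as direct consequences of Lemmas~\ref{Lemma:Reversal}, \ref{Lemma:Complementation}, and \ref{Lemma:WitnessReverse}, together with Proposition~\ref{Proposition:Witness}. No new constructions are required: all the machinery has been assembled by this point and the theorem should follow essentially for free.

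First, I would argue membership $\complementof{\witnesslang}\in\langof{\downwsts}$. By Lemma~\ref{Lemma:WitnessReverse}, we have $\reverseof{\complementof{\witnesslang}}\in\langof{\updwsts}$, so there is a deterministic \wsts\ $\anults$ with $\langof{\anults}=\reverseof{\complementof{\witnesslang}}$. Applying Lemma~\ref{Lemma:Reversal} to $\anults$ yields a \downwsts\ $\reverseof{\anults}$ with language $\reverseof{\reverseof{\complementof{\witnesslang}}}=\complementof{\witnesslang}$, so $\complementof{\witnesslang}\in\langof{\downwsts}$, as desired.

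Next I would establish non-membership $\complementof{\witnesslang}\notin\langof{\downdwsts}$ by contradiction. Suppose $\complementof{\witnesslang}=\langof{\adownlts}$ for some deterministic \downwsts\ $\adownlts$. By Lemma~\ref{Lemma:Complementation}, its complement $\complementof{\adownlts}$ is a deterministic \wsts\ whose language is $\complementof{\langof{\adownlts}}=\complementof{\complementof{\witnesslang}}=\witnesslang$. This places $\witnesslang\in\langof{\updwsts}$, directly contradicting Proposition~\ref{Proposition:Witness}. Hence $\complementof{\witnesslang}\notin\langof{\downdwsts}$, and together with the first part we obtain $\langof{\downwsts}\neq\langof{\downdwsts}$.

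There is no real obstacle here; the proof is essentially a two-line diagram chase around Figure~\ref{Figure:Relations}. The only thing to be careful about is making sure the two correspondences are applied in the right direction (reversal to move from \downwsts\ to \wsts, complementation to move from \downdwsts\ to \updwsts), and that these bijections commute in the required way. All the substantial work was already done in Lemma~\ref{Lemma:WitnessReverse} (constructing deterministic upward-compatible WSTS for $\reverseof{\complementof{\witnesslang}}$) and Proposition~\ref{Proposition:Witness} (the antichain argument in the Nerode quasi order), so the present theorem is a clean packaging result.
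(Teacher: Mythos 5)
Your proposal is correct and matches the paper's proof essentially verbatim: membership via Lemma~\ref{Lemma:WitnessReverse} followed by Lemma~\ref{Lemma:Reversal}, and non-membership by the same contradiction through Lemma~\ref{Lemma:Complementation} and Proposition~\ref{Proposition:Witness}. No differences worth noting.
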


\begin{proof}
By Lemma~\ref{Lemma:WitnessReverse}, $\reverseof{\complementof{\witnesslang}}\in \langof{\updwsts}$.
Lemma~\ref{Lemma:Reversal} yields $\complementof{\witnesslang}\in\langof{\downwsts}$.
%
%
Towards a contradiction, suppose $\complementof{\witnesslang}\in\langof{\downdwsts}$.
Then $\witnesslang\in\langof{\updwsts}$ by Lemma~\ref{Lemma:Complementation}.
This contradicts Proposition~\ref{Proposition:Witness}.
\end{proof}

In Section~\ref{Section:NonDet}, we have studied restricted classes of WSTS and shown that they admit determinizability results.
We now consider the same restrictions for \downwsts.
We proceed with $\downomegasqwsts$, $\downwsts$ whose states are ordered by an $\omega^{2}$-WQO.
The same argument we used for $\upomegasqwsts$ shows that $\downomegasqwsts$ determinize.
\begin{prop}
    $\langof{\downomegasqwsts}\subseteq\langof{\downdwsts}$
\end{prop}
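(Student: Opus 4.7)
The plan is to mirror the proof of Theorem~\ref{Theorem:PositiveExpressivity} for the $\omegasqwsts$ case via the dual powerset construction. Given a $\downomegasqwsts$ $\adownlts = (\states, \leq, \initstates, \analph, \transitions, \finalstates)$, I would define its determinization as
\[
\detof{\adownlts}\ =\ (\uppowof{\states},\ \supseteq,\ \upcls{\initstates},\ \analph,\ \detof{\transitions},\ \detof{\finalstates})
\]
where the states are the upward-closed subsets of $\states$ ordered by reverse inclusion, the transitions are $\detof{\transitions}(U, a) = \upcls{\transitions(U, a)}$, and the final states are $\detof{\finalstates} = \setcond{U \in \uppowof{\states}}{U \cap \finalstates \neq \emptyset}$.

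First I would verify that $\detof{\adownlts}$ is a deterministic DLTS. Determinism is immediate. The set $\detof{\finalstates}$ is downward closed with respect to $\supseteq$, since enlarging $U$ only preserves an intersection with $\finalstates$. The relation $\supseteq$ on $\uppowof{\states}$ is a simulation in the downward-compatible sense: if $U_1 \supseteq U_1'$, then $\transitions(U_1, a) \supseteq \transitions(U_1', a)$, hence $\upcls{\transitions(U_1, a)} \supseteq \upcls{\transitions(U_1', a)}$.

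Next I would show language preservation. By induction on $|\aword|$, and using the simulation property of $\adownlts$ (specifically, that for $s \geq i$ every successor of $s$ dominates some successor of $i$), one obtains $\detof{\transitions}(\upcls{\initstates}, \aword) = \upcls{\transitions(\initstates, \aword)}$. Since $\finalstates$ is downward closed, the intersection $\upcls{\transitions(\initstates, \aword)} \cap \finalstates$ is non-empty precisely when $\transitions(\initstates, \aword) \cap \finalstates$ is, which yields $\langof{\detof{\adownlts}} = \langof{\adownlts}$.

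Finally, to conclude that $\detof{\adownlts}$ is a $\downwsts$, I would establish that $(\uppowof{\states}, \supseteq)$ is a WQO. Complementation in $\states$ is an order isomorphism between $(\uppowof{\states}, \supseteq)$ and $(\downpowof{\states}, \subseteq)$, and the latter is a WQO by Lemma~\ref{lem:omega2} since $(\states, \leq)$ is an $\omega^2$-WQO. I do not expect a substantial obstacle: the construction is the exact dual of the upward-compatible case in Theorem~\ref{Theorem:PositiveExpressivity}, and the only delicate point is keeping the order directions aligned so that upward closures, downward-closed final states, and the simulation relation all match.
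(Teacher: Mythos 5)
Your proof is correct and takes essentially the same route as the paper, which simply asserts that the naive determinization argument from Theorem~\ref{Theorem:PositiveExpressivity} dualizes: the powerset construction over upward-closed sets under reverse inclusion, with the WQO property obtained from Lemma~\ref{lem:omega2} via the complementation isomorphism between $(\uppowof{\states}, \supseteq)$ and $(\downpowof{\states}, \subseteq)$. Your explicit verification of the simulation direction, the downward-closedness of the final states, and language preservation fills in exactly the details the paper leaves implicit.
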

We move on to finitely branching \downwsts.
In contrast to \upwsts, finite branching does not restrict the expressiveness of \downwsts.
This is because any \downwsts\ can be made into a finitely branching one as follows.
Let $\adownlts=(\states, \leq, \initstates, \analph, \transitions, \finalstates)$ be a \downwsts.
In a WQO, every set $\emptyset\neq X\subseteq\states$ has only finitely many minimal elements.
For each symbol $a\in\analph$, and state $\astate\in\states$, we remove all transitions $(\astate, a, \astatep)$, where $\astatep$ is not minimal in $\apply{\astate}{a}$.
Thanks to downward compatibility, this does not change the language.
\begin{prop}\label{Proposition:DWSTSfinb}
    $\langof{\downfinbranchwsts}=\langof{\downwsts}$
\end{prop}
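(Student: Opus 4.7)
The inclusion $\langof{\downfinbranchwsts}\subseteq\langof{\downwsts}$ is immediate from the definitions, so the task reduces to proving the reverse inclusion. The plan follows the hint given just before the proposition: starting from an arbitrary $\adownlts=(\states, \leq, \initstates, \analph, \transitions, \finalstates)\in\downwsts$, construct a finitely branching sub-system $\adownlts'$ by trimming each transition image (and the initial set) down to its minimal elements. Concretely, I would set $\initstates'=\min\initstates$ and, for every $\astate\in\states$ and $\aletter\in\analph$, define $\transitions'(\astate,\aletter)=\min\transitions(\astate,\aletter)$, keeping $\states$, $\leq$, $\analph$, and $\finalstates$ unchanged.

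First I would observe that $\adownlts'$ is finitely branching: minimal elements of a non-empty subset of a WQO form an antichain, and antichains in a WQO are finite; thus $\initstates'$ and each $\transitions'(\astate,\aletter)$ are finite. Second, I would check that $\adownlts'$ is still a \downwsts, i.e.\ that $\geq$ remains a simulation for $\transitions'$. Given $\astate_1\leq\astatep_1$ and $\astatep_2\in\transitions'(\astatep_1,\aletter)\subseteq\transitions(\astatep_1,\aletter)$, the original downward compatibility supplies some $\astate_2\in\transitions(\astate_1,\aletter)$ with $\astate_2\leq\astatep_2$. Then any minimal element $\astate_2'\leq\astate_2$ of $\transitions(\astate_1,\aletter)$ lies in $\transitions'(\astate_1,\aletter)$ and satisfies $\astate_2'\leq\astatep_2$, as required. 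The final states are unchanged and hence still downward closed.

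Third, I would prove $\langof{\adownlts}=\langof{\adownlts'}$. The inclusion $\supseteq$ holds because every computation of $\adownlts'$ is a computation of $\adownlts$ (transitions were only removed, not added). For the inclusion $\subseteq$, given an accepting run $\astate_0\trans{\aletter_1}\astate_1\trans{\aletter_2}\cdots\trans{\aletter_n}\astate_n$ in $\adownlts$ with $\astate_0\in\initstates$ and $\astate_n\in\finalstates$, I would inductively build a run $\astate_0'\trans{\aletter_1}\astate_1'\trans{\aletter_2}\cdots\trans{\aletter_n}\astate_n'$ in $\adownlts'$ with $\astate_i'\leq\astate_i$: start by picking any minimal $\astate_0'\in\initstates'$ with $\astate_0'\leq\astate_0$, and at each step use the simulation property of $\adownlts'$ established above to produce $\astate_{i+1}'\in\transitions'(\astate_i',\aletter_{i+1})$ with $\astate_{i+1}'\leq\astate_{i+1}$. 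Since $\astate_n'\leq\astate_n\in\finalstates$ and $\finalstates$ is downward closed, $\astate_n'\in\finalstates$, so the constructed run is accepting in $\adownlts'$.

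The only subtle point is the simulation argument for $\transitions'$, which uses both the original downward compatibility of $\adownlts$ and the fact that in a WQO every non-empty set has a minimal element below any given element; once that is in place, everything else is routine. I do not expect a substantial obstacle here.
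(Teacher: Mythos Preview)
Your proposal is correct and follows essentially the same approach as the paper: trim each transition image (and the initial set) to its minimal elements, use the WQO property to conclude finiteness, and invoke downward compatibility together with the downward closure of $\finalstates$ to argue that the language is preserved. In fact you supply more detail than the paper's sketch, in particular the explicit verification that $\geq$ is still a simulation for the trimmed transition relation and the inductive construction of a dominating accepting run in $\adownlts'$.
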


\subsection{Consequences}\label{Subsection:RelationsBetweenLanguages}
We have shown that neither upward- nor downward-compatible \wsts\ can be determinized.
This does not yet rule out the possibility of determinizing an upward-compatible \wsts\ into a downward-compatible one, and vice versa.
In the light of Lemma~\ref{Lemma:Reversal}, we should allow the determinization to reverse the language.
We now show that also this form of reverse-determinization is impossible: there are even deterministic languages that cannot be reverse-determinized.
This is by Lemma~\ref{Lemma:WitnessReverse}, Proposition~\ref{Proposition:Witness}, and Theorem~\ref{Theorem:ComplWitnessLanguage}. 
\begin{lem}\label{Lemma:DetRevDet}
$\reverseof{\witnesslang}\in\langof{\downdwsts}$ but $\witnesslang\notin\langof{\updwsts}$.
Similarly, $\reverseof{\complementof{\witnesslang}}\in\langof{\updwsts}$ but $\complementof{\witnesslang}\notin\langof{\downdwsts}$
\end{lem}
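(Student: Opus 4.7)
The plan is to simply assemble the four claims from results already established in the paper; the hint at the end of the paragraph already points to the three ingredients \ref{Lemma:WitnessReverse}, \ref{Proposition:Witness}, and \ref{Theorem:ComplWitnessLanguage}. There is no new construction to perform, so I would state the four claims one by one and cite the corresponding result.

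For the first pair, $\reverseof{\witnesslang}\in\langof{\downdwsts}$ is exactly the first half of Lemma~\ref{Lemma:WitnessReverse}, while $\witnesslang\notin\langof{\updwsts}$ is Proposition~\ref{Proposition:Witness}. Combining these two immediately shows that even though $\witnesslang$ admits a deterministic upward-compatible WSTS \emph{after} reversal, it does not admit one without reversal; that is, no reverse-determinization trick rescues it on the upward side.

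For the second pair, $\reverseof{\complementof{\witnesslang}}\in\langof{\updwsts}$ is the second half of Lemma~\ref{Lemma:WitnessReverse}. For $\complementof{\witnesslang}\notin\langof{\downdwsts}$, I would reuse the argument from the proof of Theorem~\ref{Theorem:ComplWitnessLanguage}: if $\complementof{\witnesslang}\in\langof{\downdwsts}$, then Lemma~\ref{Lemma:Complementation} applied to a deterministic \downwsts{} for $\complementof{\witnesslang}$ would yield a deterministic \upwsts{} for $\witnesslang$, contradicting Proposition~\ref{Proposition:Witness}.

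Since every ingredient is cited from earlier, there is no real obstacle: the only care needed is to keep the directions straight (which side uses complementation, which side uses reversal) and to note explicitly that this establishes the impossibility of \emph{reverse}-determinization, matching the discussion preceding the lemma. The write-up should therefore be only a few lines, with no further calculation.
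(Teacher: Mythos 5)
Your proposal is correct and matches the paper's own argument, which likewise derives the lemma directly from Lemma~\ref{Lemma:WitnessReverse}, Proposition~\ref{Proposition:Witness}, and Theorem~\ref{Theorem:ComplWitnessLanguage}. Inlining the complementation argument for $\complementof{\witnesslang}\notin\langof{\downdwsts}$ rather than citing Theorem~\ref{Theorem:ComplWitnessLanguage} is only a cosmetic difference.
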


After reversal, both witness languages $\witnesslang$ and $\complementof{\witnesslang}$ can be accepted by a deterministic~WSTS.
When it comes to separability, this means already the results from \cite{CONCUR18}, given here as Theorem~\ref{Theorem:RegSepLimited}, apply to them.
A consequence of Lemma~\ref{Lemma:DetRevDet}, however, is that there are \wsts\ languages that can neither be determinized nor reverse-determinized.
An instance is the language $K=\witnesslang\concat\#\concat\reverseof{\complementof{\witnesslang}}$ with $\#$ a fresh letter.
\begin{lem}\label{Lemma:BidirectionalNonDet}
$K\in\langof{\upwsts}$, $K\not\in\langof{\updwsts}$, and $\reverseof{K}\not\in\langof{\downdwsts}$.
\end{lem}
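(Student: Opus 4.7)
The plan is to prove the three claims in sequence, each reducing to an already-established non-determinizability result for $\witnesslang$ or $\complementof{\witnesslang}$.

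For $K\in\langof{\upwsts}$, I will construct a WSTS for $K$ by sequentially composing the WSTS $\anults_{\radoset}$ for $\witnesslang$ (which exists by Lemma~\ref{Lemma:WSTSMembership}) with a WSTS for $\reverseof{\complementof{\witnesslang}}$ (which exists via Lemma~\ref{Lemma:WitnessReverse}, since $\reverseof{\complementof{\witnesslang}}\in\langof{\updwsts}\subseteq\langof{\upwsts}$). The composite takes the disjoint union of the two state spaces, equipped with the disjoint union of the orders (so states from different components are incomparable). The original $\Sigma$-labeled transitions are kept within each component, fresh $\#$-labeled transitions connect every final state of the first component to every initial state of the second, and the final set is inherited from the second component. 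A disjoint union of WQOs is again a WQO; the final set stays upward closed since cross-component elements are incomparable; and upward compatibility of the $\#$-transitions holds because $F_1$ is upward closed, so any $q'\geq q\in F_1$ can simulate a $\#$-transition of $q$ by moving to the same target in $I_2$.

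For $K\notin\langof{\updwsts}$, I argue by contradiction: given a deterministic WSTS $M=(S,\leq,i,\Sigma\cup\{\#\},\delta,F)$ with $\langof{M}=K$, I extract from $M$ a deterministic WSTS $M'$ for $\witnesslang$, contradicting Proposition~\ref{Proposition:Witness}. The extraction keeps the state space, order, initial state, and the $\Sigma$-labeled transitions of $M$, but replaces the final set by
\[
F'\ =\ \{\,q\in S\ :\ \delta(q,\#\concat v)\in F\text{ for some }v\in\reverseof{\complementof{\witnesslang}}\,\}.
\]
Upward closure of $F'$ uses determinism together with upward compatibility: if $q\leq q'$ and $\delta(q,\#\concat v)\in F$, then $\delta(q,\#\concat v)\leq\delta(q',\#\concat v)$ by induction on $|\#\concat v|$, and upward closure of $F$ puts $\delta(q',\#\concat v)$ in $F$. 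Since $\#$ is fresh, the factorization of a word in $K$ around $\#$ is unique, so $w\in\langof{M'}$ iff $w\concat\#\concat v\in K$ for some $v\in\reverseof{\complementof{\witnesslang}}$, iff $w\in\witnesslang$; the non-emptiness of $\reverseof{\complementof{\witnesslang}}$ (witnessed e.g.\ by $\closedyck$) is what makes the ``iff'' work.

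For $\reverseof{K}\notin\langof{\downdwsts}$, I use the factorization $\reverseof{K}=\complementof{\witnesslang}\concat\#\concat\reverseof{\witnesslang}$ (reversal flips concatenation) and mirror the previous argument. From a hypothetical deterministic $\downwsts$ $M$ for $\reverseof{K}$ I extract a deterministic $\downwsts$ $M'$ for $\complementof{\witnesslang}$, with analogous final set $F'=\{q:\delta(q,\#\concat v)\in F\text{ for some }v\in\reverseof{\witnesslang}\}$; downward closure of $F'$ now follows because in a $\downwsts$ it is $\geq$ that is the simulation and $F$ is downward closed, which combined with determinism gives: if $q'\leq q\in F'$, then $q'$ simulates $\delta(q,\#\concat v)$ by a smaller state in $F$. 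Non-emptiness of $\reverseof{\witnesslang}$ is clear (it contains $\varepsilon$). This contradicts Theorem~\ref{Theorem:ComplWitnessLanguage}. The single recurring technical point, which is the main obstacle, is verifying the closure direction of the synthesized final set $F'$ in the two extraction steps; this is the only place where determinism is genuinely needed, as it lets us speak of the unique state reached after $\#\concat v$ and transport closure of $F$ through compatibility.
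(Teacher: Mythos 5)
Your proof is correct and follows the route the paper intends: a standard concatenation construction over the fresh letter $\#$ for membership in $\langof{\upwsts}$, and, for the two negative claims, a quotient-style extraction of a deterministic (up- resp.\ down-compatible) \wsts\ for $\witnesslang$ resp.\ $\complementof{\witnesslang}$ from a hypothetical one for $K$ resp.\ $\reverseof{K}$, contradicting Proposition~\ref{Proposition:Witness} resp.\ Theorem~\ref{Theorem:ComplWitnessLanguage}. The paper defers the details to the full version, but your treatment of the only delicate points~--~upward (resp.\ downward) closure of the synthesized final set via determinism plus compatibility, the unique factorization around the fresh $\#$, and the non-emptiness of the appended languages~--~is exactly what is needed.
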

When considering disjoint $\alangp_1, \alangp_2\in\langof{\wsts}$ that can neither be determinized nor reverse-determinized, Theorem~\ref{Theorem:RegSepLimited}, which restricts one \WSTS to be deterministic, does not apply.
Theorem~\ref{Theorem:RegSep} is stronger and yields $\alangp_1\regsep \alangp_2$.
The situation is similar for downward-compatible WSTS.

\section{Conclusion and Future Work}\label{Section:Conclusion}
We have shown that disjoint WSTS languages are always separated by a regular language.
We have also shown that deterministic WSTS accept a strictly weaker class of languages than their non-deterministic counterparts.
We complemented these result by showing that determinization is possible under mild assumptions ($\omegasqwsts$ and finitely branching $\upwsts$), and extending the results to downward-compatible \wsts.

As for future work, it would be interesting to develop new verification algorithms based on regular separability results.
Consider the safety verification problem for concurrent programs, formulated as (non-)coverability in Petri nets, as an example. 
Our separability result suggests the following compositional approach to proving safety: we decompose the given Petri net into two components (whose product is the original net), over-approximate the components by regular languages, and show the disjointness of these regular languages.
The benefit of our separability result is that it should give a completeness guarantee, if the over-approximation is done well or iteratively refined.

On the theoretical side, it would be desirable to have a better understanding of regular separability: when is separability guaranteed to hold, when is the algorithmic problem of checking separability decidable, and what is the size of separators?
Behind our separability result is a close link between regular separability and finitely-represented inductive invariants.
Does such a link exist in other separability results as well, in particular the one for VASS reachability languages~\cite{KM24}?
Is there a link between invariants and the basic separators introduced in~\cite{CZ20}? 
Can our invariant closure be understood in a topological way in order to generalize the result~\cite{GL07,GL10}.
As a concrete generalization, does our result continue to hold for well-behaved transition systems~\cite{BlondinFM16}, ULTS that only satisfy the finite antichain property?
Also interesting would be a generalization to weighted languages~\cite{Weighted}.
There are first studies of separability for languages of infinite words~\cite{BMZ23,BKMZ24} and trees~\cite{Goubault-Larrecq16}, and we believe these settings deserve more attention. 

It would also be interesting to study instead of regular separability the separability by subregular languages. 
Prominent classes of subregular language are the languages definable in first-order logic, in first-order logic with two variables, the locally testable languages, and the locally threshold-testable languages. 
For all of them, the separability problem is understood in the case that the input languages are regular~\cite{PlaceRZ13mfcs,PlaceZ14lics,LTSepPvRZ13}.
However, for the languages of infinite state systems, like the $\Z$-VASS languages, $1$-VASS languages, VASS coverability languages, or VASS reachability languages,
the corresponding subregular separability problems seem to be highly challenging (though their regular separability counterparts are decidable~\cite{CCLP17,CL17,CONCUR18,CZ20,CONCUR23,KM24}).
The reason is that these problems call for a combination of techniques from infinite state systems and algebraic language theory.


\bibliography{bib}
\bibliographystyle{alphaurl}

\end{document}